\definecolor{bluegray}{RGB}{160,200,200}
\definecolor{Blue Sapphire}{HTML}{005f73} 
\definecolor{Gamboge}{HTML}{ee9b00}
\definecolor{Ruby Red}{HTML}{9b2226}
\newif\ifappendix
\newrobustcmd\introinrestatable[1]{%
	\ifappendix%
	\kl{#1}%
	\else%
	\intro{#1}%
	\fi%
}
\definecolor{light-gray}{gray}{0.75}
\definecolor{darkgreen}{RGB}{0,200,0}
\definecolor{lightpurple}{RGB}{220,0,220}
\tikzset{AUT style/.style={>=angle 60,initial text= ,every edge/.append style={thick},every state/.style={thick,minimum size=15,inner sep=0.5}}}
\renewcommand{\d}{\delta}
\renewcommand{\S}{\Sigma}
\renewcommand{\epsilon}{\varepsilon}
\renewcommand{\phi}{\varphi}
\newcommand{\pow}[1]{2^{#1}}
\newcommand{\nats}{\mathbb{N}}
\newcommand{\size}[1]{|#1|}
\newcommand{\set}[1]{\{ #1 \}}
\newcommand{\tss}{\Ss}
\newcommand{\run}{w}
\newcommand{\get}[1]{\mathtt{get}_{#1}}
\newcommand{\rel}[1]{\mathtt{rel}_{#1}}
\newcommand{\rec}{\text{\sc{rec}}}
\newcommand{\send}{\text{\sc{send}}}
\newcommand{\acqs}{\text{\sc{acqs}}}
\newcommand{\rels}{\text{\sc{rels}}}
\newcommand{\start}{\text{\sc{start}}}
\newcommand{\NOP}{\text{\sc{nop}}}
\newcommand{\edge}[1]{\xrightarrow{#1}}
\newcommand{\greenedge}[1]{\xhookrightarrow{#1}}
\newcommand{\rededge}[1]{\xmapsto{#1}}
\newcommand{\PP}{\mathbb{P}}
\newcommand{\Pat}{\mathbf{pat}}
\knowledgenewrobustcmd{\BTPP}{BT_{\PP}}
\knowledgenewrobustcmd{\strongedge}[1]{\cmdkl{\greenedge{#1}}}
\knowledgenewrobustcmd{\weakedge}[1]{\cmdkl{\rededge{#1}}}
\knowledgenewrobustcmd{\spat}{\cmdkl{\Longrightarrow}}
\knowledgenewrobustcmd{\wpat}{\cmdkl{\dashrightarrow}}
\knowledgenewrobustcmd{\spatinf}{\cmdkl{\Longrightarrow}}
\knowledgenewrobustcmd{\wpatinf}{\cmdkl{\dashrightarrow}}
\knowledgenewrobustcmd{\Gp}{\cmdkl{G_{\PP}}}
\knowledgenewrobustcmd{\Gu}{\cmdkl{G_{u}}}
\newcommand{\Ginf}{\kl{G_{\mathit{Inf}}}}
\knowledgenewrobustcmd{\BTu}{\cmdkl{BT_{u}}}
\knowledgenewrobustcmd{\FTu}{\cmdkl{FT_{u}}}
\knowledgenewrobustcmd{\PZ}{\cmdkl{Proc_Z}}
\knowledgenewrobustcmd{\pad}[1]{#1^{\cmdkl{\dummy}}}
\knowledgenewrobustcmd{\systemchoices}{\cmdkl{\mathbb{SC}}}
\knowledgenewrobustcmd{\Gblockchain}[2]{\cmdkl{\mathcal{G}}_{#1,#2}}
\newcommand{\dashover}[2][\mathop]{#1{\mathpalette\df@over{{\dashfill}{#2}}}}
\newcommand{\fillover}[2][\mathop]{#1{\mathpalette\df@over{{\solidfill}{#2}}}}
\newcommand{\df@over}[2]{\df@@over#1#2}
\newcommand\df@@over[3]{%
	\vbox{
		\offinterlineskip
		\ialign{##\cr
			#2{#1}\cr
			\noalign{\kern1pt}
			$\m@th#1#3$\cr
		}
	}%
}
\newcommand{\dashfill}[1]{%
	\kern-.5pt
	\xleaders\hbox{\kern.5pt\vrule height.4pt width \dash@width{#1}\kern.5pt}\hfill
	\kern-.5pt
}
\newcommand{\dash@width}[1]{%
	\ifx#1\displaystyle
	2pt
	\else
	\ifx#1\textstyle
	1.5pt
	\else
	\ifx#1\scriptstyle
	1.25pt
	\else
	\ifx#1\scriptscriptstyle
	1pt
	\fi
	\fi
	\fi
	\fi
}
\newcommand{\solidfill}[1]{\leaders\hrule\hfill}
\knowledgenewrobustcmd{\Owns}{\cmdkl{\text{\sc{Owns}}}}
\newcommand{\Blocks}{\kl{\text{\sc{Blocks}}}}
\knowledgenewrobustcmd{\Inf}{\cmdkl{\text{\sc{Inf}}}}
\newcommand{\OwnsN}{\text{\sc{Owns}}^N}
\newcommand{\InfN}{\text{\sc{Inf}}^N}
\newcommand{\varinf}{\text{inf}}
\newcommand{\ownfunc}{\emph{owns}}
\newcommand{\aut}{\mathcal{A}}
\newcommand{\objaut}[1]{\mathcal{B}_{#1}}
\knowledgenewrobustcmd{\lss}{\emph{"LSS"}}
\newcommand{\lang}[1]{\mathcal{L}(#1)}
\knowledgenewrobustcmd{\trace}[1]{\cmdkl{tr}(#1)}
\newcommand{\init}{\mathit{init}}
\newcommand\restrict[2]{{
		\left.\kern-\nulldelimiterspace
		#1
		\vphantom{\big|}
		\right|_{#2}
}}
\newcommand{\es}{\emptyset}
\newcommand{\act}[1]{\xlongrightarrow{#1}}
\newcommand{\nop}{\mathit{nop}}
\newcommand{\dummy}{\square}
\renewcommand{\d}{\delta}
\newcommand{\Proc}{\mathit{Proc}}
\newcommand{\Ss}{\mathcal{S}}
\newcommand{\PSPACE}{\text{\sc Pspace}}
\newcommand{\PTIME}{\text{\sc Ptime}}
\newcommand{\NP}{\text{\sc NP}}
\title{Model-checking lock-sharing systems against regular constraints}
\titlerunning{Model-checking LSS}
\author{Corto Mascle}
\institute{LaBRI, Université de Bordeaux\\
	\email{corto.mascle@labri.fr}\\
	\url{https://corto-mascle.github.io/}}
\begin{document}

\appendixfalse

\maketitle

\begin{abstract}
	We study the verification of distributed systems where processes are finite
	automata with access to a shared pool of locks. We consider objectives that
	are boolean combinations of local regular constraints.
	We show that the problem, PSPACE-complete in general, falls in NP with the right assumptions on the system. We use restrictions on the number of locks a process can access and the order in which locks can be released. 
	We provide tight complexity bounds, as well as a subcase of interest that can be solved in PTIME.
	
	\keywords{Distributed systems \and Locks \and Model-checking}
\end{abstract}

\section{Introduction}

Concurrent programs often prove more challenging to verify than sequential ones, as the state space explodes easily, unless processes follow very closely what the others are doing or have completely decorrelated executions. Verification of such programs can be traced back to the work of Taylor~\cite{Taylor1983}, and has been the subject of a variety of approaches, which reflect the numerous possible modelisations of distributed systems.
Looking for an error trace is typically \PSPACE-hard when processes are
finite-state systems, i.e., the cost of exploring an exponential number of
configurations.
The reason is that most models of concurrent programs, be it with
rendez-vous, message passing, or shared variables, can encode the problem of
deciding whether a set of deterministic finite automata have a common accepted
word. This is the case for instance for the classical model of Zielonka automata~\cite{Zielonka87}.

We study lock-sharing systems (LSS for short), a simple model for concurrent programs using mutexes. Processes have access to a pool of locks. Each process is represented by an automaton whose transitions acquire and release locks.
Locks restrict the behaviours of the system, as a process cannot take a lock already held by another process.
Similar systems were considered by Gupta, Kahlon and Ivanci{\'{c}} in \cite{KahIvaGup05}, with only two processes, each being a pushdown system.
They proved that the verification of regular constraints relating local runs was undecidable, and provided a fine-grained analysis of the decidable cases in that paper and later ones \cite{Kahlon09, KahGup06lics}. They also showed that detecting deadlocks is decidable under some restrictions.
This exact approach contrasts with other ones, such as in
\cite{BouajjaniET2003} or \cite{QadeerR2005}, which tackle more general systems but use
approximations of the set of possible runs. Chapter 18 of~\cite{ClarkeHVB2018} gives an overview of those works.

We consider the verification problem for the model studied in~\cite{GimMMW22}. That paper focused of synthesizing local strategies to avoid global deadlocks. 
Here we consider a much larger family of properties: boolean combinations of
local regular properties.
Unlike~\cite{GimMMW22} we do not discuss the synthesis problem, but the model-checking problem.

In this work we present an analysis of restrictions on lock-sharing systems that suffice in order to obtain more tractable complexities than \PSPACE. We mainly focus on two restrictions, 2LSS and nested LSS. The first one demands that each process only accesses two different locks, the second one that each process takes and releases locks as if they were stored in a stack: they can only release the lock taken latest.
Several works already showed the interest of the nested restriction to obtain
tractable verification problems, see for instance
\cite{Brotherson21,KahIvaGup05}. The contribution of~\cite{Brotherson21} consists in an \NP~algorithm (and an implementation) for detecting deadlocks (more specifically, configurations where some subset of processes is blocked as they all try to acquire locks held by other processes of that subset) in concurrent programs. They use a syntax for programs that can be translated to what we call sound nested exclusive LSS in this paper.  
As for the systems with two locks per process, they can already exhibit a variety of behaviours. Dijkstra's famous dining philosophers problem matches this constraint.
These restrictions  have a common point: local runs can be summarised in short descriptions, called \emph{patterns}. Patterns contain enough information to determine whether local runs can be interleaved to form a global run. Some form of patterns for finite runs of nested systems, called acquisition history, was already considered in~\cite{KahIvaGup05}, but was only focused on systems with two processes and with no considerations of complexity. In \cite{GimMMW22}  patterns are defined on finite runs and used to compute local strategies to prevent deadlocks in LSS. 
We show that we can extend the techniques to handle much larger classes of specifications, in the framework of verification.

In order to do this, we extend the notion of patterns to infinite runs and provide necessary and sufficient conditions on patterns to represent runs that can be interleaved into a (fair) global run.
This allows us to verify the system against local specifications by first
guessing for each process a pattern, checking compatibility of these patterns
and then checking individually in each process the existence of a bad run with
the corresponding pattern.
Thus we avoid exploring the product of all processes. 

This approach yields \NP~algorithms for the verification of (boolean combinations of) local specifications for 2LSS and nested LSS.
With an additional constraint, called exclusiveness, we even obtain a \PTIME~algorithm for some specific objective called process deadlock, which requires one given process to be forever unable to advance after some point in the run.

We provide matching lower bounds for these results. 
In general our problem is~\PSPACE-complete, even with a bounded number of locks per process. 
It is \NP-complete in the nested case, even with exclusiveness, for some weak objectives (the hardness proof solves a question left open by the authors in~\cite{Brotherson21}), and a bounded number of locks per process.
As for 2LSS, the problem is \NP-complete as well.
Furthermore, those lower bounds make little use of the specification, proving that the complexities are in some sense inherent to the systems. 

\paragraph{Overview}
In Section~\ref{sec:def} we recall some definitions and give some intuition about the global framework. Then in Section~\ref{sec:patterns} we generalise the notion of patterns that was used in \cite{GimMMW22} (Definition~14), after which we present the results that we are able to obtain through this technique:
In Section~\ref{sec:part-deadlock} we discuss a particular specification, for which the problem can be solved in \PTIME~for exclusive systems, and provide an \NP~lower bound when we do not assume exclusiveness. 
In Section~\ref{sec:reg-objectives} we prove the \PSPACE-completeness of the general problem and contrast it with its \NP-completeness in the 2LSS case.
Finally, in Section~\ref{sec:nest}, we prove that the verification of nested systems is \NP-complete, with a very robust lower bound, that survives exclusiveness, weak objectives, and even a bounded number of locks.

\section{Definitions}
\label{sec:def}

First we recall the definition of a lock-sharing system

\begin{definition}[Lock-sharing system]
	Let $\Proc$ be a finite set of processes.
	
	A ""lock-sharing system"" ("LSS" for short) $\Ss=((\aut_p)_{p\in\Proc},T, op)$ is given by
	a family of transition systems, a set $T$ of locks, and a function $op$ described below.
	
	Each transition system $\aut_p$ is given as a tuple $(S_p, \S_p, \delta_p, init_p)$ with $S_p$ a finite set of states, $init_p$ the initial state, $\S_p$ a finite alphabet and $\delta_p : S_p \times \S_p \to S_p$ a \textbf{partial} function.
	We require that the $\S_p$ are pairwise disjoint, and define $\S = \bigcup_{p \in \Proc} \S_p$.
	
	Consider a set of operations $Op(T) = \set{\get{t}, \rel{t}, nop \mid t \in T}$. The function $op : \Sigma \to Op(T)$ associates with each letter of $\S$
	an operation on locks.
	For all $p \in \Proc$ we define $T_p = \set{t \in T \mid \exists a \in \S_p, op(a) = \get{t}}$ the set of locks $p$ may acquire. 
	
	\AP A ""2LSS"" is an LSS where every $T_p$ has two elements.
\end{definition}

\begin{remark}
	In \cite{GimMMW22}, the transition functions $\delta_p$ output a pair $(s, op)$ with a state and an operation. Here we will assume without loss of generality that the operation of a transition is 
	determined by its action; we can use $\Sigma \times Op(T)$ as our alphabet 
	instead of just $\Sigma$ and thus explicitly describe the sequence of operations
	in the actions. 
\end{remark}

We fix an "LSS" $\Ss=((\aut_p)_{p\in\Proc},T, op)$ for the rest of this section.

A \emph{local configuration} of process $p$ is a state from $S_p$ together
with the locks $p$ currently owns: $(s,B)\in S_p\times 2^{T_p}$. 
The initial configuration of $p$ is $(\init_p,\es)$, namely the initial state with no locks.
A transition between configurations $(s, B) \xrightarrow{a} (s',
B')$ exists when $\d_p(s,a)=s'$ and one of the following holds:
\begin{itemize}
	\item $op(a) = \nop$ and  $B = B'$;
	
	\item $op(a)=\get{t}$,  $t \notin B$ and $B' = B\cup \set{t}$;
	
	\item $op(a)=\rel{t}$, $t \in B$, and $B'= B \setminus
	\set{t}$.
\end{itemize}
A \emph{local run} $a_1 a_2 \cdots$ of $\aut_p$ is defined as a finite or infinite sequence over
$\S_p$ such that there exists a sequence of local configurations
$(\init_p,\es)=(s_0,B_0)\act{a_1}_p (s_1,B_1)\act{a_2}_p\cdots$ (we will specify explicitly when we talk about local runs that do not start in the initial configuration).

\AP We say that a finite local run $\run_p = a_1\cdots a_n$ is ""neutral"" if for all $1 \leq i \leq n$ such that $op(a_i) = \get{t}$ for some $t \in T$, there exists $j > i$ such that $op(a_j) = \rel{t}$. Equivalently, the configuration obtained after executing $\run_p$ is in $S_p \times \set{\es}$.

\AP A \intro{global configuration} is a tuple of local configurations
$C=(s_p, B_p)_{p\in\Proc}$ provided the sets $B_p$  are pairwise
disjoint:
$B_{p}\cap B_{q}=\es$ for $p\not=q$. 
This is because \textbf{a lock can be taken by at most one process at a time}. 
The initial configuration is the tuple of initial configurations of
all processes.

Runs of such systems are \emph{asynchronous}, with transitions between two
consecutive configurations done by a single process:
$C\act{(p,a)}C'$ if $(s_p,B_p)\act{a}_p(s'_p,B'_p)$ and
$(s_q,B_q)=(s'_q,B'_q)$ for every $q\not=p$.
A global run is a sequence of transitions between global
configurations.
Since our systems are deterministic we usually identify a global run with the sequence
of transition labels. 
A global run $w$ \emph{determines a local run} of each process: $w|_p$ is the subsequence
of $p$'s actions in $w$. We also say that $w|_p$ is the projection of $w$ on $p$.

In what follows we will assume that each process keeps track in its state of
the set of locks it owns.
Note that this assumption does not compromise the complexity results provided there
is a bound on the number of locks a process can access: the number of states is
then multiplied by a constant factor.

\begin{definition}
	\label{def:soundness}
	A process of an "LSS" is ""sound"" if its transition system $\aut_p$ keeps track of the set of locks it has in its states. 
	Formally, let $\aut_p = (S_p, \delta_p, init_p)$, $p$ is sound if there exists a function $\ownfunc_p : S_p \to \pow{T_p}$ such that:
	\begin{itemize}
		\item for all local runs $\run = a_1 a_2 \cdots a_n$ ending in a state $s$,
		we have $(init_p, \es) \xrightarrow{a_1} \cdots \xrightarrow{a_n} (s, \ownfunc_p(s))$.
		
		\item for all states $s \in S_p$, there is no outgoing transition of $s$ that acquires a lock in $\ownfunc_p(s)$ or releases a lock that is not in $\ownfunc_p(s)$.
	\end{itemize}
	 
	An "LSS" is sound if all its processes are.
\end{definition}

Note that this property can be easily checked on a given "LSS": it suffices to set $\ownfunc(init_p)$ to $\es$, apply a DFS to compute candidates for $\ownfunc(s)$ for all states, and then check consistency of $\ownfunc$ with respect to each transition.

%
%
%


We want to be able to define deadlocks in terms of languages of runs. To this end, we have to restrict our attention to process-fair runs, in which every process is either blocked after some point 
or executes an action infinitely many times.
This is often called strong fairness in the literature. This way if a process stops doing anything after some point in a run, it means it is blocked.

\begin{definition}
	 A run $\run$ is called ""process-fair"" if for all $p \in \Proc$, either $\run$ contains infinitely many actions of $\S_p$, or there is a point after which no action of $p$ can ever be executed at any moment in the run.
	 
	 \AP We say that a process-fair run yields a \intro{global deadlock} if it is finite,
	 i.e., at some point there are no actions that can be executed in any of the
	 processes, and the system cannot advance any more. Note that a process-fair run is finite if and only if it yields a global deadlock.
	 
	 \AP We say that a process-fair run yields a \intro{partial deadlock} if its projection on one of the $\S_p$ is finite, i.e., after some point one of the processes is never able to execute any action.
	
\end{definition}

In all that follows we will have to work with finite and infinite words 
simultaneously as "LSS" executions may be finite or infinite. 
We will use a dummy letter $\dummy$, and finite runs will be padded with an
infinite suffix $\dummy^\omega$ so that we can express objectives as languages
of infinite words.

\AP From now on we will write $\intro*\pad{u}$ for the padded version of a word $u$, i.e.,

\[\pad{u} = 
\begin{cases}
	u &\text{ if } u \text{ is infinite}\\
	u\dummy^\omega &\text{ if } u \text{ is finite.}
\end{cases}
\]



%
%
%

We will now define the set of properties we want to verify.
This class of objectives is inspired by Emerson-Lei automata, introduced in~\cite{EmersonL1987}, which we will use for several proofs of upper bounds. 
Note that we will use non-deterministic Emerson-Lei automata, while our objectives are expressed using deterministic automata.

\begin{definition}
	An ""Emerson-Lei automaton"" ("ELA" for short) is a tuple $\aut = (S, \Sigma, \Delta, init, \phi)$ 
	with $S$ a finite set of states, 
	$\Sigma$ a finite alphabet, 
	$\Delta : S\times \Sigma \times S$ a transition function, 
	$init \in S$ the initial state and 
	$\phi$ a boolean formula over variables $ \set{\varinf_s \mid s \in S}$.
	
	Such an automaton recognises a language $\lang{\aut} \subseteq \Sigma^\omega$.
	An infinite word $w$ is accepted if there is a run of $w$ in $\aut$ such that  $\phi$ is satisfied by the valuation
	evaluating $\varinf_s$ to $\top$ if and only if $s$ appears infinitely often
	in the run.
\end{definition}

Our objectives are defined in a similar fashion, but with one automaton per process and a single formula expressing a condition on which states (among the ones of all automata) are seen infinitely often.

\begin{definition}
	\label{def:reg-obj}
	A \intro{regular objective} is a pair $((\objaut{p})_{p \in \Proc},
	\phi)$ such that each $\objaut{p}$ is a deterministic automaton with
	a set of states $S_{\objaut{p}}$ over the alphabet $\Sigma_p \cup \set{\dummy}$, and
	$\phi$ is a boolean formula over the set of variables $\set{\varinf_{p,s} \mid p
	\in \Proc, s \in S_{\objaut{p}}}$.
	
	\AP Let $\run$ be a "\textbf{process-fair}" run, and for each $p$ let $\run_p$ be its projection on $\S_p$.
	We say that $\run$ satisfies a regular objective $((\objaut{p})_{p \in \Proc},
	\phi)$ if $\phi$ is satisfied by the valuation evaluating $\varinf_{p,s}$ to
	$\top$ if and only if the unique run of $\lang{\objaut{p}}$ on $\pad{\run_p}$ 
	goes through $s$ infinitely many times.
\end{definition}

We argue that these specifications are quite expressive and at the
same time allow us to stay in reasonably low complexity classes. 

\paragraph*{Regular objectives are expressive.} They can  express
properties such as reachability (with local or global
configurations) or safety,
as well as properties related to deadlocks, such as "partial deadlock" or "global
deadlock": As we focus on "process-fair" runs, a local projection of a run is finite if and only if the corresponding process is blocked at some point and has no available action for the rest of the run.
Hence, we can express for instance a "global deadlock" with an objective requiring the local run of every process $p$ to be finite.

Moreover, the flexibility of boolean formulas allows us to relate configurations between processes: say each process has to decide between $0$ and $1$, then we can express agreement by demanding that they all select $0$ or all $1$.

Regular objectives are furthermore closed under boolean combinations. 
They can be complemented by simply taking the negation of the formula $\phi$, and intersected in polynomial time by taking the product automaton for each process and adapting the formula.

%
%

\paragraph*{Complexity blows up quickly with more expressive objectives} 

"Regular objectives" only restrict the shape of local runs without any requirement on their interleaving. Restrictions on interleavings would lead to \PSPACE-hardness very quickly. 
As we will see in Section~\ref{sec:reg-objectives}, as soon as we can have a system where processes are required to synchronize in some way, we also obtain \PSPACE-hardness.

Objectives that are sensitive to interleavings of local runs can be used to test the emptiness of the intersection of
languages of $n$ DFAs, even without any locks. 
We can take $\Proc = \set{p_1, \ldots, p_n}$ and $\S_p = \set{a_p, b_p, c_p}$
for all $p$ and ask for a global run in $(a_{p_1}\cdots a_{p_n} + b_{p_1}\cdots
b_{p_n})^*(c_{p_1}\cdots c_{p_n})^\omega$ in the LSS constructed from those DFAs.
%

%
%
%
%

%

In this work we study the problem of finding a run satisfying some given "regular
objective".

\begin{definition}
	We define the ""regular verification problem"" as:
	
	Input: a "sound" "LSS" $\tss$ and a "regular objective" $RO = ((\objaut{p})_{p \in \Proc}, \phi)$
	
	Output: Is there a "process-fair" run of $\tss$ satisfying $RO$?
\end{definition}

Note that we define the problem existentially: we are looking for a bad run, hence the given objective should express the set of runs that we want to avoid. 
We use this formulation as it simplifies a bit our proofs, and as "regular
objectives" are easy to complement.

We also define the problem in the particular case of "process deadlocks": 
Here, we ask whether there is a run in which some given process $p$ is eventually blocked forever. 
We define it as our standard example of a ``simple'' objective. We will show
that we can decide it in \PTIME~in a particular case, and we will use it for
complexity lower bounds, thus showing  that those complexities are already
inherent to the systems.

\begin{definition}
	We define the ""process deadlock problem"" as:
	
	Input: a "sound" "LSS" $\tss$ and a process $p$.
	
	Output: Is there a  "process-fair" run of $\tss$ whose projection on $p$ is finite?
%
\end{definition}

As our last definition in this part, we introduce exclusive "LSS", in which a process that can acquire a lock cannot do any other operation from
the same state.

\begin{definition}[\intro{Exclusive}]
	A process is exclusive if its transition system $\aut_p$ is such that for all states $s$, if $s$ has an outgoing transition acquiring some lock $t$, then all other outgoing transitions acquire that same lock $t$. 
	An "LSS" is exclusive if all its processes are.
\end{definition}

%
%
%
%
%
%

\section{Patterns for 2LSS}
\label{sec:patterns}

In this section we define patterns for "2LSS". These are summaries of bounded size of 
the operations executed during a run, which contain enough information to 
tell if local runs can be combined into a global one. Let us first define a couple of useful functions over local runs.

%
%
%
%
%
%

\begin{definition}
	Given a finite local run $\run_p = a_0 \cdots a_n$ of a process
    $p$,   
    we define $\intro*\Owns(\run_p)$ as the set of locks $p$ holds after executing $op(a_0) \cdots op(a_n)$.
    
    We extend the function $\Owns$ to infinite runs by setting $\Owns(a_1 a_2 \cdots)$ as the set of locks kept indefinitely by $p$ after some point. Formally, we define 
     $\Owns(a_1 a_2 \cdots) = \bigcup_{i \in \nats} \bigcap_{j > i} \Owns(a_0\cdots a_j)$.
    
    The \intro{trace} of an infinite run $\run_p = a_1 a_2 \cdots$, denoted by $\intro*\trace{\run_p}$, is the infinite word $A_0 A_1
        \cdots \in (2^{T})^\omega$ with $A_i = \Owns(a_1 \cdots a_i)$ the set of locks held by $p$ after executing the first $i$ actions of $\run_p$.
        
        We also define $\intro*\Inf(\run)$ as the set of sets of locks that $p$ owns infinitely often when executing $\run_p$:
        \[\Inf_p(a_1 a_2 \cdots) = \set{A \subseteq T_p \mid A=\Owns(a_1 \cdots a_i) \text{ for infinitely many } i}\]
\end{definition}

We start with patterns of \emph{finite runs} as in~\cite{GimMMW22}.
We redefine them here with a formalism adapted to our purpose.

\begin{definition}[\intro{Finitary patterns}]
	\label{def:finpat}
	\label{def:patterns2locks}
	Finitary patterns are defined for finite local runs of a
    "2LSS".
    Let $p$ be a process, $T_p = \set{t_1, t_2}$ its locks.
    Let $\run = a_1 a_2 \cdots a_n$ be a finite local run of $p$.	
	The pattern of $\run$ is defined as the set $\Owns(\run)$ along with an information on its strength:
	
	\begin{itemize}
		\item If $\Owns(\run) = \set{t_1}$ (resp.~$\set{t_2}$) and the last
		operation on locks in $\run$ is $\rel{t_2}$ (resp.~$\rel{t_1}$) then we say that $\run$ has a \intro{strong pattern}, denoted as $\fillover{\Owns(\run)}$
		
		\item Otherwise we say that $\run$ has a \intro{weak pattern}, denoted $\dashover{\Owns(\run)}$  
	\end{itemize}
\end{definition}

In \cite{GimMMW22} the "global deadlock" problem was studied, so only patterns of finite runs were of interest. 
We define patterns of infinite runs as we have to
account for the runs of processes that do not get blocked.

\begin{definition}[""Infinitary patterns""]
	\label{def:infpat}
	Let $\run$ be an infinite local run of a process $p$ accessing
    locks $T_p = \set{t_1, t_2}$. 
    Let $\trace{\run} = A_0 A_1 \cdots \in
    (2^{T_p})^\omega$.
    The pattern of $\run$ is given by $\Inf(\run)$ along with an information on its strength:    
	\begin{itemize}
		\item \AP We say that $w$ has a \intro(inf){strong pattern} $\fillover{\Inf(\run)}$ when $\trace{\run} \in
		(2^{\set{t_1,t_2}})^*\set{t_1,t_2}\set{t_1}^\omega$ (the process has both locks at some point, releases one of them and does not do any other operation on locks afterwards).
		Observe that in this case $\Inf(\run) = \set{\set{t_1}}$.
		
		\item Otherwise, $\run$ has the \intro(inf){weak pattern} $\dashover{\Inf(\run)}$. 
	\end{itemize}
	
	\AP We say that $\run$ is ""switching"" if $\emptyset \notin \Inf(\run)$ and $\Owns(\run) = \emptyset$.
	This means that eventually, 
	$p$ never releases both locks, but releases each one infinitely often.
	In particular, $T_p \in \Inf(\run)$.
\end{definition}

\begin{figure}[ht]
	\begin{tikzpicture}[node distance=1.8cm,auto,>= triangle
	45,scale=.6]
	\tikzstyle{initial}= [initial by arrow,initial text=,initial
	distance=.7cm, initial where= left]
	\tikzstyle{accepting}= [accepting by arrow,accepting text=,accepting
	distance=.7cm,accepting where =right]
	
	\node[state, minimum size=15pt,initial] (s1) at (0,0) {1};
	\node[state, minimum size=15pt] (s2) at (3,0) {2};
	\node[state, minimum size=15pt] (s3) at (6,0) {3};
	\node[state, minimum size=15pt] (s4) at (10,-1) {4};
	\node[state, minimum size=15pt] (s5) at (14,0) {5};
	\node[state, minimum size=15pt] (s6) at (10,1) {6};
	
	\path[->] 	
	(s1) edge node[above] {$\get{t_1}$} (s2)
	(s2) edge node[above] {$\get{t_2}$} (s3)
	;
	\path[->, bend right=20] 
	(s3) edge node[below left] {$\rel{t_1}$} (s4)
	(s4) edge node[below right] {$\get{t_1}$} (s5)
	(s5) edge node[above right] {$\rel{t_2}$} (s6)
	(s6) edge node[above left] {$\get{t_2}$} (s3)
	;
\end{tikzpicture}
	\caption{A process with a single infinite run whose pattern is $\dashover{\set{\set{t_1}, \set{t_2}, \set{t_1,t_2}}}$ ("switching").
		It also has finite runs of patterns $\dashover{\es}$, $\dashover{\set{t_1}}$, $\dashover{\set{t_1, t_2}}$, $\fillover{\set{t_2}}$ and $\fillover{\set{t_1}}$}
	\label{fig:example-switching}
\end{figure}

\begin{example}
	Consider the process $p$ displayed in Figure~\ref{fig:example-switching}. 
	It has a single infinite run, which eventually cycles between states 4 (in which it has only $t_2$), 6 (in which it has only $t_1$), and 3 and 5 (in which it has both), hence it has as "infinitary pattern" $\dashover{\set{\set{t_1}, \set{t_2}, \set{t_1, t_2}}}$, i.e., it is "switching".
	
	This system is "sound", i.e., for all finite runs $\run$, $\Owns(\run)$ is determined by its end state. Furthermore the "pattern" is "strong@@pat" if $\Owns(\run)$ is a singleton and the last operation in $\run$ is a $\rel{}$, which is also determined by the end state in this system.
	We can infer that all runs ending in state 1 have pattern $\dashover{\es}$, in state 2 $\dashover{\set{t_1}}$, in state 3 and 5 $\dashover{\set{t_1, t_2}}$, in state 4 $\fillover{\set{t_2}}$, and in state 6 $\fillover{\set{t_1}}$. 
\end{example}

Note that for each of the patterns defined above, the set of runs 
matching that pattern is a regular language. Although this fact is clear, we formalise it in the following lemma. This allows us to give explicitly (very small) automata recognising those languages, and we think that the proof of this lemma may help the reader understand the relation between "finitary@@pat" and "infinitary patterns".

\begin{lemma}
	\label{lem:DELA-pat}
	Let $p \in \Proc$ be a process. 
	
	For each pattern $\Pat$ described in Definitions~\ref{def:finpat} and~\ref{def:infpat} we can define a (deterministic) "ELA" $\aut_{\Pat}^p$ (with 12 states) over the alphabet $\Sigma_p \cup \set{\dummy}$ recognizing the language consisting of $\pad{\run}$ with $\run$ a local run of $p$ whose pattern is $\Pat$.
\end{lemma}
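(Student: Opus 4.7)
My plan is to design a single deterministic ELA template with 12 states that tracks the information about $p$'s behavior relevant to patterns, and then to specify, for each pattern $\Pat$, a different acceptance formula over this common state space. The states will be pairs $(A, m)$ with $A \subseteq T_p$ the currently owned set of locks and $m$ a small mode tag. For $A \in \{\{t_1\}, \{t_2\}\}$ I keep two live modes, one for the case where the last lock operation was an acquire (or there was none) and one for the case where it was a release of the other lock; for $A \in \{\emptyset, T_p\}$ a single live mode suffices. Doubling these six live states with a padding copy that becomes absorbing once a $\dummy$ has been read yields exactly $6 + 6 = 12$ states.

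The transitions of the template are driven by $op$: a letter $a \in \Sigma_p$ with $op(a) = \get{t}$ or $\rel{t}$ updates $A$ and refreshes the mode; a letter with $op(a) = \nop$ preserves both; inconsistent inputs (double-acquire, release of an unheld lock, or any letter of $\Sigma_p$ after padding has been entered) are left undefined; and $\dummy$ moves to the matching padding copy. This template is deterministic, and when fed $\pad{\run}$ for a valid local run $\run$ it produces a state sequence that records, at each step, both $\Owns$ and whether the latest lock operation was a release.

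Next, for each pattern $\Pat$ I choose an acceptance formula $\phi_{\Pat}$. Finitary patterns $\dashover{A}$ and $\fillover{A}$ are captured by demanding that exactly one designated padding state be visited infinitely often, namely the padding copy of $(A, m)$ where $m$ matches the weak/strong distinction of Definition~\ref{def:finpat}. Infinitary strong patterns $\fillover{\{\{t_i\}\}}$ ask that the run settle in the release-mode copy of the singleton $\{t_i\}$ and visit nothing else infinitely often. A weak infinitary pattern $\dashover{S}$ with $S \subseteq 2^{T_p}$ is encoded by requiring, for each $A \in S$, some non-padding state with ownership $A$ to be visited infinitely often, for each $A \notin S$ no such state to be, and no padding state to appear; for singleton values of $S$ one further restricts the mode to the acquire-style copy in order to separate the weak case from its strong counterpart.

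The main delicate point I anticipate is this mode discipline for singleton values of $\Inf(\run)$: since $\dashover{\{\{t_1\}\}}$ and $\fillover{\{\{t_1\}\}}$ share the same infinitely-often ownership, only the acquire-versus-release split in the live states lets an ELA formula distinguish them. Once the encoding of patterns into formulas is fixed, correctness follows from a case-by-case inspection, using determinism of the template to argue that the unique run of $\aut_{\Pat}^p$ on $\pad{\run}$ satisfies $\phi_{\Pat}$ exactly when $\run$ has pattern $\Pat$ in the sense of Definitions~\ref{def:finpat} and~\ref{def:infpat}.
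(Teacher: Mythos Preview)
Your proposal is correct and follows essentially the same construction as the paper: six live states tracking the current finitary pattern (i.e., the owned set plus a weak/strong tag on the singleton ownerships), duplicated into six absorbing padding copies reached via $\dummy$, with the pattern-specific acceptance encoded as an Emerson--Lei formula over which of these twelve states are seen infinitely often. Your mode tag is exactly the paper's weak/strong distinction, and your treatment of the delicate singleton-$\Inf$ case matches the paper's handling of $\fillover{\{\{t\}\}}$ versus $\dashover{\{\{t\}\}}$.
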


\begin{proof}
	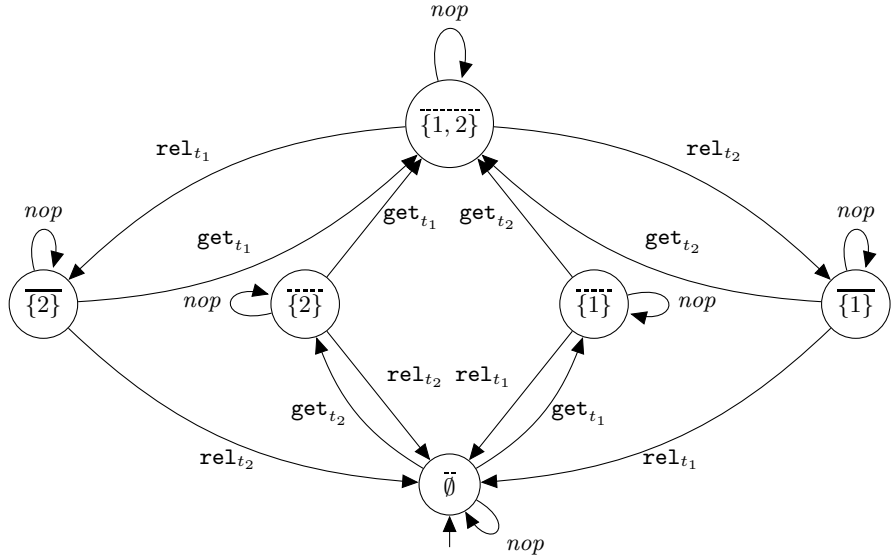
\begin{figure}
		\begin{tikzpicture}[node distance=1.8cm,auto,>= triangle
	45,scale=.6]
	\tikzstyle{initial}= [initial by arrow,initial text=,initial
	distance=.7cm, initial where= below]
	\tikzstyle{accepting}= [accepting by arrow,accepting text=,accepting
	distance=.7cm,accepting where =right]
	
	\node[state,initial] (es) at (0,0) {$\dashover{\es}$};
	\node[state] (1w) at (3.2,4) {$\dashover{\set{1}}$};
	\node[state] (1s) at (9,4) {$\fillover{\set{1}}$};
	\node[state] (2w) at (-3.2,4) {$\dashover{\set{2}}$};
	\node[state] (2s) at (-9,4) {$\fillover{\set{2}}$};
	\node[state] (12) at (0,8) {$\dashover{\set{1,2}}$};

	\path[->, bend left=20] 	
	(es) edge node[left] {$\get{t_2}$} (2w)
	(1s) edge node[below] {$\rel{t_1}$} (es)
	(1s) edge node[above right] {$\get{t_2}$} (12)
	(12) edge node {$\rel{t_2}$} (1s)
	;
	\path[->, bend right=20] 
	(es) edge node[right] {$\get{t_1}$} (1w)
	(2s) edge node[below] {$\rel{t_2}$} (es)
	(2s) edge node {$\get{t_1}$} (12)
	(12) edge node[above left] {$\rel{t_1}$} (2s);
	\path[->]
	(2w) edge node[right] {$\get{t_1}$} (12)
	(2w) edge node {$\rel{t_2}$} (es)
	(1w) edge node[above left] {$\rel{t_1}$} (es)
	(1w) edge node[left] {$\get{t_2}$} (12)
	(es) edge [out=330, in=300, looseness=8] node {$\nop$} (es)
	(1w) edge [loop right] node {$\nop$} ()
	(1s) edge [loop above] node {$\nop$} ()
	(2w) edge [loop left] node {$\nop$} ()
	(2s) edge [loop above] node {$\nop$} ()
	(12) edge [loop above] node {$\nop$} ()
	
	;
\end{tikzpicture}
		\caption{The automaton structure for pattern recognition. Every state $s$ has a transition $\dummy$ to its copy $\pad{s}$, with a $\dummy$ self-loop, which is not displayed.}
		\label{fig:pat-recognition}
	\end{figure}
	
	For all patterns we use the same states and transitions, which keep track of the "finitary patterns". 
	They are described in Figure~\ref{fig:pat-recognition} with $T_p = \set{t_1, t_2}$. 
	We labelled edges with operations instead of actions as the transitions of an action $a$ depend only on $op(a)$ here.
	For each state $s$ we have a transition labelled by $\dummy$ leading to a copy $\pad{s}$ of that state with only a self-loop labelled by $\dummy$.
	The desired pattern is then expressed as an Emerson-Lei condition to obtain an "ELA". 
	
	For a "finitary pattern" $\Pat$ the formula $\varinf_{\pad{\Pat}}$ suffices, to indicate that the automaton read a run of pattern $\Pat$ and then only $\dummy$.
	
	For an "infinitary pattern" such that $\Inf(\run_p) = \set{\set{t}}$ for some $t$, we have to distinguish "strong@@inf" and "weak@@inf".
	If the pattern is strong we use the formula $\varinf_{\fillover{\set{t}}} \land \bigwedge_{s \neq \fillover{\set{t}}} \neg \varinf_s$ saying that we stay in state $\fillover{\set{t}}$ indefinitely, otherwise we use $\varinf_{\dashover{\set{t}}}\bigwedge_{s \neq \dashover{\set{t}}} \neg \varinf_s$ saying that we stay in $\dashover{\set{t}}$ indefinitely.
	
	Otherwise we only have to check the set of sets of locks owned infinitely often, hence we use the formula $\bigwedge_{J \in \Inf(\run_p)} \phi_J \land \bigwedge_{J \notin \Inf(\run_p)} \neg \phi_J$,
	where $\phi_J$ is $\varinf_{\dashover{J}} \lor \varinf_{\fillover{J}}$ if $J$ is a singleton, and $\varinf_{\dashover{J}}$ otherwise.
\end{proof}

We now present the key proposition on patterns for "2LSS". 
It states when a set of local runs can be combined into a global run.
Note that the criterion depends only on the patterns of the local runs and the last
states they reach.
This will be the crucial ingredient in the proof that the "regular verification problem" is in \NP~for "2LSS".

\begin{restatable}{proposition}{CHARACPATTERNS}
	\label{prop:charac-schedulable-pat}
	Consider a family of local runs $(\run_p)_{p \in \Proc}$ (each of them can be finite or infinite).
	
	\AP We write $\introinrestatable{G_{\mathit{Inf}}}$ for the undirected graph whose vertices are locks and with a $p$-labelled edge between $t_1$ and $t_2$ whenever $T_p= \set{t_1, t_2}$ and $\run_p$ is "switching".
	
	For all finite $w_p$ let $s_p$ be its end state. We define the set of locks that can be acquired from $s_p$: $\introinrestatable{\text{\sc{Blocks}}}_p = \set{t \mid \exists a, op(a) = \get{t} \text{ and }
		\delta_p(s_p,a) \text{ is defined}}$.
	
	The local runs $(\run_p)_{p \in \Proc}$ can  be scheduled into a "process-fair" global run if and only if the following conditions are all satisfied.

	\begin{enumerate}
		\item \label{C1} If $\run_p$ is finite then all outgoing transitions from its
		end state $s_p$ acquire a lock.  
		
		\item \label{C3} All sets $\Owns(\run_p)$ are disjoint.
		
		\item \label{C4} All $\Blocks_p$ are included in  $ \bigcup_{p \in \Proc} \Owns(\run_p)$.
		
		\item \label{C5} The intersection $ \Owns(\run_p) \cap \bigcup_{J \in \Inf(\run_q)} J$ is
		empty for all pairs of processes $p \neq q$ such that $\run_q$ is infinite.
%
%
	\item \label{C6} There is a total order $\leq$ on locks such that for all $p$
	whose run $\run_p$ has a strong pattern $\fillover{\set{t_1}}$ (for finite runs) or
		$\fillover{\set{\set{t_1}}}$ (for infinite runs) we have $t_1 \leq t_2$; where $t_2$ is
		the other lock used by  $p$.

		\item \label{C7} There is no process $p$ such that (1) $\set{t, t'} \in \Inf(\run_p)$
		and (2) there is a path in $\Ginf$ between $t$ and $t'$ not using a $p$-labelled edge.
		In particular, $\Ginf$ is acyclic.
	\end{enumerate}
\end{restatable}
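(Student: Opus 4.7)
The proof has two directions. For the forward direction, given a process-fair global run $w$ realizing the $\run_p$, the first four conditions follow from standard invariants: a blocked local run must have only lock-acquisition transitions outgoing from its end state (since nops and releases are locally always enabled), the $\Owns(\run_p)$ are disjoint because they are limits of the pairwise disjoint sets of locks held at each moment, the $\Blocks_p$ are covered since each missing lock of a blocked process must actually be held continuously by some other process after $p$ stops advancing (otherwise $p$ would have an enabled action), and a forever-held lock cannot be reacquired infinitely often by anyone else.

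For the strong-pattern ordering, I would assign each strong-pattern process $p$ with $\fillover{\set{t_1}}$ a timestamp $\sigma_p$ at which $p$ holds both locks just before its last release of $t_2$, and show that the directed graph with edges $t_1 \to t_2$ (one per such $p$) is acyclic: a hypothetical cycle $t_1 \to \cdots \to t_k \to t_1$ witnessed by $p_1, \ldots, p_k$ would yield $\sigma_i < \sigma_{i+1}$ around the cycle, since $p_{i+1}$ starts keeping $t_{i+1}$ forever at some time $\tau_{i+1} \leq \sigma_{i+1}$ while $p_i$ still holds $t_{i+1}$ at $\sigma_i$, forcing $\sigma_i < \tau_{i+1}$ by mutual exclusion. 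Any topological sort then supplies the required total order. For the $\Ginf$ acyclicity, a cycle $t_1, \ldots, t_k$ witnessed by switching processes $p_1, \ldots, p_k$ leads to contradiction by counting: after stabilization each $p_i$ always holds at least one of $\set{t_i, t_{i+1}}$ (since $\emptyset \notin \Inf(\run_{p_i})$), so fixing a moment when $p_1$ holds both propagates "$p_i$ holds only $t_{i+1}$" around the cycle, ending with $p_k$ holding $t_1$ against $p_1$. The general bypass-path statement follows by closing a non-$p$ path in $\Ginf$ with $p$'s own edge to manufacture such a cycle.

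For the reverse direction, I would build the run in phases, using the order from the strong-pattern condition to sequence lock acquisitions. Finite blocked processes are first driven to their end states, relying on the covering condition to ensure their missing locks will indeed be captured forever. Strong-pattern processes are then scheduled in increasing order of their kept lock: each such process takes its kept lock, briefly grabs and releases its other lock, and then holds its kept lock forever, the order guaranteeing that when one process needs a lock another still holds temporarily, the other has already finished releasing it. Switching processes are interleaved using a DFS on the forest $\Ginf$ (processing leaves first), so that each switching process's oscillation between its two locks fits into windows left open by its neighbors. A round-robin alternation among infinite processes in the limit phase ensures process-fairness.

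The main obstacle is this sufficiency construction: one has to interleave processes of all four types --- finite blocked, infinite strong-pattern, infinite weak non-switching, and switching --- without ever violating mutual exclusion while reproducing each local projection exactly. The acyclicity conditions are precisely what makes such a schedule exist, but verifying carefully that switching processes sharing locks with strong-pattern processes can be woven in compatibly, and that the resulting interleaving really is process-fair, is where the combinatorial effort concentrates.
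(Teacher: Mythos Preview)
The forward direction is essentially correct and close to the paper's argument. Your timestamp-and-topological-sort approach for condition~\ref{C6} works and is slightly more explicit than the paper's, which simply declares $t \leq t'$ whenever $t$ is acquired only finitely often and some operation on $t'$ occurs after the last operation on $t$. For condition~\ref{C7}, your reduction ``close the non-$p$ path with $p$'s own edge to get a cycle'' is not quite right, since $p$ need not be switching and hence need not contribute any edge to $\Ginf$; but your propagation argument already applies directly to the path (with $p$ holding both endpoints), so the underlying reasoning is sound. The paper uses a one-line pigeonhole instead: when $p$ holds both $t$ and $t'$, the $n-1$ switching processes on the path share only $n-2$ remaining locks, so one of them holds none.

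The reverse direction has a genuine gap. Your phase ordering ``finite blocked processes first, then strong-pattern processes'' treats these as disjoint categories, but they overlap, and more importantly a finite \emph{weak} run with $\Owns = \set{t}$ that is driven to completion first will permanently block any strong-pattern process that still needs $t$ temporarily. The paper's construction of the initial segment $v_0$ avoids this by first executing only the \emph{neutral prefixes} of all weak-pattern runs (leaving every lock free), then executing all strong-pattern runs---finite and infinite together---in the $\leq$-order on their kept locks, and only afterwards completing the weak runs by taking the locks in their $\Owns$ sets.

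More seriously, your limit phase (``round-robin alternation'') is missing the central mechanism. When a non-switching process $p$ with $\set{t,t'} \in \Inf(\run_p)$ needs both its locks and each is currently held by some switching process, round-robin simply stalls. The paper resolves this with an explicit prolongation $\pi_{t,i}$: starting from $t$, it follows the (unique) path in $\Ginf$ until it reaches a free lock, then shifts every switching process along that path onto its other lock, which frees $t$. Condition~\ref{C7} is precisely what makes this work: acyclicity of $\Ginf$ guarantees the chase terminates, and the fact that $t$ and $t'$ lie in different connected components of $\Ginf$ guarantees that $\pi_{t,i}$ and $\pi_{t',i}$ involve disjoint processes and locks, so both can be applied and $p$ can proceed. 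This lock-shifting construction is the combinatorial core of the sufficiency proof, and without it the schedule you sketch cannot be completed.
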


\begin{proof}
	{\Large \textbf{$\Rightarrow$:}} We start with the left-to-right implication. Let $(\run_p)_{p \in \Proc}$ be a family of local runs, 
	suppose they can be scheduled into a "process-fair" global run $\run$.
	
	For all finite local runs $\run_p$, as $\run$ is "process-fair", 
	after some point $p$ cannot ever execute any action.
	
	As a consequence, $s_p$ (the state reached after executing $\run_p$) cannot have any outgoing
	transition executing $\rel{}$ or $\nop$, as those can always be executed.  
	The locks of $\Blocks_p$ are never free after some point, as otherwise $p$
	would be enabled infinitely often on the run, so the run would not be
	"process-fair". 
	This shows condition~\ref{C1}.
	
	%
	All finite runs $\run_p$ stop while holding the locks of $\Owns(\run_p)$.
	All infinite $\run_p$ eventually acquire and never release the locks 
	of their $\Owns(\run_p)$.
	Hence the $\Owns(\run_p)$ sets need to be pairwise disjoint, proving condition~\ref{C3}.
	
	Furthermore, if a lock is not in some $\Owns(\run_p)$ then 
	it is free infinitely often, 
	and thus cannot be in $\Blocks_p$ for any $p$, as $\run$ is "process-fair".
	This proves condition~\ref{C4}.
	
	All locks of $\bigcup_{J \in \Inf(\run_q)} J$ are held by $q$ infinitely often, hence they cannot be in any $\Owns(\run_p)$ with $p \neq q$, which shows 
	condition~\ref{C5}.
	
	If a run $\run_p$ of a process $p$ using locks $t_1,t_2$ has a "pattern"
	$\fillover{\set{t_1}}$ or $\fillover{\set{\set{t_1}}}$ 
	then the last operation on $t_1$ (when $p$ acquires it for the last time) is followed by at least 
	one operation on $t_2$ in the run $\run$. 
	We satisfy condition~\ref{C6} by setting $\leq$ as an order on locks 
	such that $t \leq t'$ whenever $t$ is only acquired finitely many times and there is an operation on $t'$ 
	after the last operation on $t$ in $\run$ .
	
	We demonstrate condition~\ref{C7} by contradiction.
	Say there exist such locks and process, i.e., there exist $t = t_1, \ldots, t_n = t'$ and $p_1, \ldots, p_{n-1}$ without $p$ such that 
	for all $1 \leq i < n$, $p_i$ accesses $t_i$ and $t_{i+1}$ 
	and $\run_{p_i}$ is "switching". 
	Then all $p_i$ are always holding a lock after some point.
	
	As $\set{t_n, t_1} \in \Inf(\run_p)$, this means that $p$ holds $t_n$ and $t_1$ 
	simultaneously infinitely often. 
	Whenever that happens, processes $p_1,\dots,p_{n-1}$ have to share the remaining 
	($n-2$) locks, hence one of them holds no lock, 
	contradicting the fact that $\emptyset \notin \Inf(\run_{p_i})$ for all $i$.
	
	{\Large \textbf{$\Leftarrow$:}} For the other direction, 
	suppose $(\run_p)_{p \in \Proc}$ satisfies all the conditions of the list.
	We construct a "process-fair" global run 
	whose local projections are the $\run_p$.
	
	To do so, we will construct a sequence of finite runs $v_0, v_1, \ldots$ such that $v_0 v_1 \cdots$ is such a global run.
	
	We will ensure that the following property is satisfied for all $i \in \nats$:
	
	\begin{align}\label{prop-vi}\nonumber
		& \text{For all processes } p, \text{ after executing } v_0\cdots v_i,\\
		& \text{ if } \Owns(\run_p) \in \Inf(\run_p) \text{ then }  \Owns((v_0\cdots v_i)|_p) = \Owns(\run_p)\\
		&\nonumber \text{ otherwise } w_p \text{ is "switching" and } p \text{ holds one lock.}
	\end{align}
	
	We will also make sure that all $p$ with an infinite $\run_p$ execute an action in infinitely many $v_i$.
	
	The first run $v_0$ has to be constructed separately as we require it to satisfy some extra conditions.
	We construct $v_0$ such that for all $p$: 
	\begin{itemize}
		\item If $\run_p$ is finite then $v_0|_p = \run_p$.
		
		\item If $\run_p$ is infinite then $\run_p = v_0|_p u_p$ 
		with $u_p$ such that for every prefix $u'_p$ of $u_p$, $\Owns(v_0|_p u'_p) \in \Inf(\run_p)$.
		Furthermore if $\emptyset \in \Inf(\run_p)$ then $\Owns(v_0|_p) = \emptyset$.
		
		In other words, we execute a prefix of each infinite run 
		such that what follows matches its asymptotic behaviour.
	\end{itemize}
	
	\paragraph{Construction of $v_0$} 	
	\begin{itemize}
		\item First, for all infinite $\run_p$ such that $\emptyset \in \Inf(\run_p)$, 
		there exist arbitrarily large finite prefixes of $\run_p$ 
		ending with $p$ having no lock. 
		Hence we can select one of those prefixes $v_0|_p$, 
		large enough for $p$ to never hold a set of locks 
		not in $\Inf(\run_p)$ later in the run.
		We execute all such $v_0|_p$ at the start. All locks are free afterwards.
		
		\item We then execute for all other $p$ with "weak patterns", 
		their maximal prefix ending with $p$ having no lock.
		All locks are still free.

		\item Then we execute all $\run_p$ with "strong patterns", 
		in increasing order according to $\leq$ (see condition~\ref{C6}) 
		on the locks $t_p$ such that $\Owns(\run_p)=\set{t_p}$. 
		We execute in full the finite ones, while for the infinite ones 
		we execute a prefix $v_0|_p$ such that in the end $p$ owns only $t_p$ and 
		never acquires the other lock afterwards (recall that $\Inf(\run_p) = \set{\set{t_p}}$ in that case).
		Say we executed some of those local runs, let $p$ be a process such that $\run_p$ has a "strong 
		pattern" accessing locks $t_1 \leq t_2$, say we want to execute $v_0|_p$. 
		By condition~\ref{C3}, all $\Owns(\run_p)$ are disjoint, hence there is no other 
		process $q$ with $\Owns(\run_q) = \set{t_1}$.
		The only locks that are not free at that point are the $t$ such that 
		$t < t_1$ and $\Owns(\run_q) = \set{t}$ for some $q$ with a "strong pattern".
		Therefore, both $t_1$ and $t_2$ are free, and $v_0|_p$ can be executed.
		In the end the aforementioned locks $t_p$ are taken and all others are free.

		\item  	Then we consider the finite $\run_p$ with non-empty $\Owns(\run_p)$ and "weak patterns". 
		For those, we can execute the rest of the run 
		(we already executed the maximal prefix leading to them holding no lock), 
		as all they do is take the locks in $\Owns(\run_p)$, 
		which are free by conditions~\ref{C3} and \ref{C5}.
		
		\item 	For the infinite $\run_p$ with non-empty $\Owns(\run_p)$ and "weak patterns", there are two possibilities: 
		
		\begin{itemize}
			\item The first is that $p$ eventually keeps the same set of locks forever 
			and never executes any more operations on locks. 
			Then its "trace" $\trace{\run_p}$ is of the form 
			either $(\pow{T_p})^* \emptyset \set{t_1}^\omega$ 
			or $(\pow{T_p})^*\set{t_1,t_2}^\omega$. 
			In that case clearly we can just execute the run until we reach a point 
			after which $p$ only ever owns $\Owns(\run_p)$ forever.
			We can do this as all locks taken so far are either in $\Owns(\run_q)$ 
			for some $q$ with finite $\run_q$ 
			or are in an element of some $\Inf(\run_q)$ for some $q$.
			Thus all locks from those $\Owns(\run_p)$ are free 
			by conditions~\ref{C3} and \ref{C5}.
			
			\item The other possibility is that 
			$\trace{\run_p} \in (\pow{T_p})^*(\set{t_1}^* \set{t_1, t_2})^\omega$ 
			with $\set{t_1, t_2} = T_p$ and $\Owns(\run_p) = \set{t_1}$. 
			This happens if $p$ ultimately holds one lock forever 
			and acquires and releases the other one infinitely many times.
			At that point all locks that are taken are in some $\Owns(\run_p)$, thus by condition \ref{C5} both locks of $p$ are free. 
			Hence we can execute enough steps of $\run_p$ to reach a point at which 
			$p$ holds only $t_1$ and will only hold sets of locks of $\Inf(\run_p)$ afterwards.
		\end{itemize}

		\item Finally we consider the infinite "switching" runs $\run_p$. 
		All those processes must have $T_p \in \Inf(\run_p)$, hence by condition \ref{C5} 
		all their locks are free.
		By condition \ref{C7}, $\Ginf$ is acyclic.
		We can therefore pick one of those processes $p$ and a lock $t$ such that 
		no other such process accesses $t$. 
		We execute $\run_p$ until 
		$p$ only owns $t$ will only own sets of locks of $\Inf(\run_p)$ afterwards.
		All locks of the other such $p$ are still free, hence we can iterate that step until we executed a prefix of each of those $p$.
	\end{itemize}

	We have constructed a finite run $v_0$ whose projection $v_0|_p$ on $\S_p$ is such that if $\run_p$ is finite then $\run_p = v_0|_p$ 
	and  if $\run_{p}$ is infinite then $v_0|_p$ is a prefix of $\run_p$ such that all local configurations seen later in the run are in $\Inf(\run_p)$. Moreover $v_0$ satisfies property~\ref{prop-vi}.
	
	We now construct the remaining parts of the run.
	If all $\run_p$ are finite then $v_0$ proves the lemma (we can set all other $v_i$ as $\epsilon$). Otherwise we must describe the rest of the "process-fair" global run whose projections are the $\run_p$.	
	We start with a small construction that will help us define the $v_i$.	
	
	Suppose we constructed $v_0, \ldots, v_i$ so that property~\ref{prop-vi} is satisfied for all $j \leq i$. 
	Now suppose some lock $t_0$ is not in any $\Owns(\run_p)$ 
	and is not free after executing $v_0 \cdots v_i$.
	Then there exists a "switching" run $\run_{p_1}$ with $t_0 \in T_{p_1}$.
	
	Let $t_1$ be the other lock of $p_1$, 
	say it is not free. 
	By property~\ref{prop-vi}, $p_1$ holds only one lock and thus does not hold $t_1$. 
	By condition~\ref{C5} $t_1$ cannot be in some $\Owns(\run_p)$, thus, again by property~\ref{prop-vi}, as $t_1$ is not free,
	there exists a "switching" $\run_{p_2}$ such that $t_1 \in T_{p_2}$.
	Let $t_2$ be the other lock of $p_2$.
	
	We construct this way a sequence of processes $p_1, p_2, \ldots$ 
	and of locks $t_0, t_1, \ldots$ such that 
	$T_{p_j} = \set{t_{j-1}, t_{j}}$ and $\run_{p_j}$ is "switching" for all $j$. 
	This sequence cannot be infinite as each $p_j$ labels an edge in $\Ginf$, 
	which is finite and acyclic.
	
	Hence there exists $k$ such that $t_k$ is free. 
	We can therefore execute $\run_{p_k}$ until $p_k$ holds $t_k$
	and not $t_{k-1}$, then execute $\run_{p_{k-1}}$ 
	until $p_{k-1}$ holds $t_{k-1}$ and not $t_{k-2}$, and so on
	until $t_1$ is free.
	
	Hence if a lock $t$ is not in any $\Owns(\run_p)$ but is not free 
	after executing $v_0 \cdots v_i$ then we can prolong the prefix run 
	so that $t$ is free and some lock from the same connected component 
	in $\Ginf$ is not. 
	For all such $t$ and $i$ we name this prolongation of the run $\pi_{t,i}$.
	
	Say we already constructed $v_0, \ldots, v_i$, and that property~\ref{prop-vi} is satisfied for all $j \leq i$. We construct $v_{i+1}$. 
	Let $p$ be either a process that never executed an action, 
	or if there are no such processes, 
	the process whose last action in $v_0\cdots v_i$ is the earliest.
	
	We prolong the current run so as to execute some actions of $\run_p$.
	If the next action of $\run_p$ applies an operation $\nop$ we can execute it right away.
	The next action cannot execute a $\rel{}$ operation:
	After executing $v_0$ all processes with infinite $\run_p$ only own sets of locks of $\Inf(\run_p)$.
	By property~\ref{prop-vi}, if $\run_p$ is "switching" then after executing $v_0\cdots v_i$ the process $p$ holds one lock and will not release it as it would be left with no lock and $\es \notin \Inf(\run_p)$. If $p$ is not "switching" then after executing $v_0\cdots v_i$ it holds $\Owns(\run_p)$ and cannot release any lock as it keeps those forever. 
	
	Hence we are left with the case where the next action of $p$ acquires a lock $t$.
	If $t$ is not free we apply $\pi_{t,i}$ to free it 
	(and block another lock of the same connected component of $\Ginf$). 
	Note that after executing $\pi_{t,i}$ all processes with "switching" runs still hold one lock, and the others have not moved.
	
	\begin{itemize}
		\item 	If $\run_p$ is "switching" then $p$ was already holding a lock $t'$, and it can then take $t$ and then run $\run_p$ until it holds only 
		one lock again, thus respecting property~\ref{prop-vi}.
		
		\item Otherwise $p$ was holding $\Owns(\run_p)$ (by Property~\ref{prop-vi}) 
		and we have to let him take $t$ and then continue
		until $p$ holds exactly $\Owns(\run_p)$ again.
		
		\begin{itemize}
			\item If we can do it right away we do so.
			
			\item Otherwise it means that $p$ needs its other lock $t'$ to reach that 
			next step, and that this lock is taken. 
			More precisely, it means that $\Owns(\run_p) = \emptyset$ and 
			$\emptyset, \set{t,t'} \in \Inf(\run_p)$.
			
			As $\set{t,t'} \in \Inf(\run_p)$, by condition~\ref{C7},
			$t$ and $t'$ are not in the same connected component of $\Ginf$. 
			Hence we can execute $\pi_{t',i}$, without locking $t$ back, as $\pi_{t,i}$ and $\pi_{t',i}$ use disjoint sets of locks and processes.
			
			This ensures that both $t$ and $t'$ are free, 
			which allows $p$ to take $t$ and proceed to the next point 
			at which it holds $\emptyset$.
		\end{itemize}
		In both cases we end up in a configuration where $p$ owns $\Owns_{p}$, all processes with "switching" runs hold exactly one lock, and the other processes did not move, thus respecting property~\ref{prop-vi}. 
	\end{itemize}

	We have constructed $v_{i+1}$, ensuring that property~\ref{prop-vi} is satisfied for $i+1$. 
	Furthermore $v_{i+1}|_p$ is non-empty for $p$ a process with infinite $\run_p$ which either never executed anything before or executed its last action the earliest. This ensures that all $p$ with infinite $\run_p$ execute infinitely many actions in $v_0 v_1 \cdots$.
	Hence we obtain a global run $v = v_0 v_1 \cdots$ such that for all $p$ we have $v|_p = \run_p$.
	
	Furthermore we ensured that $v$ is "process-fair" as all $p$ 
	with finite runs are blocked: 
	all such $\run_p$ lead to a state from which only locks of $\Blocks_p$ can be taken, by condition~\ref{C1}, 
	and by condition~\ref{C4} all $\Blocks_p$ are included in 
	$\bigcup_{p \in \Proc} \Owns(\run_p)$, 
	the set of locks that are never free from some point on.
	
	As a result, there exists a "process-fair" run 
	whose local projections are the $\run_p$, 
	proving the right-to-left implication.
\end{proof}

\begin{example}
	Consider two processes $p$ and $q$ with the same transition system, displayed in Figure~\ref{fig:example-switching}. 
	We can prove that all "process-fair" runs of those two will end in a "global
	deadlock" using "patterns".
	
	Say there is a run whose projection on one of them (say, $p$) is infinite, then that projection $\run_p$ has pattern $\dashover{\set{\set{t_1}, \set{t_2}, \set{t_1,t_2}}}$, meaning it will take and release both locks infinitely often without releasing both at the same time after some point.
	
	Then $q$ does not have a compatible run: It cannot have the same "infinitary pattern" by condition \ref{C6} of Proposition~\ref{prop:charac-schedulable-pat}. Furthermore, by condition~\ref{C4} it cannot have any finitary pattern besides $\dashover{\es}$.
	However, its only run with that pattern is the empty one, which ends in the initial state, from which there is a transition executing $\get{t_1}$, meaning that by condition~\ref{C3} we should have $t_1 \in \Owns(\run_p)$, which is not the case.
	Thus there cannot be such a run.
\end{example}

%

\section{Process deadlocks}
\label{sec:part-deadlock}

While the complexity lower bounds presented in this work are robust to many restrictions, we can still find some interesting properties that can be verified on some systems  in polynomial time.
In~\cite{GimMMW22} (Lemma~22 and Proposition~24) it was proven that verifying if a ``locally live'' strategy on a "2LSS" allows a run leading to a global deadlock (in which all processes are blocked) can be done in polynomial time. An immediate consequence of this is that verifying if a "sound" "2LSS" in which all states have at least one outgoing transition has a run yielding a global deadlock can be done in~\PTIME.

From the results in~\cite{GimMMW22} we can also extract the \NP-completeness of finding a global deadlock in a "2LSS" when we allow states with no outgoing transitions.

\subsection{A \PTIME~algorithm for exclusive 2LSS}

Here we are interested in a different problem, the "process deadlock problem".
We provide a polynomial-time algorithm based on a key lemma that lists the different ways a process can be blocked.

Let $\tss$ be a "sound" "exclusive" "2LSS" and $p$ a process of $\tss$.	

\begin{lemma}\label{lem:fair-exclusive}
	Let $(s_p, B_p)_{p \in \Proc}$ be a "global configuration" and for each process $p$ let $u'_p$ be a local run starting in $(s_p, B_p)$ and such that $u'_p$ is either infinite or leads to a state with no outgoing transitions.
	
	There exists a "process-fair" global run $\run$ from $(s_p, B_p)_{p \in \Proc}$ such that for all $p$ its projection $\run_p$ on $\S_p$ is a prefix of $u'_p$.
\end{lemma}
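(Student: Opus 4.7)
The plan is to build $\run$ directly by scheduling the actions of the $(u'_p)_{p \in \Proc}$ under a strongly fair scheduling policy, and to deduce process-fairness of the resulting run from the fairness of the scheduler.

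I would fix a scheduler that, at each global step, picks among the \emph{enabled} processes---those $p$ whose next yet-unexecuted action in $u'_p$ can be fired in the current global configuration---according to a priority scheme ensuring strong fairness. A concrete such scheme maintains a counter $c_p$ per process that increments whenever $p$ is enabled but not scheduled and resets when $p$ is scheduled; at each step, schedule the enabled $p$ with maximal $c_p$, breaking ties by process index. The construction halts once no process is enabled. By construction each $\run_p$ is a prefix of $u'_p$.

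Next I would establish process-fairness of $\run$. If $\run$ is finite, it halts at a global deadlock (no process enabled), which is trivially process-fair. If $\run$ is infinite, strong fairness of the scheduler ensures that every process is either scheduled infinitely often (hence executing infinitely many actions) or is enabled only finitely often. In the latter case the next action of $u'_p$ is permanently disabled past some step: either $u'_p$ was finite and led to a dead state, or it wants a $\get{t}$ with $t$ taken forever afterwards. Either way, no action of $p$ can be fired past that point in $\run$, matching the definition of process-fairness.

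The main obstacle will be proving strong fairness of the counter-based scheduler. The standard argument is a pigeonhole on the finite set $\Proc$: if some $p$ were enabled infinitely often but scheduled only finitely, its counter $c_p$ would diverge, yet at each such enablement some $q$ with $c_q \geq c_p$ is scheduled instead---hence either $q$'s counter would also diverge and $q$ would be starved in the same way, or $q$'s counter would reset, eventually letting $p$ win. Iterating on the finite set $\Proc$ leads to a contradiction, but the precise accounting of resets and ties needs care. An alternative route would be to invoke Proposition~\ref{prop:charac-schedulable-pat}, suitably lifted from the initial configuration to arbitrary reachable ones, and to pick prefixes $\run_p$ of $u'_p$ satisfying the compatibility conditions of that proposition; this path, however, demands a more involved case analysis on the shape of each $u'_p$.
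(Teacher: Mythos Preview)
Your overall plan---schedule only the actions of the $(u'_p)_p$ under a strongly fair policy---is the same as the paper's (which uses uniform random selection rather than counters), and it works. But there is a genuine gap at the key step. From ``the next action of $u'_p$ is a $\get{t}$ with $t$ taken forever afterwards'' you jump to ``no action of $p$ can be fired past that point''. This implication is exactly where the \emph{exclusiveness} hypothesis must be invoked, and you never do. Without exclusiveness the step fails: the current state of $p$ could have another outgoing transition (a $\nop$, or a $\get{t'}$ on a free lock) that stays available forever; then $p$ is enabled infinitely often in the sense of process-fairness, yet your scheduler---which only looks at the next action of $u'_p$---never schedules it, and the resulting run is \emph{not} process-fair. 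Exclusiveness closes this: if one outgoing transition acquires $t$, they all do, so all are blocked; soundness is likewise needed to argue that a $\rel{}$ next action can never be disabled. This is not cosmetic: the lemma is false for non-exclusive systems, so the hypothesis has to appear in the argument.

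Two smaller remarks. The counter-based strong-fairness argument is correct but heavier than needed; a round-robin that at each pass gives every process one chance to fire its next $u'_p$-action achieves the same with no divergence/pigeonhole analysis, and the paper's random scheduler is lighter still. The alternative route via Proposition~\ref{prop:charac-schedulable-pat} would require relocating all six conditions to an arbitrary starting configuration and a delicate choice of prefixes; it is substantially more work than the direct scheduling argument and not advisable here.
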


\begin{proof}
	We construct $\run$ by iterating the following step:
	For each $p$ we set $u'_p = v_p w_p$ with $v_p$ the prefix of $u'_p$ executed so far.
	We select uniformly at random a process $p \in \Proc$. If it can execute the first action of $w_p$ then we let it do so, otherwise we do nothing.
	
	We iterate this procedure indefinitely.
	This produces a (possibly finite) global run of the system such that its local projections are prefixes of the $u'_p$. We prove that it is "process-fair".
	
	Let $p \in \Proc$, assume that $p$ has an available action at infinitely many steps.
	As our "LSS" is "exclusive", whenever $p$ has an available action and is in some state $s$,
	either all outgoing transitions are executing an operation $\nop$ or $\rel{}$ (and thus can all be executed as the system is "sound"), or they all acquire the same lock $t$ (as the system is "exclusive").
	Hence if one outgoing transition can be executed , they all can and thus in particular the next action of $u'_p$ is available.
	As a result, $p$ can execute the next action of $u_p$ at infinitely many steps, and thus will progress infinitely many times in $u_p$.
	
	In conclusion, with this procedure we either reach a "global deadlock", or we always have an available action, implying that at least one process will be able to progress infinitely many times and that the resulting run $u$ is infinite. 
	In the latter case, all processes that can execute an action at infinitely many steps of the run will do so, proving that the run is "process-fair".
\end{proof}

\begin{definition}
	Define the graph $G$ whose vertices are locks and with an edge $t_1 \edge{p} t_2$ if and only if the process $p$ has a local run $\run_p$ ending in a state where all outgoing transitions acquire $t_2$ and such that $\Owns(\run_p) = \set{t_1}$. We say that $\run_p$ ""witnesses"" the edge $t_1 \edge{p} t_2$.
\end{definition}

\begin{lemma}
	\label{lem:always-one-weak}
	For all $p \in \Proc$, if there is a $p$-labelled edge $t_1 \edge{p} t_2$ in $G$  then either $t_1 \edge{p} t_2$ is "witnessed" by a run with a weak pattern or its reverse $t_2 \edge{p} t_1$ is in $G$ and is "witnessed" by a run with a weak pattern.
\end{lemma}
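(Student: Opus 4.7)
The plan is to proceed by strong induction on the length of a strong witness, repeatedly exploiting exclusiveness to identify which edges of $G$ are witnessed by suitable prefixes of the run. Let $\run_p$ be a witness of $t_1 \edge{p} t_2$. If $\run_p$ already has a weak pattern we are done, so assume it has the strong pattern $\fillover{\set{t_1}}$, meaning $\Owns(\run_p)=\set{t_1}$ and the last lock operation is $\rel{t_2}$. I would then cut $\run_p$ at its last $\get{t_1}$ action, writing $\run_p = \alpha \cdot \get{t_1}\cdot \beta$. Since $p$ does not hold $t_1$ at the end of $\alpha$, we have $\Owns(\alpha)\in\set{\es,\set{t_2}}$, and these two cases are treated separately.

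In the case $\Owns(\alpha)=\es$, the suffix $\beta$ starts with $p$ holding only $t_1$ and must contain some $\get{t_2}$ because $\run_p$ ends with $\rel{t_2}$ as its last lock operation. Let $\delta$ be the prefix of $\run_p$ ending just before the \emph{first} $\get{t_2}$ in $\beta$. Between the last $\get{t_1}$ and this first $\get{t_2}$, all actions must be $\nop$: no $\rel{t_1}$ can appear (it would contradict the choice of the last $\get{t_1}$, since $t_1\in\Owns(\run_p)$), no $\get{t_1}$ or $\rel{t_2}$ is possible since the held set is exactly $\set{t_1}$, and no other $\get{t_2}$ can appear by choice of the first one. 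Thus $\Owns(\delta)=\set{t_1}$ and the last lock operation in $\delta$ is $\get{t_1}$, so $\delta$ has weak pattern. Since the next action of $\run_p$ after $\delta$ is $\get{t_2}$, exclusiveness forces the end state of $\delta$ to have all its outgoing transitions acquiring $t_2$, so $\delta$ is a weak witness of $t_1\edge{p}t_2$.

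In the case $\Owns(\alpha)=\set{t_2}$, since the next action after $\alpha$ is $\get{t_1}$, exclusiveness forces the state reached at the end of $\alpha$ to have all its outgoing transitions acquiring $t_1$. Therefore $\alpha$ is a witness of the reverse edge $t_2\edge{p}t_1$, which is thus in $G$. If $\alpha$ already has a weak pattern, we are done. Otherwise $\alpha$ is a strong witness of $t_2\edge{p}t_1$ that is strictly shorter than $\run_p$; the induction hypothesis then yields a weak witness of either $t_2\edge{p}t_1$ or $t_1\edge{p}t_2$, which in both cases fits the conclusion of the lemma (possibly after swapping the role of $t_1$ and $t_2$).

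The main obstacle is handling the strong-pattern case without losing track of which edge of $G$ gets witnessed. The key is the combined use of (i) exclusiveness, which promotes any single outgoing acquisition into a constraint on \emph{all} outgoing transitions of a state, so that cutting a run just before an acquisition automatically produces a valid witness in $G$, and (ii) the choice of the last $\get{t_1}$ (respectively first $\get{t_2}$) which guarantees the held-lock invariants used to identify the last lock operation as an acquisition rather than a release. The induction is well-founded because in the only recursive case we pass to the proper prefix $\alpha$, which is strictly shorter than $\run_p$.
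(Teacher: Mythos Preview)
Your proof is correct, but it takes a genuinely different route from the paper's. The paper avoids induction entirely: instead of starting from a given witness of $t_1\edge{p}t_2$, it considers a \emph{shortest} local run of $p$ that reaches a configuration holding both locks simultaneously (such a run exists because any witness of the edge can be extended by one step to hold both locks). By minimality the last operation of that run is a $\get{}$, and the prefix obtained by removing it cannot have a strong pattern, since a strong pattern would mean the last lock operation of the prefix is a release, and just before that release $p$ already held both locks, contradicting minimality. Exclusiveness then turns this prefix into a weak witness of one of the two edges, in a single step. Your approach instead stays with the given witness and cuts at the last $\get{t_1}$; when the resulting prefix $\alpha$ is itself strong (your Case~2), you need to recurse on it, which is why you end up with an induction on the length of the witness. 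Both arguments use exclusiveness in the same way to promote a single acquiring transition into a witnessing state. The paper's choice buys a shorter, non-inductive proof at the cost of changing the object under consideration from ``a witness of the edge'' to ``a shortest both-locks run''; your argument stays closer to the original edge but pays with the recursive case.
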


\begin{proof}
	As $p$ has an edge $t_1 \edge{p} t_2$ in $G$, there is a local run which acquires both locks of $p$ at the same time. Let $\run_p$ be such a run of minimal length. The last operation in $\run_p$ must be a $\get{}$, by minimality, hence $\run_p$ is of the form $\run'_p a$ with $op(a) = \get{t}$ for some $t \in \set{t_1, t_2}$.
	Furthermore, suppose the last operation in $\run'_p$ besides $\nop$ is a $\rel{}$, then there is a previous configuration in $\run_p$ in which $p$ holds both of its locks, contradicting the minimality of $\run_p$. Hence $\run'_p$ has a weak pattern, and it leads to a state where $p$ may acquire $t$, thus has to acquire $t$ as the system is "exclusive". Furthermore $p$ is then holding its other lock, therefore $\run'_p$ "witnesses" an edge in $G$.
\end{proof}

\begin{lemma}
	\label{lem:cond-cycle}
	If $p$ has a reachable transition acquiring some lock $t$ and there is a path from $t$ to a cycle in $G$ then there is a "process-fair" global run with a finite projection on $p$. 
\end{lemma}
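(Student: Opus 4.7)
The plan is to apply Proposition~\ref{prop:charac-schedulable-pat} to a carefully chosen family of local runs in which $p$'s run is finite and blocked at a $\get{t}$-transition, thereby yielding the desired "process-fair" global run whose projection on $p$ is finite. From the hypothesis I fix a simple cycle $C : t_1 \edge{q_1} t_2 \edge{q_2} \cdots \edge{q_k} t_1$ in $G$ and a simple path $P : t = t'_0 \edge{r_1} t'_1 \cdots \edge{r_m} t'_m$ with $t'_m \in \{t_1,\ldots,t_k\}$ (the case $m=0$, with $t$ already on $C$, is analogous). For $p$, I pick a reachable finite local run $w_p$ ending at a state $s_p$ with an outgoing $\get{t}$-transition; by exclusiveness every outgoing transition from $s_p$ acquires $t$. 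For every process outside $\{p,q_1,\ldots,q_k,r_1,\ldots,r_m\}$, I pick a local run avoiding all locks of the structure $\{t_1,\ldots,t_k,t,t'_1,\ldots,t'_{m-1}\}$, either infinite (from a cycle in its automaton) or finite and ending at a no-outgoing-transition state.

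For each edge of $C$ and $P$, I assign a witness. The key use of Lemma~\ref{lem:always-one-weak} is that if no edge of $C$ is weakly witnessed in the cycle direction, then by that lemma every reverse edge is in $G$ and weakly witnessed, so the reversed cycle is a cycle in $G$ with every edge weakly witnessed and can be used instead. Thus I may assume $C$ contains at least one edge carrying a weak witness, which is enough to make condition~\ref{C6} of the Proposition satisfiable: order the locks along $C$ consistently with the strong edges and break the ordering at one weak edge, then extend this order along $P$ using the chain $t = t'_0 \leq t'_1 \leq \cdots \leq t'_m$ induced by the (possibly strong) path witnesses.

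It remains to verify each condition of Proposition~\ref{prop:charac-schedulable-pat}: condition~\ref{C1} follows from exclusiveness at each witness end state and at $s_p$; condition~\ref{C3} from the simplicity of $C$ and $P$ and the lock-avoidance of outside processes; condition~\ref{C4} because $\Blocks_{q_i}=\{t_{i+1}\}$ is held by $q_{i+1}$, $\Blocks_{r_j}=\{t'_j\}$ is held by $r_{j+1}$ (or by a cycle process when $j=m$), and $\Blocks_p=\{t\}$ is held by $r_1$ (or by a cycle process when $m=0$); conditions~\ref{C5} and~\ref{C7} because no "switching" run is involved in the family; and condition~\ref{C6} by the ordering constructed above. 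Proposition~\ref{prop:charac-schedulable-pat} then delivers a "process-fair" global run whose projection on $p$ is the finite run $w_p$.

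The main obstacle I would need to address is the possibility that the reachable $\get{t}$-transition forces $p$ to arrive at $s_p$ already holding its other lock $t_p \neq t$; condition~\ref{C3} then demands $t_p \notin \{t_1,\ldots,t_k,t'_1,\ldots,t'_{m-1}\}$. I would handle this by choosing $C$ and $P$ in the graph $G$ to avoid $t_p$ when possible, noting that in the corner case where every cycle in $G$ reachable from $t$ passes through $t_p$, the process $p$ itself contributes an edge to $G$ incident to $t$ that can be incorporated into the deadlock structure, reducing to the previous case.
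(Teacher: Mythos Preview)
Your approach via Proposition~\ref{prop:charac-schedulable-pat} is a reasonable alternative to the paper's direct construction, but it has a genuine gap in the treatment of processes outside the path/cycle structure. You claim you can pick for each outside process ``a local run avoiding all locks of the structure, either infinite \ldots\ or finite and ending at a no-outgoing-transition state.'' Neither alternative is guaranteed to exist: an outside process may share a lock with the cycle, and its only local runs may use that lock (so condition~\ref{C5} or~\ref{C3} fails), or it may only have finite runs ending at a state whose outgoing transitions acquire a lock \emph{not} in $\bigcup_p \Owns(\run_p)$ (so condition~\ref{C4} fails). Choosing compatible local runs for \emph{all} processes simultaneously is exactly the global problem; Proposition~\ref{prop:charac-schedulable-pat} tells you \emph{when} such a family exists, not how to produce one.

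The paper avoids this entirely. It first executes the witnesses to reach a configuration where the cycle processes are mutually blocked and $t$ is held forever, then picks $\run_p$ as a \emph{shortest} run of $p$ ending at a state wanting some structure lock (minimality plus exclusiveness guarantees $\run_p$ never touches a structure lock, which also dissolves your ``$t_p$'' worry), and finally invokes Lemma~\ref{lem:fair-exclusive} to extend to a process-fair run. Lemma~\ref{lem:fair-exclusive} is the key tool you are missing: it uses exclusiveness to show that from \emph{any} configuration a process-fair continuation exists, so the outside processes take care of themselves without you having to specify their local runs. If you want to salvage your route, you would effectively need to reprove Lemma~\ref{lem:fair-exclusive} (or appeal to it) in order to exhibit the required family of local runs for the outside processes.
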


\begin{proof}
	Let $t = t_0 \edge{p_1} t_1 \edge{p_2} \cdots \edge{p_k} t_k$ be such a path in $G$ and let $t_k = t'_1 \edge{p'_1} \cdots t'_n \edge{p'_n} t'_{n+1} = t'_1 = t_k$ be such  a cycle.
	
	For all $1 \leq i \leq k$ we choose a run $\run_i$ "witnessing" $t_{i-1} \edge{p_i} t_i$. Similarly for all $1 \leq j \leq n$ we choose a run $\run'_j$ "witnessing" $t'_{j} \edge{p'_i} t'_{j+1}$, and we choose it so that it has a weak pattern whenever possible.
	
	\paragraph{Case 1: }If there exists $j$ such that $\run'_j$ has a weak pattern, then we proceed as follows: Let $\run'_j = u_j v_j$ so that $u_j$ is the maximal "neutral" prefix of $\run_j$. We execute $u_j$, leaving all locks free.
	Let $m$ be the maximal index such that $t_m \in \set{t'_1, \cdots, t'_n}$. We execute all $\run_i$ in increasing order for $1 \leq i \leq m$.
	
	Then we execute $\run'_{j+1} \cdots \run'_n \run'_1 \cdots \run'_{j-1}$ and then $v_j$. Then we end up in a configuration where all $p_i$ with $i\leq m$ are holding $t_{i-1}$ and need $t_i$ to advance, while all $p'_i$ are holding $t'_i$ and need $t'_{i+1}$ to advance. As $t_j \in \set{t'_1,\ldots, t'_n}$, all those processes are blocked, and in particular $t = t_0$ is held by a process which will never release it.
	
	As $p$ has a reachable transition taking $t$, we can define $\run_p$ as a shortest run that ends in a state where some outgoing transitions takes a lock of $\set{t_0, \ldots, t_m, t'_1, \ldots, t'_n}$. By minimality this run can  be executed, as all other locks are free. 
	By exclusiveness, it reaches a state where all transitions take the same non-free lock.
	
	By Lemma~\ref{lem:fair-exclusive} we can extend this run into a "process-fair" one, whose projection on $p$ can only be $\run_p$, as $p$ will never be able to advance further.
	
	\paragraph{Case 2: }Now suppose there is no $j$ such that $\run'_j$ has a weak pattern, then as we took all $\run'_j$ with weak patterns whenever possible, it means there is no local run with a weak pattern "witnessing" any of the $t'_j \edge{p'_j} t'_{j+1}$.
	We can then apply Lemma~\ref{lem:always-one-weak} to show that the reverse cycle $t'_1 = t'_{n+1} \edge{p'_n} t'_n \cdots \edge{p'_1} t'_1$ exists in $G$ and all its edges are "witnessed" by runs with weak patterns.
	Hence we can apply the arguments from the previous case using this cycle to conclude.
\end{proof}

\begin{lemma}
	\label{lem:cond-inf}
	If $p$ has a reachable transition acquiring some lock $t$ and there is a path in $G$ from $t$ to some $t'$ such that there is a process $q$ with an infinite local run $\run_q$ acquiring $t'$ and never releasing it (i.e., such that $t' \in \Owns(\run_q)$), then there is a "process-fair" global run with a finite projection on $p$. 
\end{lemma}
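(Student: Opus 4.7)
The plan is to mirror the structure of the proof of Lemma~\ref{lem:cond-cycle}, with the cycle's self-supporting chain replaced by a chain that terminates at the lock $t'$ that $q$ holds forever. WLOG I would assume the path $t = t_0 \edge{p_1} t_1 \edge{p_2} \cdots \edge{p_k} t_k = t'$ in $G$ is simple (any repeated vertex can be shortcut). For each $i$ I pick a run $\run_i$ witnessing the edge $t_{i-1} \edge{p_i} t_i$: it ends in a state where $p_i$ holds $t_{i-1}$ and where, by exclusiveness, every outgoing transition acquires $t_i$. I also fix a finite prefix $u_q$ of $\run_q$ after which $q$ holds $t_k$ permanently; such a prefix exists since $t_k \in \Owns(\run_q)$. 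Write $\run_q = u_q v_q$; the infinite suffix $v_q$ never releases $t_k$.

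The main construction is to schedule $\run_1, \run_2, \ldots, \run_k$ one after another and then $u_q$. Each $\run_i$ is executable when its turn comes because both locks $t_{i-1}, t_i$ of $p_i$ are free at that moment: $t_{i-1}$ was released at the end of $\run_{i-1}$ (since $\Owns(\run_{i-1}) = \{t_{i-2}\}$, or $i=1$ in which case no one has touched it), and $t_i$ has not yet been involved. After step $k$ the lock $t_k$ is again free and $u_q$ can be executed. At that point $p_i$ holds $t_{i-1}$ for every $1\le i\le k$, $q$ holds $t_k$ and in $v_q$ it never releases $t_k$. By exclusiveness each $p_i$ is blocked at the end of $\run_i$ (all its outgoing transitions demand $t_i$, which is held by $p_{i+1}$ for $i<k$ or by $q$ for $i=k$). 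Then I schedule a shortest finite prefix $\run_p$ bringing $p$ to the state $s_p$ from which, by hypothesis and exclusiveness, every outgoing transition acquires $t = t_0$; since $t_0$ is held by $p_1$ forever, $p$ is blocked at $s_p$. Applying Lemma~\ref{lem:fair-exclusive} from this configuration, with $v_q$ as $q$'s continuation, the empty run for every already blocked process, and any maximal local run for each remaining one, yields a process-fair global run whose projection on $p$ is exactly $\run_p$, hence finite.

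The main obstacle is that processes can share locks outside the path, which could break the executability argument above: for instance, $q$'s second lock $t_q$ or $p$'s second lock may lie in $\{t_0,\ldots,t_{k-1}\}$, or $p$ or $q$ may coincide with one of the $p_i$. These cases are handled by the same devices as in Case~1 of Lemma~\ref{lem:cond-cycle}: one invokes Lemma~\ref{lem:always-one-weak} to replace an offending witness by a weak-pattern witness of the reverse edge (splitting it into a maximal neutral prefix, scheduled at the very beginning while all locks are free, and a short final suffix scheduled in place of the original witness), and reorders the executions of $\run_p$ and of $u_q$ relative to the chain so that no process ever needs a lock that another has permanently captured. The verifications are local and mirror those already carried out for the cycle case, so the construction goes through and produces the required process-fair run with finite projection on $p$.
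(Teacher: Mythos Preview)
Your basic construction (schedule the witnesses along the path, then a long enough prefix of $\run_q$, then a shortest prefix of $p$ reaching the blocking state, then extend by Lemma~\ref{lem:fair-exclusive}) is correct and matches the paper's Case~1, i.e., the situation where $q$'s second lock $t_q$ is not among $t_0,\dots,t_{k-1}$ or is never used again once $q$ holds $t'$ forever.

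The gap is in your last paragraph. The case $t_q=t_j$ for some $j<k$ is not a matter of ``local verifications mirroring the cycle case'' and cannot be dispatched by reordering plus Lemma~\ref{lem:always-one-weak} alone. Two concrete problems:
\begin{itemize}
\item Replacing the witness of $t_{i-1}\edge{p_i}t_i$ by a weak witness of the \emph{reverse} edge $t_i\edge{p_i}t_{i-1}$ makes $p_i$ end up holding $t_i$, not $t_{i-1}$. This breaks the chain (now nobody holds $t_{i-1}$, so $p_{i-1}$ is no longer blocked), so it is not a drop-in replacement. What the reversal actually buys you is a new edge going backwards, and once you reverse all the offending edges between $t_j$ and $t_k$ you have produced a \emph{cycle}, not a chain; the paper then invokes Lemma~\ref{lem:cond-cycle} to finish. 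You never mention falling back to the cycle lemma, and without it the argument does not close: if all witnesses $\run_{j+1},\dots,\run_k$ are genuinely strong (the paper's Subcase~3.3), there is no weak witness available in the forward direction and no reordering of your construction succeeds.
\item Similarly, when $q$ itself contributes an edge $t'\edge{q}t_j$ in $G$ (paper's Subcase~3.1), you again have a cycle reachable from $t$ and must invoke Lemma~\ref{lem:cond-cycle}; reordering alone does not produce a blocking configuration.
\end{itemize}
The paper handles $t_q=t_j$ by a three-way split: use a shorter path if $q$ also keeps $t_j$ forever; use the weak-prefix trick on some $\run_i$ with $i>j$ if one is available (Subcase~3.2, this is the only place your sketch applies); otherwise manufacture a cycle and call Lemma~\ref{lem:cond-cycle}. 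Your write-up needs at least this much structure.

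A smaller point: your choice of $\run_p$ as a shortest prefix reaching a state whose transitions acquire \emph{specifically} $t_0$ may fail to be executable if $p$'s other lock lies among $t_1,\dots,t_{k-1}$. The paper instead takes a shortest prefix reaching a state whose transitions acquire \emph{some} lock in the blocked set; minimality then guarantees executability.
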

\begin{proof}
	Let $t = t_0 \edge{p_1} t_1 \edge{p_2} \cdots \edge{p_{k}} t_k = t'$ be the shortest path from $t$ to $t'$.
	Let $\run_p$ be a local run of $p$ acquiring $t$ at some point, either infinite or leading to a state with no outgoing transition. 
	For each $1 \leq i \leq k$ we select a local run $\run_i$ of $p_i$ "witnessing" $t_{i-1} \edge{p_i} t_i$.
	Furthermore we select those $\run_i$ with "weak patterns" whenever possible. 
	Let $t_q$ be the other lock used by $q$ besides $t'$, and let $\run_q$ be an infinite run of $q$ in which $t'$ is eventually taken and never released.
	We can decompose $\run_q$ as $u_q v_q$ where $u_q$ is the largest "neutral" prefix of $\run_q$. 
	We distinguish several cases:
	
	\paragraph{Case 1:} $t_q \notin \set{t_0, \ldots, t_k}$, or $t_q$ is not used in $v_q$. Then we can execute $u_q$, leaving all locks free, then $\run_1 \cdots \run_k$, which can be done as the execution of $\run_1 \cdots \run_i$ leaves $t_{i}, \ldots, t_k$ free and thus $\run_{i+1}$ can be executed.
	Let $v_q = v'_q v''_q$ with $v'_q$ a prefix of $v_q$ large enough so that $t'$ is held by $q$ and never released later.
	Then as no $t_i$ is used in $v_q$, we can execute $v'_q$. 
	Let $\run$ be the run constructed so far.
	Then by Lemma~\ref{lem:fair-exclusive} we can construct a "process-fair" run $\run'$ starting in the last configuration of $\run$ whose projection on $q$ is a prefix of $v''_q$ (thus $t_k$ is never released and thus neither are $t_0, \ldots, t_{k-1}$) and whose projection on $p$ is a prefix of $\run_p$ (and thus finite as $\run_p$ tries to acquire $t$, which is never free). As a consequence, $\run\run'$ is a "process-fair" run whose projection on $p$ is finite.
	
	\paragraph{Case 2:} $t_q = t_j$ for some $0 \leq j \leq k$ and $\run_q$ acquires $t_j$ at some point and never releases it. Then we apply the same reasoning as in the previous case for the path $t = t_0 \edge{p_1} \cdots \edge{p_j} t_j$.
	
	\paragraph{Case 3:} $t_q = t_j$ for some $0 \leq j \leq k$ and $t_j$ is used in $v_q$ but not kept indefinitely.
	
	\paragraph{Subcase 3.1:} there is an edge $t' \edge{q} t_j$.
	Then we have a path from $t$ to a cycle $t_j \edge{p_{j+1}} \cdots \edge{p_k} t' \edge{q} t_j$.
	Hence by Lemma~\ref{lem:cond-cycle}, there is a "process-fair" global run with a finite projection on $p$. 
	
	\paragraph{Subcase 3.2:} One of the runs $\run_i$ has a "weak pattern" $\dashover{\set{t_{i-1}}}$. 
	We decompose $\run_i$ as $u_i v_i$ with $u_i$ its largest "neutral" prefix.
	Then we execute $u_i$, then $\run_{i+1} \cdots \run_{k}$. After that we execute a prefix $\run'_q$ of $\run_q$ such that at the end $q$ holds only $t'$, and does not release it later. This prefix exists as $q$ never keeps $t_j$ indefinitely in $\run_q$. We decompose $\run_q$ as $\run'_q \run''_q$. Then we execute $\run_1 \cdots \run_{i-1} \run'_i$. All those runs can be executed as before executing each $\run_{i'}$ both locks of $p_{i'}$ are free, and before executing $v_i$, $t_{i-1}$ is free, which is all that is needed to execute $v_i$ as $\run_i$ has a "weak pattern". Let $\run$ be the run constructed so far.
	Then by Lemma~\ref{lem:fair-exclusive} we can construct a "process-fair" run from the configuration reached by $\run$ whose projection on $q$ is a prefix of $\run''_q$ and whose projection on $p$ is a prefix of $\run_p$.
	As a consequence, $t' = t_k$ is never released in $\run''_q$ and thus neither are $t_0, \ldots, t_{k-1}$. As $\run_p$ tries to take $t = t_0$ at some point, its prefix executed in $\run'$ is finite. 
	Hence $\run \run'$ is a "process-fair" run with a finite projection on $p$.
	
	\paragraph{Subcase 3.3:} There is no edge $t' \edge{q} t_j$ and all $\run_i$ have "strong patterns".
	When executing the $v_q$ part of $\run_q$, $q$ holds a lock at all times, and holds $t_j$ at some point and $t'$ at some point, hence it has to have both at the same time at some moment.
	Hence there is a moment at which $q$ holds one of the locks and is about to get the other. As the system is "exclusive", it means all its available transitions take that lock.
	Hence there is an edge  $t' \edge{q} t_j$ or $t_j \edge{q} t'$ in the graph. As we assumed that there is no edge $t' \edge{q} t_j$, there is one $t_j \edge{q} t'$. 
	Furthermore, as we selected the $\run_i$ so that they had weak patterns whenever possible, it means that for all $i$ there is no run with a weak pattern "witnessing" $t_{i-1} \edge{p_i} t_i$.
	By Lemma~\ref{lem:always-one-weak} this means that there are edges $t_k \edge{p_k} t_{k-1} \edge{p_{k-1}} \cdots \edge{p_{j+1}} t_j$. With the edge $t' \edge{q} t_j$, we obtain a cycle in $G$ with a path from $t$ to it.
	By Lemma~\ref{lem:cond-cycle}, there is a "process-fair" global run with a finite projection on $p$. 
	
	This concludes our case distinction, proving the lemma.	
\end{proof}

\begin{lemma}
	\label{lem:cond-stop}
	If $p$ has a reachable transition acquiring some lock $t$ and there is a path in $G$ from $t$ to some $t'$ such that there is a process $q$ with a local run $\run_q$ with $t' \in \Owns(\run_q)$ and going to a state with no outgoing transitions, then there is a "process-fair" global run with a finite projection on $p$. 
\end{lemma}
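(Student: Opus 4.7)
The plan is to mirror the proof of Lemma~\ref{lem:cond-inf} step by step, replacing the infinite asymptotic argument for $\run_q$ by the fact that $\run_q$ is finite with $t' \in \Owns(\run_q)$ and ends at a state with no outgoing transitions. Let $t = t_0 \edge{p_1} \cdots \edge{p_k} t_k = t'$ be a shortest path from $t$ to $t'$ in $G$, let $\run_p$ be a local run of $p$ acquiring $t$ at some point and either infinite or ending at a state with no outgoing transitions, and for each $i$ pick a witnessing run $\run_i$ for $t_{i-1} \edge{p_i} t_i$, preferring weak-pattern witnesses when available. Let $t_q$ be the other lock of $q$ (besides $t'$) and decompose $\run_q = u_q v_q$ with $u_q$ the largest neutral prefix.

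In the easy case, where $t_q \notin \{t_0, \ldots, t_k\}$ or where $t_q = t_j$ but $t_j$ is not used in $v_q$, I would execute $u_q$ (leaving every lock free), then $\run_1, \ldots, \run_k$ in order so that each $p_i$ ends up holding $t_{i-1}$ and waiting for $t_i$, then execute $v_q$ in full: by assumption it only touches locks that are free at that point ($t' = t_k$ and possibly $t_q$, which is outside the chain). Since $q$ stops at a state with no outgoing transitions while holding $t'$, the lock $t_k$ is held forever, which propagates up the chain of witnesses and keeps $t_0 = t$ permanently held. Lemma~\ref{lem:fair-exclusive} then extends this finite prefix into a process-fair global run whose projection on $p$ is finite, since $\run_p$ cannot advance past its next attempt to acquire $t$.

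The intermediate case is $t_q = t_j$ with $t_j \in \Owns(\run_q)$: here $q$ ends holding both $t_j$ and $t'$. I would apply the easy case to the shorter path $t_0 \edge{p_1} \cdots \edge{p_j} t_j$; by the shortest-path hypothesis $t_k = t'$ does not lie on this shorter path, so it plays the role of ``the other lock of $q$ outside the chain'', and the same construction works, with $t_j$ held forever after $\run_q$.

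The main difficulty is the remaining case: $t_q = t_j$, $t_j \notin \Owns(\run_q)$, and $t_j$ is used in $v_q$. Since $q$ holds at least one lock throughout $v_q$ and passes through a configuration containing $t_j$ and one containing $t'$, by continuity it must hold $\{t_j, t'\}$ at some intermediate moment; by exclusiveness this forces either an edge $t' \edge{q} t_j$ or an edge $t_j \edge{q} t'$ in $G$. In the first case, $t_j \edge{p_{j+1}} \cdots \edge{p_k} t' \edge{q} t_j$ is a cycle in $G$ reachable from $t$, and Lemma~\ref{lem:cond-cycle} concludes. For the second case I would branch on whether any witness $\run_i$ with $j < i \leq k$ has a weak pattern: if none does, Lemma~\ref{lem:always-one-weak} yields reverse edges $t_i \edge{p_i} t_{i-1}$ for all such $i$, which together with $t_j \edge{q} t_k$ close a cycle reachable from $t$, and Lemma~\ref{lem:cond-cycle} concludes; if some $\run_i$ has a weak pattern $\dashover{\{t_{i-1}\}}$, decompose $\run_i = u_i v_i$ at the longest neutral prefix and build the blocking run exactly as in Subcase 3.2 of Lemma~\ref{lem:cond-inf}, using that $v_i$ only uses $t_{i-1}$. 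The principal obstacle throughout is bookkeeping: in each subcase one must verify that every witness in the chosen execution order finds its required locks free at the right moment, and that the prefix/suffix split of $\run_q$ is compatible with the split of any weakly-patterned $\run_i$.
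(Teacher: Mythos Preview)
Your proof is correct, but it takes a much longer route than the paper's. You redo the entire case analysis of Lemma~\ref{lem:cond-inf} from scratch, adapting each subcase to the setting where $\run_q$ is finite and ends at a dead-end state. This works, and your refinement of restricting the weak-pattern search to indices $j < i \leq k$ is even slightly cleaner than the corresponding passage in the proof of Lemma~\ref{lem:cond-inf}.

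The paper, however, avoids all of this by a one-line reduction: it adds a fresh $\nop$-labelled self-loop (with a new letter $\#$) on the end state $s_q$ of $\run_q$. Since $s_q$ has no other outgoing transitions, this preserves exclusiveness, and it does not change the graph $G$. In the modified system, $\run_q\,\#^\omega$ is an infinite local run of $q$ with $t' \in \Owns(\run_q\,\#^\omega)$, so Lemma~\ref{lem:cond-inf} applies directly and yields a process-fair run $\run$ with finite projection on $p$. Erasing every $\#$ from $\run$ then gives a process-fair run of the original system (the $\#$ actions do not change configurations, and $q$ is genuinely blocked at $s_q$ once the $\#$'s are removed). The whole argument fits in a paragraph.

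What your approach buys is self-containment: it does not rely on modifying the system, and it makes the finite-$\run_q$ case explicit. What the paper's approach buys is brevity and the avoidance of duplicated case analysis; it also makes transparent that Lemma~\ref{lem:cond-stop} is really just a degenerate instance of Lemma~\ref{lem:cond-inf}.
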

\begin{proof}
	Let $s_q$ be the state reached by $\run_q$, we add a self-loop on it with a fresh letter $\#$. As there are no other outgoing transitions from $s_q$ this does not break the exclusiveness. It does not change $G$ either.
	Then $\run_q \#^\omega$ is an infinite run acquiring $t'$ and never releasing it.
	
	Hence by Lemma~\ref{lem:cond-inf}, there is a "process-fair" run $\run$ in this new system whose projection on $p$ is finite. Let $h$ be the morphism such that $h(\#) = \epsilon$ and $h(a) = a$ for all other letters $a$. 
	Then $h(\run)$ is a "process-fair" run of the original system: it is a run as $\#$ does not change the configuration, meaning that all actions of $h(\run)$ can be executed. For the same reason, if a process $p'$ other than $q$ only has finitely many actions in $h(\run)$, then the same is true in $\run$, thus there is a point after which no configuration allows $p'$ to move in $\run$, and thus in $h(\run)$ as well. As for $q$, either it only executes $\#$ from some point on, meaning it has reached $s_q$ and will be immobilised in $h(\run)$, or it never executes any $\#$, in which case $h(\run) = \run$ and it follows the same configurations in both.
\end{proof}

\begin{lemma}\label{lem:exclusiveGraph}
	There is a "process-fair" run whose projection on $p$ is finite if and only if there is a local run $\run_p$ of $p$ leading to a state where all outgoing transitions take some lock $t$ and either 
	
	\begin{enumerate}		
		\item\label{Cx1} $p$ has a local run leading to a state with no outgoing transitions.
		
		\item\label{Cx2} or there is a path from $t$ to a cycle in $G$

		\item\label{Cx3} or there is a path in $G$ from $t$ to some lock $t'$ and there is a process $q$ with a local run $\run_q$ with an 
		"infinitary pattern" with $t' \in \Owns(\run_q)$.

		\item\label{Cx4} or there is a path in $G$ from $t$ to some lock $t'$ and there is a process $q$ with a local run $\run_q$ such that $t' \in \Owns(\run_q)$ and leading to a state with no outgoing transitions.
	\end{enumerate}
\end{lemma}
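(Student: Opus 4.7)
The plan is to handle the two directions separately. The $(\Leftarrow)$ implication will reduce to direct invocation of the preceding three lemmas, one per condition, while the $(\Rightarrow)$ implication will require building a path in the graph $G$ by backward tracing from the blocked state of $p$.

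For the $(\Leftarrow)$ direction I would proceed by case analysis on which of the four conditions holds. If condition~\ref{Cx1} holds, I would execute $p$ to its dead state and then invoke Lemma~\ref{lem:fair-exclusive} to extend the prefix to a process-fair run in which $p$ cannot move. For conditions~\ref{Cx2}, \ref{Cx3}, and~\ref{Cx4}, the assumption that $\run_p$ reaches a state whose outgoing transitions all acquire the lock $t$ immediately gives $p$ a reachable transition taking $t$, which is precisely the hypothesis needed to invoke Lemmas~\ref{lem:cond-cycle},~\ref{lem:cond-inf}, and~\ref{lem:cond-stop} respectively.

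For the $(\Rightarrow)$ direction, I would take a process-fair global run $\run$ whose projection $\run_p$ on $\Sigma_p$ is finite, and examine the state $s$ in which $\run_p$ stops. If $s$ has no outgoing transition, condition~\ref{Cx1} holds immediately. Otherwise, process-fairness forces every outgoing transition of $s$ to be permanently disabled after some point in $\run$; but $\nop$ and $\rel{}$ transitions at $s$ would always be enabled by soundness, so every outgoing transition of $s$ must acquire a lock, and by exclusiveness they all acquire the same lock $t$. Since $t$ cannot ever be free again, some process $q$ must hold $t$ forever (a handover between two processes would require $t$ to be momentarily free).

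The heart of the argument is then to build a path in $G$ starting at $t$ by iterated backward tracing. Given a lock $t_k$ which is eventually never free in $\run$, I would pick a process $q$ that ultimately holds $t_k$ forever, and split into three cases according to $q$'s projection $\run_q$: if $\run_q$ is infinite then $t_k \in \Owns(\run_q)$ in the infinitary sense, giving condition~\ref{Cx3} with $t' = t_k$; if $\run_q$ is finite and ends in a dead state, I get condition~\ref{Cx4} with $t' = t_k$; otherwise $q$ is blocked at a state whose outgoing transitions all acquire some common lock $t_{k+1}$ (by exclusiveness), and since we are in a 2LSS with $q$ holding $t_k$ but not $t_{k+1}$, we have $\Owns(\run_q) = \set{t_k}$, so $\run_q$ witnesses an edge $t_k \edge{q} t_{k+1}$ in $G$, and $t_{k+1}$ must also be permanently unfree or $q$ would eventually move. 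With only finitely many locks, the iteration either terminates in one of the two terminal cases, yielding~\ref{Cx3} or~\ref{Cx4}, or eventually revisits a previously seen lock, producing a cycle reachable from $t$ and thus condition~\ref{Cx2}. The main obstacle will be setting up this iteration cleanly and, at each step, carefully justifying that $\Owns(\run_q)$ is exactly $\set{t_k}$, which is where the 2-lock bound and exclusiveness both become essential.
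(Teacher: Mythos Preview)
Your proposal is correct and follows essentially the same approach as the paper: the backward direction dispatches each condition to the matching auxiliary lemma (Lemma~\ref{lem:fair-exclusive} for~\ref{Cx1}, Lemmas~\ref{lem:cond-cycle}, \ref{lem:cond-inf}, \ref{lem:cond-stop} for~\ref{Cx2}, \ref{Cx3}, \ref{Cx4}), and the forward direction builds a path in $G$ by repeatedly tracing who permanently holds the current lock, terminating either in an infinite holder (\ref{Cx3}), a dead-state holder (\ref{Cx4}), or by pigeonhole a cycle (\ref{Cx2}). Your justification that $\Owns(\run_q)=\{t_k\}$ via the two-lock bound and exclusiveness is exactly the point the paper relies on, and your pairing of conditions with lemmas is in fact cleaner than the paper's text, which has the references for~\ref{Cx3}/\ref{Cx4} (and~\ref{Cx2}/\ref{Cx4} in the forward direction) visibly swapped.
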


\begin{proof}
	We start with the left-to-right implication: Say there is a run $\run$ whose projection on $p$ is finite. For each process $p' \in \Proc$ let $\run_{p'}$ be its local run. 
	
	Then $\run_p$ has to end in a state where all available transitions acquire a
	lock $t$. If there are no transitions at all, condition~\ref{Cx1} is
	satisfied. If there is at least one such transition, then $t$ is held
	forever by some other process $p_1$.
	
	We construct a path $t = t_0 \edge{p_0} t_1 \edge{p_1} \cdots$ in $G$ so that all $t_i$ are held indefinitely by some process after some point in the run.
	Say we already constructed those up to $i$.
	
	There is a process $p_{i}$ holding $t_i$ indefinitely.
	If $\run_{p_{i}}$ is infinite, then condition~\ref{Cx3} is satisfied. Otherwise, $\run_{p_{i}}$ is finite, and with a "finitary pattern" such that $t_i \in \Owns(\run_{p_i})$.
	
	If this local run ends up in a state with no outgoing transition then condition~\ref{Cx2} is satisfied, otherwise it must have no choice but to acquire some lock $t'_{i+1}$.
	Hence we construct an infinite path
	$t'_0 \edge{p'_0} t'_1 \edge{p'_1} \cdots$ in $G$.
	
	The set of processes is finite, hence there exist 
	$i < j$ such that $t_i = t_j$, meaning we have reached a cycle. Thus condition~\ref{Cx4} is satisfied.
	
	For the other direction, suppose there exists $t$ as in the statement of the lemma, so that one of the conditions is satisfied.
	
	If condition~\ref{Cx1} is satisfied, then we have a finite run $\run_p$ leading to a state with no outgoing transition. We execute it and then prolong it into a global "process-fair" run by choosing a process uniformly at random and executing one of its available actions if there is any (similarly to the proof of \ref{lem:fair-exclusive}). 
	We obtain a "process-fair" run in which $p$ only has finitely many actions. 
	If condition~\ref{Cx2} is satisfied then we have the result by Lemma~\ref{lem:cond-cycle}.
	If condition~\ref{Cx3} is satisfied then we have the result by Lemma~\ref{lem:cond-stop}.
	If condition~\ref{Cx4} is satisfied then we have the result by Lemma~\ref{lem:cond-inf}.
\end{proof}

To conclude the proof of Proposition~\ref{prop:ptime-exclusive}, by Lemma~\ref{lem:exclusiveGraph}, we only have to check the four conditions listed in its statement.	
Here is our algorithm: 

We start by looking, in the transition system of process $p$, for a reachable local state with no outgoing transition. If there is one, we accept.

Then we compute all pairs $(q, t)$ such that either there is an infinite run of process $q$ keeping $t$ indefinitely from some point on or there is a run $\run_q$ with $t \in \Owns(\run_q)$ leading to a state with no outgoing transitions. 
As our system is "sound", the set of locks a process has is determined by its state. Let $\ownfunc_p$ be the function described in Definition~\ref{def:soundness}.
Then we compute all pairs $(t, t')$ such that some process $p'$ has a reachable state $s$ with $\ownfunc_{p'}(s) = \set{t}$ and all outgoing transitions of $s$ acquiring $t'$.
We obtain the edges of $G$.

For both locks of $p'$, we check that there is a reachable transition acquiring it, and there is a path in $G$ to either a cycle or to a $t$ from one of the pairs $(q,t)$ computed above. If it is the case for one of them, we accept, otherwise we reject.
This can all be done in polynomial time, proving the proposition.

\begin{restatable}{proposition}{PTIMEEXCLUSIVE}
	\label{prop:ptime-exclusive}
	The "process deadlock problem" is in \PTIME\ for "sound" "exclusive" "2LSS". 
\end{restatable}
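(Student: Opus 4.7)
The plan is to read off a polynomial-time decision procedure directly from Lemma~\ref{lem:exclusiveGraph}, which characterizes the existence of a "process-fair" run whose projection on $p$ is finite by four conditions on a reachable lock-acquiring transition of $p$. Each condition is a reachability/cycle question either inside the transition system of a single process or in the graph $G$, so if all the auxiliary data can be computed in polynomial time, we are done.

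First, I would exploit \emph{soundness} to compute, by DFS on each $\aut_{p'}$, the function $\ownfunc_{p'}$ together with the reachable fragment of $\aut_{p'}$; this is polynomial as noted after Definition~\ref{def:soundness}. Using this, I would construct the edges of $G$ directly: for every reachable state $s$ of every process $p'$ with $\ownfunc_{p'}(s)=\set{t}$ and all outgoing transitions of $s$ acquiring the same lock $t'$ (exclusiveness makes ``all outgoing transitions acquire the same $t'$'' well-defined), insert the edge $t \edge{p'} t'$. The number of edges is bounded by the total number of reachable states, so the construction is polynomial.

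Next, I would precompute, for each lock $t'$, two flags needed for conditions~\ref{Cx3} and~\ref{Cx4}: (a) some $\aut_q$ has a reachable state with no outgoing transition whose $\ownfunc_q$-value contains $t'$, and (b) some $\aut_q$ has a non-trivial SCC of reachable states in each of which $t'$ is owned. Flag~(a) is a single reachability query; flag~(b) captures exactly the ``infinite run eventually keeping $t'$ forever'' case and is a Tarjan-style SCC computation on each $\aut_q$ restricted to states owning $t'$. Both are polynomial.

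Finally, for each lock $t$ that is acquired by some reachable transition of $p$ (readable from $\aut_p$), I would check via standard graph algorithms on $G$ whether $t$ can reach a cycle (condition~\ref{Cx2}) or a lock marked by flag (a) or (b) (conditions~\ref{Cx3}/\ref{Cx4}); separately, I would check in $\aut_p$ whether $p$ has a reachable state with no outgoing transition (condition~\ref{Cx1}). Cycle-reachability reduces to SCC computation plus BFS, and set-reachability is plain BFS. The algorithm accepts iff some such $t$ satisfies one of the four conditions, or condition~\ref{Cx1} holds; correctness is immediate from Lemma~\ref{lem:exclusiveGraph}. There is no genuine obstacle left here: the hard work was absorbed into the lemma, and what remains is a routine polynomial-time combination of DFS, SCC, and BFS.
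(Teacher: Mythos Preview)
Your proposal is correct and follows essentially the same approach as the paper: both invoke Lemma~\ref{lem:exclusiveGraph}, use soundness to compute the $\ownfunc$ functions and thereby the edges of $G$, precompute the pairs $(q,t')$ witnessing conditions~\ref{Cx3}/\ref{Cx4}, and then reduce everything to reachability and cycle detection in $G$ and in the individual transition systems. Your write-up is slightly more explicit about the underlying graph algorithms (Tarjan, BFS), but the algorithm is the same.
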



\subsection{\NP-hardness for general 2LSS}

By contrast, when we lift the "exclusive" requirement, the problem becomes \NP-hard (and \NP-complete, as we will see later).

\begin{proposition}
	\label{prop:NPhard2LSS}
	The "process deadlock problem" is \NP-hard for "sound" "2LSS".
\end{proposition}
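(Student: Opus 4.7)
The plan is to give a polynomial-time reduction from 3-SAT. The key observation is that, without exclusivity, a single state of a process may simultaneously offer transitions acquiring either of the process's two locks, so each such state is a genuine binary nondeterministic choice; this is exactly what lets us encode a Boolean guess of a truth assignment, something the ``exclusive'' \PTIME~algorithm above is precisely designed to rule out.

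Given a formula $\phi$ with variables $x_1,\ldots,x_n$ and clauses $C_1,\ldots,C_m$, I introduce for each variable $x_i$ two locks $t_i^+, t_i^-$ and a process $V_i$ whose initial state has two outgoing transitions, $\get{t_i^+}$ and $\get{t_i^-}$, each leading to a state with no outgoing transition. Hence $V_i$ commits to a truth value and keeps the corresponding lock forever. For each clause $C_j=\ell_{j,1}\lor\ell_{j,2}\lor\ell_{j,3}$ I add a fresh lock $c_j$ and three checker processes $L_{j,1},L_{j,2},L_{j,3}$. The checker $L_{j,k}$ for a positive literal $x_i$ uses the two locks $t_i^-$ and $c_j$; its transitions first acquire $t_i^-$, then acquire $c_j$, then release $t_i^-$, then sink to a state holding only $c_j$. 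I also give $L_{j,k}$ a branch that releases $t_i^-$ and moves to a dead state, so that a loser racing for $c_j$ can step aside instead of hoarding $t_i^-$; negative literals are handled symmetrically using $t_i^+$. By construction, some $L_{j,k}$ can reach its sink and thus pin $c_j$ forever iff literal $\ell_{j,k}$ is true under the assignment chosen by the $V_i$'s, and all of $c_1,\ldots,c_m$ can be simultaneously pinned iff $\phi$ is satisfiable.

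To convert \emph{``all $c_j$ eventually held forever''} into a single-lock condition that can block a designated process $p$, I assemble a pipeline of two-lock auxiliary processes that conjoin the $c_j$ into one fresh lock $u$. The pipeline uses fresh chain locks $e_1,\ldots,e_m = u$; stage $j$ is a small cluster of two-lock processes engineered so that, under process-fairness, $e_j$ is held forever iff both $c_j$ and $e_{j-1}$ are held forever (with $e_0$ treated as a constant held lock). The target process $p$ itself then has a single transition $\get{u}$ into a sink and nothing else, so the projection of any run on $p$ is finite iff $u$ is held forever.

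The easy direction is then routine: a satisfying assignment induces a process-fair run pinning every $c_j$ and hence $u$, which blocks $p$. The harder direction, and the main obstacle, is completeness: I must show that any process-fair run blocking $p$ forces, through the pipeline, every $c_j$ to be pinned, and therefore exposes a satisfying assignment via the $V_i$'s choices. The delicate point is the design of each pipeline stage, because with only two locks per process and no exclusivity available to force choices, it must be \emph{process-fairness alone} that rules out a spurious pinning of some $e_j$ when the corresponding $c_j$ is not truly held forever; I resolve this by giving each stage a minimal branching structure in which any infinite refusal of the ``release'' transition would make an action infinitely available at a non-blocked state, contradicting process-fairness.
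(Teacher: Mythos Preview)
Your reduction has a genuine correctness bug in the checker processes, and a genuine incompleteness in the pipeline.

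\textbf{The checker bug.} Your $L_{j,k}$ for a positive literal $x_i$ acquires $t_i^-$, then $c_j$, then \emph{releases} $t_i^-$ and sinks holding $c_j$. But nothing forces $V_i$ to have committed before $L_{j,k}$ runs. Concretely, take $\phi = x_1 \land \neg x_1$. The checker for $C_1 = x_1$ grabs $t_1^-$, then $c_1$, then releases $t_1^-$; the checker for $C_2 = \neg x_1$ grabs $t_1^+$, then $c_2$, then releases $t_1^+$. Now both $c_1$ and $c_2$ are pinned forever, $V_1$ is free to take either lock, and your pipeline (however it is built) will block $p$, although $\phi$ is unsatisfiable. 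The ``iff'' you claim for the checkers fails precisely because pinning $c_j$ is a one-shot event while the assignment is only fixed asymptotically. If instead you have $L_{j,k}$ keep $t_i^-$ forever, the reduction breaks in the other direction: two clauses whose only satisfied literal is the same $x_i$ cannot both be pinned.

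\textbf{The pipeline gap.} You identify the AND-pipeline as the main obstacle and then do not give it. ``A minimal branching structure in which any infinite refusal of the release transition would make an action infinitely available'' is a description of the constraint, not a construction meeting it. A two-lock gadget with the exact biconditional ``$e_j$ held forever iff both $c_j$ and $e_{j-1}$ held forever'' is not obvious, and you need it $m$ times without interference.

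\textbf{What the paper does differently.} The paper avoids both problems with one idea: replace one-shot grabs by \emph{cyclic} gadgets whose correctness is an asymptotic ``free infinitely often'' condition. The literal process for $\ell_i^j$ does not grab-and-release the literal lock once; it takes $t(C_i)$ and then \emph{forever alternates} taking and releasing $t(\ell_i^j)$, with the option of dropping $t(C_i)$ at each round. Under process-fairness, it holds $t(C_i)$ forever iff $t(\ell_i^j)$ is free infinitely often, which is immune to the timing race above and lets many literal processes share the same literal lock. The conjunction over clauses is then handled without any pipeline: a single probe lock $t'$ is targeted by one tiny process per clause ($\get{t(C_i)}$ then $\get{t'}$), and the process $p'$ holding $t$ must cycle on $t'$; if any $t(C_i)$ is ever available, the corresponding clause process eventually seizes $t'$ and kills $p'$'s cycle. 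This collapses your $m$-stage AND-chain to a single shared lock. If you want to repair your approach, the repair is exactly this: make the checkers cyclic rather than one-shot, and replace the pipeline by a single contended lock.
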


\begin{proof}
	
	\begin{figure}
		\begin{tikzpicture}[node distance=1.8cm,auto,>= triangle
	45,scale=.6]
	\tikzstyle{initial}= [initial by arrow,initial text=,initial
	distance=.7cm, initial where= left]
	\tikzstyle{accepting}= [accepting by arrow,accepting text=,accepting
	distance=.7cm,accepting where =right]
	
	\node[state,initial, minimum size=15pt] (p1) at (0,0) {};
	\node[state, minimum size=15pt] (p2) at (3,0) {};
	
	\node[state,initial, minimum size=15pt] (p'1) at (6,0) {1};
	\node[state, minimum size=15pt] (p'2) at (9,0) {2};
	\node[state, minimum size=15pt] (p'3) at (12,0) {3};

	\node[state, minimum size=15pt, initial] (Ci1) at (0,-3) {};
	\node[state, minimum size=15pt] (Ci2) at (3,-3) {};
	\node[state, minimum size=15pt] (Ci3) at (6,-3) {};
	
	\node[state, minimum size=15pt, initial] (li1) at (0,-6) {};
	\node[state, minimum size=15pt] (li2) at (3,-6) {};
	\node[state, minimum size=15pt] (li3) at (6,-6) {};

	\node[state, minimum size=15pt, initial] (xi1) at (9,-4) {};
	\node[state, minimum size=15pt] (xi2) at (12,-3) {};
	\node[state, minimum size=15pt] (xi3) at (12,-5) {};
	
	\node (x) at (8.4,-5) {\Large\textbf{$p^x_k$}};
	\node (l) at (-0.6,-7) {\Large\textbf{$p^\ell_{i,j}$}};
	\node (C) at (-0.6,-4) {\Large\textbf{$p^C_i$}};
	\node (p') at (5.4,-1) {\Large\textbf{$p'$}};
	\node (p) at (-0.6,-1) {\Large\textbf{$p$}};
	
	\path[->] 	
	(p1) edge node[above] {$\get{t}$} (p2)
	(Ci1) edge node[above] {$\get{t(C_i)}$} (Ci2)
	(Ci2) edge node[above] {$\get{t'}$} (Ci3)

	(xi1) edge node[below] {$\get{t(x_k)}$} (xi3)
	(xi1) edge node[above=10pt] {$\get{t(\neg x_k)}$} (xi2)
	;
	\path[->, bend right=20] 
	(p'1) edge node[below] {$\get{t}$} (p'2)
	(p'2) edge node[above] {$\rel{t}$} (p'1)
	(p'2) edge node[below] {$\get{t'}$} (p'3)
	(p'3) edge node[above] {$\rel{t'}$} (p'2)
	
	(li1) edge node[below] {$\get{t(C_i)}$} (li2)
	(li2) edge node[above] {$\rel{t(C_i)}$} (li1)
	(li2) edge node[below] {$\get{t(\ell_{i,j})}$} (li3)
	(li3) edge node[above] {$\rel{t(\ell_{i,j})}$} (li2)
	;
	\path[->, loop above] (p2) edge (p2);
\end{tikzpicture}
		\caption{Processes for the reduction in Proposition~\ref{prop:NPhard2LSS}}
		\label{fig:NPhard2LSS}
	\end{figure}
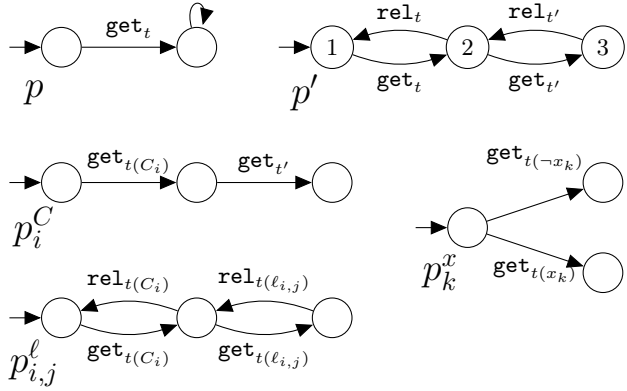
	
	We reduce from the 3SAT problem. 
	We use a set of variables $x_1, \ldots, x_n$.
	Let $\phi = \bigwedge_{i=1}^m C_i$ with for each $i$, $C_i = \ell_i^1 \lor \ell_i^2 \lor \ell_i^3$ with $\ell_i^j \in \set{x_k, \neg x_k \mid 1 \leq k \leq n}$.
	
	We construct a system with processes $\Proc = \set{p,p'} \cup \set{p^C_i \mid 1 \leq i \leq m} \cup \set{p^\ell_{i,j} \mid 1 \leq i \leq m, 1\leq j \leq 3} \cup \set{p^x_k \mid 1 \leq k \leq n}$.
	We also use locks $T = \set{t,t'} \cup \set{t(C_i) \mid 1 \leq i \leq m} \cup \set{t(x_k), t(\neg x_k) \mid 1 \leq k \leq n}$. 
	The transition systems of these process are described in Figure~\ref{fig:NPhard2LSS}. 
	
	In order to block process $p$ we need to block it in its first state by having another process keep $t$ forever.
	As a matter of fact, the only other process accessing $t$ is $p'$.
	As a consequence, a "process-fair" run blocks $p$ if and only if $p'$ eventually keeps $t$ forever.
	
	Consider such a run $\run$. Then eventually $p'$ has to stop visiting its
	state $1$. Furthermore, as $\run$ is "process-fair", $p'$ can never stay
	indefinitely in one of the other two states as it is always possible to
	execute a $\rel{}$ action. Hence $p'$ goes through states $2$ and $3$
	infinitely many times, meaning it takes and releases $t'$ infinitely often.
	
	This implies that none of the $p^C_i$ keep $t'$ indefinitely, which is only possible if all the $t(C_i)$ are taken and never released by other processes (if some $t(C_i)$ is free infinitely often, as $\run$ is "process-fair" $p^C_i$ has to take $t(C_i)$ at some point, and then $t'$ cannot be free infinitely often as $p^C_i$ would have to take it eventually).
	
	As a consequence, for each $C_i$ there has to be a $\ell_i^j$ such that $p_{i,j}^\ell$ keeps $t(C_i)$ forever, which is only possible if $t(\ell_{i,j})$ is free infinitely often.
	
	This means that the process $p^x_k$ (with $x_k$ the variable appearing in $\ell_{i,j}$) must have taken the lock associated with the negation of $\ell_i^j$ (it cannot stay in its initial state as the run is "process-fair" and $\ell_i^j$ is free infinitely often). 
	
	In conclusion, exactly one of $t(x_k), t(\neg x_k)$ is free infinitely often for each $k$, and for each clause $C_i$ there is a literal in $C_i$ whose lock is free infinitely often.
	Thus the valuation mapping each $x_k$ to $\top$ if $x_k$ is free infinitely often and $\bot$ otherwise satisfies $\phi$.
	
	Now suppose $\phi$ is satisfied by some valuation $\nu$.
	We construct the following run: 
	First of all for all $k$ process $p^x_k$ takes $t(\neg x_k)$ if $\nu(x_k) = \top$ and $t(x_k)$ otherwise.
	Then for each $i$ we select some $j_i$ such that $\nu$ satisfies $\ell_i^{j}$ and have process $p^\ell_{i,j_i}$ take $C_i$.
	Finally, $p'$ takes $t$.
	
	We then repeat the following steps indefinitely: one by one each $p^\ell_{i,j_i}$ takes $t(\ell_{i}^{j})$ and releases it, then $p'$ takes and releases $t'$.
	This is all possible as all $t(\ell_{i}^{j_i})$ are free (those $\ell_i^j$ are satisfied by $\nu$ hence the corresponding $p^x_k$ took their negations) and so is $t'$ (none of the $p^{C}_i$ ever moves thus they do not take $t'$).
	
	This run is "process-fair" as the processes that are eventually blocked are the $p^x_k$ (which end up in states with no outgoing transitions), the $p^C_i$ (which need $C_i$ to advance, but those locks are never released) and $p$ (which needs $t$ to move on, but $t$ is kept forever by $p'$).
	This concludes our reduction.
\end{proof}

\section{Regular objectives}
\label{sec:reg-objectives}

\subsection{The problem is \PSPACE-complete in general}

In order to justify our approach, we prove that the general verification of "LSS" against "regular objectives" is \PSPACE-complete, even with strong restrictions on the system.

\begin{restatable}{proposition}{PSPACELSS}
	\label{prop:PSPACE-LSS}
	The "regular verification problem" is \PSPACE-complete for "LSS" in general.
	\PSPACE-hardness already holds for the "process deadlock problem" for "sound" "exclusive" "LSS" even with a fixed number of locks per process.
\end{restatable}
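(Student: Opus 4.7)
For the \PSPACE upper bound on the "regular verification problem", I would form the synchronous product of the "LSS" with the objective automata $(\B_p)_{p\in\Proc}$. A compound configuration records, for each process, its local state and its held-lock set, together with the current state of each $\B_p$; it admits a polynomial-size description, even though the full product graph is of exponential size. The answer to the "regular verification problem" is positive iff the resulting "ELA" has a non-empty language, and "ELA" non-emptiness is decidable in \PSPACE when states admit polynomial-size descriptions: guess a set $I$ of states compatible with $\phi$, then verify on the fly both reachability of $I$ and the existence of an $I$-internal cycle covering all of $I$. The "process-fair" restriction is a strong-fairness condition that can be folded into $\phi$ by augmenting each process with one bit tracking whether it currently has an enabled action in the compound state.

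For the \PSPACE lower bound on the "process deadlock problem" in "sound" "exclusive" "LSS" with a bounded number of locks per process, I would reduce from the non-emptiness of the intersection of $n$ DFAs $A_1,\ldots,A_n$, which is \PSPACE-complete already over a binary alphabet. The plan is to construct a "sound" "exclusive" "LSS" together with a distinguished process $p$ whose unique outgoing transition is $\get{t^\star}$ for a fresh lock $t^\star$, so that $p$ is blocked along a "process-fair" run iff $t^\star$ is held forever by another process. That other process is a controller $p_c$ which, once it has acquired $t^\star$, engages in an infinite interaction with a chain of simulator processes $p_1,\ldots,p_n$, each storing the current state of its DFA in its finite control. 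Each pair of consecutive processes in the chain shares a small constant number of locks implementing a baton-passing handshake that transmits the current letter; as a consequence, no process uses more than a fixed number of locks, independent of $n$. The "exclusive" property is enforced by splitting every ``choose a letter'' point into single-$\get{}$ states. In a "process-fair" run the controller cannot just sit still holding $t^\star$, because its control will always offer an enabled non-$t^\star$ action, so the chain keeps producing an infinite input word on which every $A_i$ must be able to run; after adding accepting self-loops to the final states of the DFAs, such a run exists iff $\bigcap_i L(A_i) \neq \emptyset$.

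The main obstacle will be engineering the handshake gadget so as to simultaneously guarantee (i) an $O(1)$ bound on the number of locks per process, (ii) that every state is "exclusive", (iii) that the system is "sound", and (iv) a faithful forcing of the simulators to step in lockstep with the controller's letter choices in any "process-fair" schedule, rather than allowing some parasitic infinite schedule in which $t^\star$ is held forever independently of the DFAs. The key idea will be to arrange the handshake so that whenever a process is waiting for input, exactly one of its two alternative $\get{}$ transitions has its target lock free, which simultaneously justifies exclusiveness and forces the intended behaviour; correctness of the reduction then follows from a straightforward simulation argument in both directions.
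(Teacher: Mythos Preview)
Your \PSPACE\ upper bound is essentially the paper's: both guess an ultimately periodic global run in the product of the system with the objective automata and verify the Emerson--Lei condition on the loop; your explicit treatment of process-fairness via an ``enabled'' flag is a sound refinement of the paper's one-sentence sketch.

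The lower bound plan has a genuine gap in the acceptance encoding. You assert that after ``adding accepting self-loops to the final states of the DFAs'' the existence of an infinite word on which every $A_i$ can run is equivalent to $\bigcap_i L(A_i)\neq\emptyset$. For complete DFAs this is vacuous (every infinite word has a run), and if you restrict to DFAs whose only cycles are those new self-loops then common words have length at most the number of states, so intersection non-emptiness collapses to \NP. More fundamentally, in your architecture blocking $p$ only requires that the controller hold $t^\star$ along \emph{some} process-fair continuation; a continuation in which a simulator loops forever in a non-accepting component, or in which a simulator takes the wrong $\nop$-branch and deadlocks (which, with your split into single-$\get{t}$ states, nothing prevents), blocks $p$ just as well as one witnessing a word in the intersection. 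Thus the ``only if'' direction of your reduction fails. The paper handles this by an explicitly two-stage construction: each simulator $p_i$ gets an action $end^i$ enabled only from accepting states of $A_i$, and a separate cascade gadget with fresh locks $\ell_1,\dots,\ell_n$ and a fresh target process $q$ is arranged so that $q$ is blockable precisely when every $p_i$ has fired $end^i$; the forced lockstep of the simulators is proved by a dedicated counting lemma showing that at every step exactly one lock is free and hence exactly one process can move. Without an analogous acceptance-detection stage and synchronisation argument, your reduction is not sound.
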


	The \PSPACE~upper bound is easy to obtain: It suffices to guess a state $s_p$ in each $\aut_p$ and $s'_p$ in each $\objaut{p}$, and then guess a sequence of letters in $\S$ while keeping track of the states reached by that sequence in the $\aut_p$ and $\objaut{p}$. 

If we reach a configuration with each $\aut_p$ in state $s_p$ and each $\objaut{p}$ in $s'_p$, we start memorising the set of visited states in each $\objaut{p}$. 
If we reach that configuration again, we stop and accept if and only if the set of visited states in the $\objaut{p}$ satisfies $\phi$.
This comes down to guessing an ultimately periodic run in the global system and checking that it satisfies the objective.

The difficulty is to obtain the \PSPACE-hardness with a fixed number of locks per process. To do so we reduce the emptiness problem for the intersection of a set of deterministic automata. 

Without loss of generality we will assume that there are at least two automata, that they are all over alphabet $\set{0,1}$, and that their languages are all included in $11(00+01)^*$: we can always apply a small transformation to each automaton so that, if its language was $\mathcal{L}$, it becomes $11 h(\mathcal{L})$ with $h$ the morphism mapping $0$ to $00$ and $1$ to $01$. The intersection of those languages is empty if and only if the intersection of the original languages was empty.

Let $\aut_1, \cdots, \aut_n$ (with $n \geq 2$) be automata, with, for each $1 \leq i \leq n$, $\aut_i = (S_i, \set{0,1}, \delta_i, init_i, F_i)$.
We construct a "sound" "exclusive" "LSS" $\tss$ as follows:

For each $1 \leq i \leq n$ we have a process $p_i$ which is in charge of simulating $\aut_i$.	
The set of locks is $T = \set{0_i, 1_i, key_i \mid 1 \leq i \leq n}$.
For all $i$, $p_i$ accesses locks $0_i, 1_i, key_i$, as well as $0_{i+1}, 1_{i+1}, key_{i+1}$ if $i \leq n-1$ and $0_{1}, 1_1$ if $i =n$. Thus a process uses at most 6 locks in total.

For all $1 \leq i \leq n$ and $t$ accessed by $p_i$, we have two actions $\get{t}^i$ and $\rel{t}^i$, with which $p_i$ acquires and releases lock $t$, as well as actions $\nop^i$ and $end^i$ with no effect on locks.

In the proof the following local sequences will be important:

\[\send(0) = \rel{0_n}^n \get{0_{0}}^n \rel{1_n}^n \get{1_0}^n \rel{0_{0}}^n \get{0_n}^n \rel{1_{0}}^n \get{1_n}^n\]

\[\rec_i(0) = \nop^i \get{0_{i+1}}^i \rel{0_{i}}^i \get{1_{i+1}}^i \rel{1_{i}}^i \get{0_{i}}^i \rel{0_{i+1}}^i \get{1_{i}}^i \rel{1_{i+1}}^i\]

$\send(1)$ and $\rec_i(1)$ are defined analogously, by replacing $0$ by $1$ and $1$ by $0$ everywhere.

The following global sequences will be useful as well:

\[ \acqs(0) = \get{0_{n}}^{n-1}\rel{0_{n-1}}^{n-1} \cdots  \get{0_{1}}^{1}\rel{0_{0}}^{1}\]

\[ \rels(0) = \get{0_{0}}^{1}\rel{0_{1}}^{1} \cdots  \get{0_{n-1}}^{n-1}\rel{0_{n}}^{n-1}\]

\[ \NOP = \nop^{1} \cdots  \nop^{n-1}\]

$\acqs(1)$ and $\rels(1)$ are defined analogously, by replacing $0$ by $1$ and $1$ by $0$ everywhere.	

The transition system of each process $p_i$ is designed as follows: 
We start with $\aut_i$, and we replace every transition labelled $0$ with a sequence of transitions labelled by actions of $\send(0)$ if $i=n$, and $\rec_i(0)$ if $1 \leq i \leq n-1$ (there is at least one such $i$ as $n \geq 2$).

Furthermore we add a few transitions so that each $p_i$ with $i \leq n-1$ executes $\start_i = \get{key_{i+1}}^i \get{0_i}^i \get{1_i}^i  \get{key_i}^i \rel{key_{i+1}}^i$ before entering the initial state of $\aut_i$.
If $i=n$ that sequence is $\start_n = \get{a_n}^n \get{b_n}^n \get{key_n}^n$.
We also add a transition reading $end^i$ from all states of $F_i$ to a state $stop_i$ with no outgoing transition.

The objective is that the action $end^i$ is executed for all $1 \leq i \leq n$.

One direction is easy.
Say there is a word $u = b_1 b_2 \cdots b_m$ in the intersection of the languages of the $\aut_i$. Then we start by executing all $\start_i$ sequences for all $i$ in increasing order, and then, for each $1 \leq j \leq m$ (in increasing order), we execute the sequence of operations

\begin{align*}
	seq(b_j) = &\rel{(b_j)_n}^n \NOP~\acqs(b_j) \get{(b_j)_0}^n \rel{(1-b_j)_n}^n\\ &\acqs(1-b_j) \get{(1-b_j)_0}^n \rel{(b_j)_0}^n \rels(b_j)\\ &\get{(b_j)_n}^n\rel{(1-b_j)_0}^n \rels(1-b_j) \get{(1-b_j)_n}^n 
\end{align*}

This run projects on $p_i$ as $\start_i \rec_i(b_1) \rec_i(b_2) \cdots \rec_(b_m)$ if $i\leq n-1$ and $\start_i \send(b_1) \send(b_2) \cdots \send(b_m)$ if $i=n$. As $u$ is in the language of $\aut_i$, $p_i$ can execute this run locally.
It can be easily checked that all operations in that run are valid in the current configuration, hence this sequence can be executed.

As $u$ is accepted by all $\aut_i$, after executing the sequence above each process $p_i$ ends up in a state of $F_i$, and thus they can all execute $end^i$ one after the other.

Conversely, suppose there is some run $\run$ whose local projection $\run|_p$ on each process $p$ is ends with $end^i$.
Each $\run|_{p_i}$ must start with the execution of $\start_i$.

We prove the following lemma:

\begin{lemma}
	For all $j \in \nats$, let $\run_j$ be the shortest prefix of $\run$ whose projection on $p_n$ is $\start_n \send(b_1) \cdots \send(b_j)$. 
	Then the projection of $\run_j$ on every other $p_i$ has $\rec_i(b_1) \cdots \rec_i(b_j)$ as a suffix.
\end{lemma}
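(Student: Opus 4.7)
The proof goes by induction on $j$. The base case $j=0$ is immediate because $\rec_i(b_1)\cdots \rec_i(b_0)$ is the empty word, a suffix of every sequence.

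For the inductive step, fix $j\geq 1$ and assume the claim for $j-1$. A preliminary configuration claim is that at the end of $\run_{j-1}$, process $p_n$ holds exactly $\set{0_n, 1_n, key_n}$, each $p_i$ (for $i \leq n-1$) holds exactly $\set{0_i, 1_i, key_i}$, and all other locks are free. For $j \geq 2$ this follows from the inductive hypothesis together with the fact that each $\rec_i(b)$ and each $\send(b)$ is lock-neutral, as can be checked by a direct trace of its nine (resp.\ eight) operations. For $j=1$ the same statement at the end of $\run_0$ is established from the structure of the $\start_i$: since $\run$ eventually executes every $end^i$, process-fairness together with the chain of $\get{key_{i+1}}\cdots\rel{key_{i+1}}$ operations forces the $\start_i$'s to complete in the order $p_1, p_2, \ldots, p_n$, and at the end of $\run_0$ each $p_i$ has finished $\start_i$ and holds the three locks named above.

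I then analyze the segment $\sigma$ of $\run$ strictly after $\run_{j-1}$ up to the last action of $\run_j$; in $\sigma$, process $p_n$ performs exactly $\send(b_j)$. Assume $b_j = 0$ (the other case is symmetric). The core argument is a lock-cascade: once $p_n$ performs its first action $\rel{0_n}$, the lock $0_n$ is the only element of $\set{0_n, 1_n}$ that is free, while $p_{n-1}$ sits at the branching point between $\rec_{n-1}(0)$ and $\rec_{n-1}(1)$. Process-fairness combined with the fact that $\run$ eventually runs $end^{n-1}$ forces $p_{n-1}$ to pick $\rec_{n-1}(0)$, since the other branch would deadlock waiting for $1_n$. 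After $p_{n-1}$ executes the first three operations of $\rec_{n-1}(0)$, the lock $0_{n-1}$ is free and $p_{n-2}$ becomes unblocked; the same reasoning propagates down to $p_1$, which releases $0_1$. Only then can $p_n$ perform its second action $\get{0_1}$. Iterating this argument over the four $\mathtt{rel}/\mathtt{get}$ pairs of $\send(0)$ — cascading alternately from $p_{n-1}$ down to $p_1$ and back up — forces each $p_i$ to execute the full nine-operation sequence $\rec_i(b_j)$ inside $\sigma$.

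It remains to exclude that $p_i$ performs additional actions in $\sigma$. After $\rec_i(b_j)$ completes, $p_i$ again holds exactly $\set{0_i, 1_i, key_i}$; however its chain-neighbour, either $p_{i+1}$ (for $i < n-1$) or $p_n$ (for $i = n-1$), now holds both of $\set{0_{i+1}, 1_{i+1}}$, so $p_i$ cannot begin any further $\rec_i(b)$ until $p_n$ begins its next $\send$. Hence $p_i$ does exactly $\rec_i(b_j)$ in $\sigma$, and combining this with the inductive hypothesis shows that the projection of $\run_j$ on $p_i$ has $\rec_i(b_1) \cdots \rec_i(b_j)$ as a suffix. The main obstacle throughout is ruling out that a process takes the ``wrong'' branch $\rec_i(1-b_j)$ or idles indefinitely; this is what the cascade argument settles, using that at each step of the cascade exactly one of the two candidate locks is free, so process-fairness together with the requirement that $\run$ reaches every $end^i$ pins down the unique correct continuation.
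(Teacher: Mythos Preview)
Your overall plan---establish the configuration at the end of $\run_{j-1}$, then argue a forced cascade---is sound, but the argument has a real gap and also leans on the wrong hypothesis.

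First, on process-fairness: the run $\run$ in this lemma is only assumed to be a run in which every $p_i$ eventually executes $end^i$; it is not assumed to be process-fair (it happens to be, since every process ends in a dead state, but you never use fairness as such). What actually does the work is simply that $\run$ cannot deadlock before every $end^i$ is read. Every place where you write ``process-fairness together with \ldots'' should just say ``since $\run$ reaches $end^i$, this branch cannot deadlock.''

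Second, and more importantly, your cascade relies on the repeated assertion that ``the other branch would deadlock waiting for $1_n$'' (and analogues further down the chain), together with the claim that ``at each step of the cascade exactly one of the two candidate locks is free.'' You never prove either statement, and they are not local facts: to see that $p_{n-1}$ choosing the wrong branch deadlocks, you must argue that $p_n$ will not release $1_n$ until it gets $0_1$, which $p_1$ will not release until it gets a lock from $p_2$, and so on around the whole cycle. Your sketch gestures at this but does not establish the invariant that makes it work. The paper's proof supplies exactly this missing piece with a counting argument: after all the $\start_i$ complete, $p_n$ alternates $\rel/\get$ (so holds $2$ or $3$ locks) and every other $p_i$ alternates $\get/\rel$ (so holds $3$ or $4$ locks); since there are $3n$ locks total, at every moment either all locks are held (and exactly one process has a $\rel$ pending) or exactly one lock is free (and every process has a $\get$ pending). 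This invariant immediately forces the entire interleaving: when the single free lock is $(b_{j})_n$, the only process that can take it is $p_{n-1}$, and it can take it only if it has committed to $\rec_{n-1}(b_j)$; otherwise no lock operation is possible at all and the run is globally stuck, contradicting that $end^{n-1}$ is reached. The argument then cascades mechanically to $p_{n-2},\ldots,p_1$ and back, with no need to separately argue deadlock for each wrong branch.

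Your ``no extra actions'' paragraph has the same weakness: the claim that after $p_i$ finishes $\rec_i(b_j)$ its neighbour holds both of $0_{i+1},1_{i+1}$ is exactly what the counting invariant gives for free (both are among the $3n$ held locks), but you have not established it. Also note that $p_i$ can always perform the leading $\nop$ of its next $\rec_i$, so ``cannot begin any further $\rec_i(b)$'' is slightly too strong; what you need (and what the counting argument gives) is that $p_i$ cannot perform any further \emph{lock} operation before $\run_j$ ends.
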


\begin{proof}
	We prove this by induction on $j$.
	For $j=0$ it is trivial.
	Let $j \in \nats$, suppose the claim is true for $j$, we show it for $j+1$.
	
	First of all note that for all $i \geq 1$, if $p_i$ has finished executing $\start_i$ then it holds $k_i$ and will never release it, hence $p_{i-1}$ either has executed $\start_{i-1}$ in full or has not begun executing it. In the second case, $p_{i-1}$ will never be able to advance, which is impossible as $\run|_p$ is not empty. As a result, after $p_n$ has executed $\start_n$, all other $p_i$ must have executed $\start_i$.
	
	Another important remark is that after executing $\start_i$, all $p_i$ alternate between a $\get{}$ and a $\rel{}$ no matter which local run they execute. While $p_n$ starts with a $\rel{}$, all other processes start with a $\get{}$. Therefore $p_n$ always holds either 2 or 3 locks, while all others always hold either 3 or 4.
	There are $n$ processes and $3n$ locks in total, hence at all times the global configuration is such that either all locks are taken and the next operation of some process $p$ is $\rel{}$ (and $\get{}$ for all others) or one lock is free and all processes have a $\get{}$ as their next operation.
	
	Now say process $p_n$ has started executing $\send(b_{j+1})$ by releasing $(b_{j+1})_n$. This means some other process must have taken $(b_{j+1})_n$, which can only be $p_{n-1}$. The only possibility is that $p_{n-1}$ then releases $(b_{j+1})_{n-1}$, which can only be taken by $p_{n-2}$, ... We must end up executing $\acqs(b_{j+1})$, which ends with $(b_{j+1})_0$ free, which can only be taken by $p_n$.
	
	By continuing this reasoning we conclude that $\run_{j+1}=\run_j seq(b_{j+1})$, proving the lemma.
\end{proof}

As all $p_i$ execute $end_i$ in $\run$, in particular $\run|_{p_n}$ ends with $end^n$, hence it is necessarily of the form $\start_n \send(b_1) \cdots \send(b_m) end^n$. 
Let $\run'$ be $\run$ where all $end^i$ have been erased.
The lemma above allows us to conclude that for all $i$, $\run'|_{p_i}$ has $\rec_i(b_1)  \cdots \rec_i(b_m)$ as a suffix. 

Recall that the languages of all $\aut_i$ are included in $11(00+01)^*$.
Moreover, we know that $b_1 \cdots b_m$ is in the language of $\aut_{n}$,
hence $b_1 = b_2 =1$. Furthermore, all $\run'|_{p_i}$ ($i<n$) are of the form $\start_i \rec_i(x_1) \cdots \rec_i(x_r)$ with $x_1 \cdots x_r$ in the language of $\aut_i$.
As the only moment a factor $11$ can appear in a word of those languages is at the beginning, for $\run'|_{p_i}$ to have $\rec_i(b_1) \cdots \rec_i(b_m)$ as a suffix, we must have $\run'|_{p_i} = \start_i \rec_i(b_1) \cdots \rec_i(b_m)$. 

As a consequence, for all $i$ we have $\run|_{p_i} = \start_i \rec_i(b_1) \cdots \rec_i(b_m) end^i$ and thus $b_1 \cdots b_m$ must be accepted by all $\aut_i$.

We have proven that this system had a run in which each $p_i$ reads $end^i$ if and only if there is a word accepted by all $\aut_i$.

As that condition is easily expressible as a "regular objective", we obtain the \PSPACE-hardness of the "regular verification problem" for "sound" "exclusive" "LSS" with 6 locks per process.
However, our goal was to prove the \PSPACE-hardness of the "process deadlock problem" for "sound" "exclusive" "LSS".

To do so, we add a process $q$ and locks $\ell_{i}$ for $1 \leq i \leq n+1$ so that the transition system of $q$ simply takes $\ell_{n}$ and then goes to a state with a self-loop executing $\nop$. We also add, for each $1 \leq i \leq n$, a sequence of transitions from $stop_i$ which take $\ell_i$, then $\ell_{i-1}$ and release $\ell_i$ if $i \geq 2$, and simply take $\ell_i$ if $i=1$, to end up in a state with no outgoing transition.

We show that there is a "process-fair" run with a finite projection on $q$ if and only if there is one in the previous "LSS" such that each $p_i$ executes $end_i$.

If the latter is true, then we just take the same run and prolong it so that each $p_i$ takes $\ell_i$. Then all $\ell_i$ are taken, and all processes need some $\ell_i$ to advance, we have reached a global deadlock (in particular the run is "process-fair", and its projection on $q$ is finite).

Conversely, suppose we have a "process-fair" run in the new "LSS" with a finite projection on $q$. Then $q$ must be blocked, which is only possible if $\ell_n$ is held forever by $p_n$, which in turn is only possible if $\ell_{n-1}$ is held forever by $p_{n-1}$... We conclude that all $p_i$ must be holding $\ell_i$ forever from some point on, and thus that they all read $end_i$.

We project that run to erase all actions getting an $\ell_i$. We obtain a run of the previous system in which every process has executed $end_i$.

As a result, the new "LSS" has a "process-fair" run in which $q$ is blocked if and only if the former "LSS" has a run in which every $p_i$ has executed $end_i$, if and only if the $\aut_i$ recognise a common word.

As a result, the "process deadlock problem" is \PSPACE-complete for "sound" "exclusive" "LSS" with 8 locks per process.


\subsection{...but \NP-complete for 2LSS}

Then we prove that the complexity falls to \NP\ when we demand that each process
uses at most two locks.

\begin{proposition}
	\label{prop:NP-2LSS}
	The "regular verification problem" is \NP-complete for "2LSS" (the lower bound holds even for "sound" "exclusive" "2LSS").
\end{proposition}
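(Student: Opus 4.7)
The plan is to prove both directions using Proposition~\ref{prop:charac-schedulable-pat} for the upper bound and a direct reduction from 3SAT for the lower bound.

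For the NP upper bound, I would design a guess-and-check algorithm. Nondeterministically guess, for each process $p$: a pattern $\Pat_p$ (a "finitary pattern" together with an end state $s_p$ of $\aut_p$, or an "infinitary pattern"), and an ``accepting signature'' $I_p \subseteq S_{\objaut{p}}$ representing the exact set of states of $\objaut{p}$ visited infinitely often by $\pad{\run_p}$ (for finite runs this collapses to a guessed final state). Since $|T_p|=2$, there are only constantly many patterns per process, so the guess is polynomially bounded. The algorithm then verifies three things: (a) $\phi$ is satisfied by the valuation sending $\varinf_{p,s}$ to $\top$ iff $s \in I_p$; (b) the family $(\Pat_p, s_p)_{p \in \Proc}$ satisfies the seven conditions of Proposition~\ref{prop:charac-schedulable-pat}, which reduces to simple local checks together with acyclicity and reachability queries in $G_{\mathit{Inf}}$; and (c) for each process $p$, that $\aut_p$ admits a local run with pattern $\Pat_p$ whose trajectory in $\objaut{p}$ visits exactly $I_p$ infinitely often (or, if finite, ends at the guessed final state of $\objaut{p}$).

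Check (c) is the crux. For each $p$, form the product of $\aut_p$, the 12-state pattern DELA $\aut_{\Pat_p}^p$ provided by Lemma~\ref{lem:DELA-pat}, and the deterministic objective automaton $\objaut{p}$. This product has polynomial size. The existence of the required local run reduces to finding a reachable SCC of the product whose projection on the pattern component satisfies the ELA formula defining $\Pat_p$ and whose projection on $\objaut{p}$ equals $I_p$ (for the finite case, plain reachability suffices). Both tasks are standard and polynomial. Once (a)--(c) hold, Proposition~\ref{prop:charac-schedulable-pat} certifies that the chosen local runs can be assembled into a "process-fair" global run satisfying the "regular objective".

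For the NP lower bound under "sound" and "exclusive" restrictions, I would reduce from 3SAT on variables $x_1, \ldots, x_n$ and clauses $C_1, \ldots, C_m$. For each variable $x_k$ introduce a process $p_k$ with two fresh locks $t_k, t'_k$. Its automaton has an initial state with two $\nop$-transitions, one to a state $T_k$ and one to a state $F_k$; from $T_k$ it loops forever with $\get{t_k}\rel{t_k}$, and from $F_k$ it loops forever with $\get{t'_k}\rel{t'_k}$. Soundness is immediate, and exclusiveness holds because the only branching is between two $\nop$-transitions, while each state carrying a $\get{}$-transition has a single outgoing transition. The objective automaton $\objaut{p_k}$ simply absorbs into state $q_T$ on the first nop into $T_k$ and into $q_F$ on the first nop into $F_k$. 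Set $\phi = \bigwedge_{i=1}^m \bigvee_{\ell \in C_i} \varinf_{p_{k(\ell)}, s(\ell)}$ with $s(x_k)=q_T$ and $s(\neg x_k)=q_F$. Since the $T_p$ sets are pairwise disjoint, any combination of local choices extends to a process-fair global run (patterns are compatible trivially), so a satisfying run exists iff the 3SAT instance is satisfiable.

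The main obstacle I foresee is justifying step (c) rigorously when $I_p$ is required to be the exact infinite-visit set rather than merely a subset; simple reachability is insufficient, so the argument must examine SCCs of the product and verify that an accepting SCC exists whose $\objaut{p}$-projection is exactly $I_p$ while its pattern-component projection fulfils the Emerson--Lei acceptance formula of $\aut_{\Pat_p}^p$. Because the pattern DELA has constant size and the objective automaton is deterministic, this SCC enumeration stays polynomial, and the remaining conditions of Proposition~\ref{prop:charac-schedulable-pat} ensure that the locally chosen runs glue into a global fair run, closing the upper bound.
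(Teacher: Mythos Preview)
Your upper bound is essentially the paper's argument: guess a pattern per process and a valuation, check the conditions of Proposition~\ref{prop:charac-schedulable-pat}, then verify locally via the product of $\aut_p$, $\objaut{p}$, and the pattern automaton of Lemma~\ref{lem:DELA-pat}. Two minor points of divergence: the paper handles finite local runs by adding $\dummy$ self-loops to $\aut_p$ on exactly those states all of whose outgoing transitions acquire a lock in $\bigcup_p \Owns_p$ (this folds Conditions~\ref{C1} and~\ref{C4} directly into the product), whereas you guess the end state and check those conditions separately---either works. For your check~(c), the paper sidesteps your SCC-projection worry entirely by simply guessing an ultimately periodic run $uv^\omega$ with $|u|,|v|$ polynomial in the product size; since Emerson--Lei emptiness always has such a short witness, this keeps everything inside one nondeterministic guess and avoids reasoning about exact projections of SCCs.

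Your lower bound is correct but genuinely different in spirit from the paper's. You encode the 3SAT clauses straight into the formula $\phi$, with processes whose lock sets are pairwise disjoint so that scheduling is trivial. The paper explicitly mentions this route and then declines it, because it shows only that the objective language is rich enough to encode SAT, not that the \emph{model} is. Instead, the paper adapts the reduction of Proposition~\ref{prop:NPhard2LSS} (process deadlock for non-exclusive 2LSS), making the two non-exclusive gadgets $p'$ and $p^\ell_{i,j}$ exclusive by splitting their $\get{}$ transitions through fresh intermediate states, and compensating by adding to the objective only the very mild requirement that those processes run forever. What this buys is a lower bound where the objective is essentially trivial (a process-deadlock condition plus a few ``this process is infinite'' clauses), so the \NP-hardness is visibly coming from the lock interactions rather than from~$\phi$. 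Your reduction establishes the proposition as stated; the paper's establishes the stronger informal claim that the complexity is intrinsic to sound exclusive 2LSS.
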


\begin{proof}
	We start with the upper bound. 
	Let $\tss = ((\aut_p)_{p\in \Proc}, T, op)$ be a "2LSS" and $((\objaut{p})_{p\in\Proc},\phi)$ a "regular objective".
	Our \NP~algorithm goes as follows:
	we guess a pattern $\Pat_p$ for each process $p$, as well as a valuation $\nu$ of the $(\varinf_{p,s})_{p\in \Proc, s \in S_p}$.
	For each $p$ let $\Owns_p$ be the set of locks kept indefinitely by a run respecting $\Pat_p$.
	
	Then we check that those patterns respect the conditions of Proposition~\ref{prop:charac-schedulable-pat} and that this valuation satisfies $\phi$ (otherwise we stop).
	We then equip each $\objaut{p}$ with the acceptance condition
	$\bigwedge_{\nu(\varinf_{p,s}) = \top} \varinf_{s} \land
	\bigwedge_{\nu(\varinf_{p,s}) = \bot} \neg\varinf_{s}$
	
	We add a self-loop labelled $\dummy$ on each state in $\aut_p$ whose outgoing transitions all acquire a lock of $\bigcup_{p \in \Proc} \Owns_{p}$.
	
	Then, for each $p$ we construct the product $\mathcal{C}$ of $\aut_p$, $\objaut{p}$ and $\aut_{\Pat_p}$ (from Lemma~\ref{lem:DELA-pat}) to obtain an "ELA" recognising runs of $p$ that match pattern $\Pat_p$ and are in the language of $\objaut{p}$.
	We guess an ultimately periodic run of the form $uv^\omega$ with $u$ and $v$ of polynomial size in the number of states of $\mathcal{C}$ and check that it is accepting (otherwise we stop). It is well-known that an "ELA" either has an empty language or accepts a run of that form.
	Then we accept.	
	
	We accept if and only if there is a valuation $\nu$ satisfying $\phi$ and a family of patterns $(\Pat_p)_{p \in \Proc}$ such that there exist local runs $(\run_p)_{p \in \Proc}$ of the processes matching those patterns and producing words whose runs in the $(\objaut{p})_{p \in \Proc}$ match $\nu$, and such that the finite ones end in states from which they can only take locks of $\Owns_p$. 
	By Proposition~\ref{prop:charac-schedulable-pat}, this is true if and only if there is a global run of the system satisfying the given objective.
	Hence the problem is in \NP.
	
	For the lower bound, we could easily translate a SAT formula into a "regular objective", with one process for each variable choosing to set it to $\top$ or $\bot$.
	
	However, we want to show that the \NP~complexity lies already in the model with no need for complicated objectives.
	By Proposition~\ref{prop:NPhard2LSS}, we know that the existence of a "process-fair" run blocking a given process $p$ is \NP-hard for "sound" "2LSS".
	However this is not the case if we are restricted to "exclusive" "2LSS".
	
	In order to prove the lower bound for "exclusive" "2LSS" we adapt the reduction from the proof of Proposition~\ref{prop:NPhard2LSS}. 
	Note that the only non-exclusive processes in Figure~\ref{fig:NPhard2LSS} are $p'$ and $p^\ell_{i,j}$. In $p'$ we add an extra state $4$ and replace the transition from $2$ to $3$ with a $\nop$ transition from $2$ to $4$ and a $\get{t'}$ transition from $4$ to $3$.
	What may then happen is that $p'$ gets stuck in $4$ because $t'$ is taken by some other process forever, which could not happen before as $p'$ always had the option of releasing a lock in $2$. To overcome this, we add to the objective that $p'$ should have an infinite run.
	We do the same thing for $p^\ell_{i,j}$, by decomposing the $\get{t(\ell^j_i)}$ into two transitions and adding the requirement that all $p^\ell_{i,j}$ should run forever.
	The proof is then exactly the same as the one for Proposition~\ref{prop:NPhard2LSS}. 
\end{proof}

%
%
	


\section{Nested locks}
\label{sec:nest}

In this section we address the verification problem for systems with a restricted lock acquisition policy.
We require that each process acquires and releases locks as if they were stored in a stack.
This is a classical restriction, as this way of managing locks is considered to be sound and suitable in many contexts.

\AP An "LSS" is ""nested"" if all its runs are such that a process can only release the lock it acquired the latest among the ones it holds.
In~\cite{Brotherson21} (Theorem~5.5) the authors considered a type of system which can be translated to our "sound" "nested" "exclusive" "LSS" and proved an \NP~upper bound on the complexity of the following problem: Is there a reachable configuration where there are some processes $p_1, \ldots, p_k \in \Proc$ and locks $t_1, \ldots, t_{k+1}=t_1 \in T$ with each $p_i$ holding lock $t_i$ and needing to get $t_{i+1}$ to keep running? We will call such configurations ""circular deadlocks"". They leave the question of a matching lower bound open.

We considerably generalise their result by proving an \NP~upper bound on the "regular verification problem" for
"nested" "LSS" (note that the problem above can be solved by guessing a configuration with such a circular deadlock and using our \NP~algorithm to check reachability of that configuration).
We then prove an \NP~lower bound on the "process deadlock
problem" for "sound" "nested" "exclusive" "LSS", thereby adding a matching \NP~lower bound to their result.

This shows that the "nested" requirement significantly improves the complexity
of the "regular verification problem". On the other hand, the \NP-hardness is
difficult to avoid: it holds even for a very restricted class of systems and for
very simple objectives.

\begin{restatable}{lemma}{STAIR}
	\label{lem:stair-dec}
	Every local run in a "nested" LSS can be decomposed as 
	\[\run=\run_0 a_1 \run_1 a_2 \cdots \run_{k-1} a_k \run_{k} \run_{k+1} \cdots\]
	
	\noindent where $a_1, \ldots, a_k$ are the actions getting a lock that is not released later in $w$.
	
	Furthermore, all $\run_{i}$ are "neutral".
	Finally, for all $i \geq k+1$, all locks acquired in $\run_i$ are acquired infinitely many times in $\run$.
	If $\run$ is finite, all $\run_i$ are empty for $i \geq k+1$.
	We call this decomposition the \introinrestatable{stair decomposition} of $\run$.
\end{restatable}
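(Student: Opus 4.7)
The plan is to pin down the actions $a_1,\dots,a_k$ as the distinguished ``permanent'' acquisitions and then use nesting to force the segments between them to be balanced. First I would observe that the actions in $\run$ whose acquired lock is never released later are finite in number: for any lock $t$, only its \emph{last} acquisition in $\run$ can fail to be released later (since a lock must be released before it can be re-acquired), so the $a_i$'s are in bijection with a subset of $T$ and $k \le |T|$. Number them in the order they occur in $\run$ and let $\run_0, \run_1, \dots, \run_{k-1}$ be the segments strictly between consecutive $a_i$'s (with $\run_0$ the prefix before $a_1$).

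Next I would show that each $\run_i$ for $0\le i\le k-1$ is \kl{neutral}. Suppose some lock $t$ is acquired in $\run_i$ but never released in $\run_i$. By the \kl{nested} discipline, $t$ sits in the stack at the time $a_{i+1}$ executes, strictly below the lock $\ell(a_{i+1})$ then pushed on top. Since $\ell(a_{i+1})$ is never released in the remainder of $\run$, the lock $t$ can never reach the top of the stack again, hence it is never released either, so the acquisition of $t$ would be one of the $a_j$'s — contradicting the fact that we have listed all of them and that this acquisition falls strictly between $a_i$ and $a_{i+1}$. The argument for $\run_0$ is the same.

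For the tail past $a_k$, the finite case is immediate: set $\run_k$ equal to the remaining suffix; any unreleased acquisition there would be an extra $a_{k+1}$, contradiction; set $\run_i=\varepsilon$ for $i\ge k+1$. The infinite case is where the subtlety lies, and is the main obstacle. Call the \emph{baseline} the set $\{\ell(a_1),\dots,\ell(a_k)\}$ and the \emph{extra stack} whatever sits above it. The crucial lemma is that the extra stack is empty at infinitely many positions in the tail: otherwise some lock $t$ on the extra stack is never popped from some point on, making $t$ permanently held and giving yet another action of the $a_j$-type, contradiction.

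Finally, to extract $\run_{k+1},\run_{k+2},\dots$ with the stronger recurrence property, I would split the locks acquired in the tail into \emph{recurrent} (acquired infinitely often) and \emph{transient} (acquired finitely often). Since there are finitely many transient locks each acquired finitely often, there is a last transient acquisition; pick a baseline-return position $N$ past it, let $\run_k$ span from just after $a_k$ up to $N$, and let $\run_{k+1},\run_{k+2},\dots$ be the successive sub-runs between consecutive baseline-return positions after $N$. Each such segment is neutral because the extra stack is empty at both of its endpoints, forcing every acquisition inside to be matched by a later release inside; and by construction every lock acquired after $N$ is recurrent, giving the required property for $i\ge k+1$.
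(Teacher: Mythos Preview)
Your proposal is correct and follows essentially the same route as the paper: identify the permanent acquisitions $a_1,\dots,a_k$, use nesting to force the intermediate segments to be neutral, and for the infinite tail cut at positions where the process holds exactly the baseline $\{\ell(a_1),\dots,\ell(a_k)\}$, pushing the finitely many transient acquisitions into $\run_k$. Your phrasing in terms of ``baseline-return positions'' and the paper's explicit block-by-block construction of the $\run'_j$ (each a single $\nop$ or a single top-level $\get{t}\cdots\rel{t}$ pair) produce the same decomposition; the only cosmetic addition is your explicit bound $k\le|T|$.
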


\begin{proof}
	Let $\run$ be a local run of some process $p$. 
	We start by decomposing it as 
	\[\run=\run_0 a_1 \run_1 a_2 \cdots \run_{k-1} a_k \run_{\infty}\]
	with $a_1, \ldots, a_k$ the actions getting a lock that is not released later
	in the run. 
	For all $i$ let $t_i$ be the lock taken by $a_i$,  namely $op(a_i) = \get{t_i}$.
	
	We check that all $\run_0,\dots,\run_{k-1}$ are "neutral". Consider some $\run_i$. 
	If a lock $t$ is taken in $\run_i$ then it must be released later in the run
	because $a_{i+1}$ is the next operation that takes a lock and does not release
	it.
	But because of the nesting discipline $t$ cannot be released after $a_{i+1}$. 
	So it must be released in $\run_i$.
	
	
	
	Now we look at $\run_\infty$. Every lock acquired in it must be
	released eventually. 
	Thus if the run is finite we can set $\run_{k} =\run_{\infty}$ and $\run_i=\epsilon$ for all $i \geq k+1$.
	
	If the run is infinite then we proceed as follows: 
	Before executing $\run_\infty$, $p$ holds $t_1, \ldots, t_k$.
	We construct a sequence of "neutral" runs $\run'_j$ such that $\run_\infty = \run'_{1} \run'_2 \cdots$.
	Say we constructed $\run'_1 \cdots \run'_j$.
	As they are all "neutral", after executing them $p$ holds $t_1, \ldots, t_k$.
	The next action $a$ in $\run_\infty$ cannot release a lock as none of those locks are ever released.
	If $a$ does not get a lock then we can simply set $\run_{j+1} = a$.
	If $a$ acquires lock $t$ then let $\run_{j+1}$ be the infix of $\run_\infty$
	starting with $a$ and ending with the next action releasing $t$. 
	This run is "neutral" as the system is "nested".
	Then let $j$ be such that $\run'_{1} \cdots \run'_{j}$ contains all $\get{t}$ operations with $t$ acquired finitely many times in $\run_{\infty}$.
	We set $\run_{k} = \run'_{1} \cdots \run'_{j}$ and for all $i \geq k+1$, $\run_i = \run'_{j-k+i}$. We obtain our decomposition.
\end{proof}


%

We now define patterns of local runs in a similar manner as in Section~\ref{sec:patterns}.	

\begin{definition}\label{def:nested-pattrerns}
	Consider a (finite or infinite) local run $\run$ of process $p$, and its "stair decomposition" $\run=\run_0 a_1
	\cdots \run_{k-1}   a_k \run_{k} \run_{k+1} \cdots$. For all $i$ let $t_i$ be the lock acquired by $a_i$.
	
	We say that $\run$ matches a \intro{stair
	pattern} $(\OwnsN(\run),\leq^{\run}, \InfN(\run))$ when
	$\OwnsN(\run)=\set{t_1,\dots,t_k}$, the set of locks acquired
	infinitely many times is included in $\InfN(\run)$, and
	 $\leq_p^{\run}$ is a total order
	on $T$  satisfying two conditions:
	\begin{itemize}
		\item if $t$ is acquired finitely many times and $t'$ infinitely many
		times then $t \leq^{\run}_p	t'$,
		\item if $t=t_i$ for some $i$
		and $t'$ is acquired at some point after $a_i$ then $t \leq^{\run}_p	t'$.
	\end{itemize}
\end{definition}
The $N$ in exponent above $\OwnsN_{p}$ and $\InfN_p$ is for nested, to avoid confusion with the notations defined in Section~\ref{sec:patterns}: while $\OwnsN$ and $\Owns$ correspond to the same idea, $\InfN$ and $\Inf$ are two different things. 

Note that unlike the patterns defined for "2LSS", here a run may have several different patterns. We could define unique patterns but this would somehow make the statement of Lemma~\ref{lem:pattern-char-nested} and the proof of Lemma~\ref{lem:DELA-pat-nested} more complicated.

Our next lemma characterises when local runs can be combined into a
"process-fair" global one.
Once again the characterisation uses only patterns and last states of the local runs.

\begin{lemma}
	\label{lem:pattern-char-nested}
	Consider a family of (finite or infinite) local runs $(\run_p)_{p \in \Proc}$ of a "nested" "LSS".
	For each $p \in \Proc$ we consider a "stair decomposition" of $\run_p$:
	
	\[\run_p=\run_{p,0} a_{p,1} \cdots \run_{p,k_p-1} a_{p,k_p} \run_{p,k_p} \run_{p,k_p+1} \cdots\]
	
	and for each $a_{p,i}$ let $t_{p,i}$ be the lock such that $op(a_{p,i}) = \get{t_{p,i}}$. 
	
	Runs $(\run_p)_{p \in \Proc}$  can  be scheduled into a "process-fair" global run if and only if there exist for each $p$ a "stair pattern" $(\OwnsN_p, \leq_p, \InfN_p)$ that $\run_p$ matches and  the following conditions are satisfied.
	
	\begin{enumerate}
		\item \label{C1n} The $\OwnsN_{p}$ sets are pairwise disjoint.
		
		\item \label{C2n} All $\leq_p$ orders are the same.
		
		\item \label{C3n} For all $p$, if $\run_{p}$ is finite then it leads to a state where all outgoing transitions acquire a lock from $\bigcup_{p \in \Proc} \OwnsN_{p}$. 
		
		\item \label{C4n} The set $\bigcup_{p \in \Proc}\OwnsN_p$ is disjoint from $\bigcup_{p \in \Proc}\InfN_p$. 
	\end{enumerate}
\end{lemma}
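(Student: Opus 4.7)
The plan is to mirror the structure of Proposition~\ref{prop:charac-schedulable-pat}: in one direction, show that any process-fair global run induces compatible stair patterns; in the other, show that compatible stair patterns can be realised as a process-fair global schedule.

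For $(\Rightarrow)$, I fix a process-fair global run $\run$ with projections $\run_p = \run|_p$, take the stair decomposition of each $\run_p$ given by Lemma~\ref{lem:stair-dec}, and set $\OwnsN_p = \{t_{p,1},\dots,t_{p,k_p}\}$. To produce the common order $\leq$, I partition $T$ according to $\run$ into (A) locks held forever by some process, (B) locks released infinitely often, and (C) locks eventually free forever. I order class~A by the global timestamp at which each lock is permanently acquired (the position of the corresponding $a_{p,i}$ in $\run$), put class~C next in arbitrary order, then class~B in arbitrary order. Taking $\leq_p = \leq$ and $\InfN_p$ equal to class~B for every $p$, conditions \ref{C1n}--\ref{C4n} follow: disjointness of the $\OwnsN_p$ is immediate since a lock is held by at most one process at a time; condition~\ref{C3n} uses process-fairness, since a blocked process's outgoing transitions must all try to grab a lock held forever, i.e., in class~A; condition~\ref{C4n} holds by construction. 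It then remains to verify that each $\run_p$ matches $(\OwnsN_p, \leq, \InfN_p)$: the first clause of Definition~\ref{def:nested-pattrerns} reduces to the fact that class~B sits strictly above classes~A and~C in $\leq$; the second clause is a short timing argument showing that if $p$ acquires some $t'$ globally after $a_{p,i}$, then any permanent acquisition of $t'$ must happen later than $a_{p,i}$, placing $t'$ above $t_{p,i}$ in $\leq$.

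For $(\Leftarrow)$, let $t^1 \leq t^2 \leq \cdots \leq t^K$ enumerate $\bigcup_p \OwnsN_p$ and write $t^l = t_{p_l, j_l}$. I build the global run in three phases. Phase~0 executes $\run_{p,0}$ for every $p$, one process after another; each is neutral and is played in isolation, so no conflict arises. Phase~1 processes the permanent locks in the order of $\leq$: by condition~\ref{C2n} applied to $p_l$, the locks $t_{p_l,1},\dots,t_{p_l,j_l-1}$ are strictly smaller than $t^l$, so they have all been processed at earlier steps, hence $p_l$ is ready to fire $a_{p_l,j_l}$; I then immediately execute the neutral block $\run_{p_l,j_l}$. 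By the second clause of Definition~\ref{def:nested-pattrerns}, every lock grabbed inside $\run_{p_l,j_l}$ is strictly $>t^l$ in $\leq$, so it is either non-permanent and free (since only $p_l$ is moving during phase~1) or equals some $t^{l'}$ with $l' > l$ and is still unprocessed. Phase~2 schedules the remaining tails $\run_{p,k_p+1}\run_{p,k_p+2}\cdots$ of the infinite processes in round-robin, one block at a time; by Lemma~\ref{lem:stair-dec} each such block only acquires locks $p$ acquires infinitely often, hence locks in $\InfN_p$, which by condition~\ref{C4n} are disjoint from $\bigcup_p \OwnsN_p$ and thus free whenever a block is started. Finite processes are blocked after phase~1 by condition~\ref{C3n}, so the resulting run is process-fair and projects to $\run_p$ on each $p$.

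The main obstacle is the construction of $\leq$ in the $(\Rightarrow)$ direction: a naive attempt like ``order by last global acquisition time'' fails when two locks are each acquired infinitely often by one process and only finitely often by the other, since the two processes then demand incompatible strict orderings. The freedom to over-approximate $\InfN_p$ in Definition~\ref{def:nested-pattrerns} is precisely what rescues the argument: taking $\InfN_p$ to be the set of globally infinite locks (class~B) uniformly for every $p$ makes the block ordering A, C, B simultaneously valid for all processes. A secondary subtlety is that the middle block $\run_{p,k_p}$ can still contain finitely acquired locks and must be folded into phase~1 rather than phase~2; condition~\ref{C2n} still applies at index $k_p$, so exactly the same scheduling argument goes through.
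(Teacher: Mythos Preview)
Your proof follows essentially the same approach as the paper's: the three-phase $(\Leftarrow)$ construction is identical, and in $(\Rightarrow)$ the paper also builds a single global order on locks (ordering the globally-finitely-used locks by the time of their last operation in $\run$, with the globally-infinite locks on top) and sets $\InfN_p$ to the locks $p$ itself uses infinitely often, which differs from your block ordering $A<C<B$ and uniform $\InfN_p$ only cosmetically. One remark: your claim that over-approximating $\InfN_p$ ``rescues'' the first clause of Definition~\ref{def:nested-pattrerns} is not quite right, since that clause is phrased in terms of the actual acquisition counts in $\run_p$ rather than in terms of $\InfN_p$---but the paper's own proof is equally silent on this clause, so your argument is neither more nor less complete than the original.
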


\begin{proof}
	Suppose we have a "process-fair" global run $\run$ whose local projections are
	the $(\run_p)_{p \in \Proc}$. For each $p$ let $\OwnsN_p$ be the set of locks kept indefinitely
	in $\run_p$ and $\InfN_p$ the set of locks acquired infinitely often in
	$\run_p$. 
	Let $\leq$ be a total order on locks such that for all $t, t' \in T$, if $t$ is acquired finitely many times in
	$\run$ and there is an operation on $t'$ after the last operation on $t$ then $t
	\leq t'$. 
	In particular, a lock acquired infinitely often is always greater than one
	acquired finitely many times. 
	Further, for all $p,i$ the action $a_{p,i}$ acquires $t_{p,i}$, which is not released later. Thus $a_{p,i}$ is the last action with an operation on $t_{p,i}$ in $\run$. Hence if another lock $t$ is used after $a_{p,i}$ in $\run_p$, it is also used after $a_{p,i}$ in
	$\run$, and therefore $t_{p,i} \leq t$. 
	As a result, $(\OwnsN_p, \leq, \InfN_p)$ is a pattern of $\run_p$ for all $p$, 
	and~\ref{C2n} is immediately satisfied. 
	
	As each $p$ eventually holds $\OwnsN_p$ and keeps those locks forever, the $\OwnsN_{p}$ have to be disjoint, thus condition~\ref{C1n} is satisfied. 
	
	For condition~\ref{C3n}, we use the fact that $\run$ is "process-fair". For
	all $p$, if $\run_p$ is finite then it leads to a state where after some point
	in the run none of  the outgoing transitions can be executed.
	Hence all these transitions acquire a lock that is never released after
	some point. 
	This is the case for locks of $\bigcup_{p \in \Proc} \OwnsN_p$ but not for the
	others, which are free infinitely often. Hence condition~\ref{C3n} holds. 
	
	Finally, as all locks from $\bigcup_{p \in \Proc}\OwnsN_p$ are eventually never free while the locks from $\bigcup_{p \in \Proc}\InfN_p$ are free infinitely often, the two sets are necessarily disjoint, proving condition~\ref{C4n}.
	
	For the other implication, suppose that we have patterns $(\OwnsN_{p}, \leq_p, \InfN_p)$ such that all conditions are satisfied. Let $\leq$ be the total order on locks common to all patterns, which exists by condition~\ref{C2n}.
	We start by executing one by one for each run $\run_p$ its prefix $\run_{p,0}$, leaving all locks free are the $\run_{p,0}$ are all "neutral".
	
	We use the notation $T_O$ for the set $\bigcup_{p\in\Proc} \OwnsN_{p}$.
	We index the locks of $T_O$ so that $T_O = \set{t_1, \ldots, t_m}$ and $t_1 \leq t_2 \leq \cdots \leq t_m$.
	For each $t_i \in T_O$ there is a pair $(p_i,j_i)$ such that $op(a_{p_i,j_i})
	= \get{t_i}$. Furthermore that pair is unique as a process $p$ cannot have
	$a_{p,j}$ take $t_i$ for two different $j$ (by definition of "stair
	decomposition") and as the $\OwnsN_{p}$ are disjoint (by condition~\ref{C1n}).  
	We execute, for all $t_i \in T_O$, in increasing order on $i$, $a_{p_i,j_i} \run_{p_i,j_i}$.
	
	At first all locks are free. Then, for each $i$, just
	before we execute $a_{p_i,j_i} \run_{p_i,j_i}$, the locks that are not free are exactly $\set{t_{i'} \mid i' \leq i-1}$. 
	Hence for every lock $t_{i'}$ that is not free, we have $t_{i'} \leq t_i$ and $t_{i'}
	\neq t_i$. 
	
	By definition of $\leq_p$, all locks $t'$ acquired in $a_{p_i,j_i} \run_{p_i,j_i}$ are such that $t_i \leq_p t'$, hence $t_i \leq t'$ by condition~\ref{C2n}. 
	As a result, they are all free just before we execute $a_{p_i,j_i} \run_{p_i,j_i}$.
	After we execute it, the set of non-free locks becomes $\set{t_{i'} \mid i' \leq i}$.
	
	The projection of the resulting run on each $p$ is $\run_{p,0} a_{p,1}  \cdots \run_{p,k_p-1} a_{p,k_p} \run_{p,k_p}$.
	
	All that is left to do is executing the $\run_{p, i}$ for $i \geq k_p +1$ for each $p$. They only contain operations on locks that are acquired infinitely many times which are thus in $\InfN_p$ as $\run_p$ matches pattern $(\OwnsN_{p}, \leq_p, \InfN_p)$, and therefore free by condition~\ref{C4n}. 
	As furthermore all $\run_{p,i}$ are "neutral" by definition of "stair decomposition", we can execute the next $\run_{p,i}$ for each $p$ again and again indefinitely, to obtain an infinite global run of the system.
	
	This run is furthermore "process-fair" as the finite $\run_p$ lead to states whose outgoing transitions acquire locks of $T_O$, which are eventually all taken forever.
	Hence those processes do not have an available action infinitely often.   
\end{proof}

Before we can present our \NP~algorithm, we need one last technical lemma to show that we can recognise runs with a given pattern using a small automaton.

\begin{restatable}{lemma}{DELANested}
	\label{lem:DELA-pat-nested}
	Given a process $p$ and a "stair pattern" $\Pat$ we can construct an "ELA"
	$\aut^p_{\Pat}$ such that for all nested local run $\run_p$, $\pad{\run_p}$ is accepted if and only
	if $\run_p$ matches "stair pattern" $\Pat$.
	The automaton $\aut^p_{\Pat}$ has at most $(\size{T_p}+2)^2$ states and a formula of
	constant size for the accepting condition.
\end{restatable}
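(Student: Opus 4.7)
The plan is to construct an ELA whose states simulate a candidate stair decomposition of the input while tracking two counters. Write $m = |\OwnsN|$ and list $\OwnsN = \{o_1, o_2, \ldots, o_m\}$ with $o_1 < o_2 < \cdots < o_m$ in the order $\leq$ specified by $\Pat$. The principal states are pairs $(i, j)$ with $0 \leq i \leq m$ and $0 \leq j \leq |T_p|-i$, where $i$ records how many permanent acquisitions $a_1, \ldots, a_i$ have been fired (so $o_1, \ldots, o_i$ are permanently held) and $j$ records the current height of the transient substack above them. To enforce the $\InfN$ constraint, the tail states $(m, j)$ are duplicated into a starred copy $(m, j)^\star$; the padding state is identified with $(m, 0)^\star$. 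A direct count shows at most $(|T_p|+2)^2$ states.

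From $(i, j)$ on action $a$, the transitions are as follows: $\nop$ self-loops; $\rel{t}$ with $j \geq 1$ moves to $(i, j-1)$ (since $\tss$ is nested, the released lock is necessarily the top of the transient substack, so its identity need not be checked); $\get{t}$ offers two nondeterministic options: treating $t$ as the permanent acquisition $a_{i+1}$ when $t = o_{i+1}$, $i < m$ and $j = 0$ (moving to $(i+1, 0)$), or treating $t$ as transient whenever $o_i < t$ in the order $\leq$ (moving to $(i, j+1)$, the constraint being vacuous when $i = 0$). From $(m, 0)$ the automaton may additionally switch nondeterministically to $(m, 0)^\star$, after which it stays within the starred copies $(m, j)^\star$, and every starred $\get{t}$ is further constrained by $t \in \InfN$. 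The letter $\dummy$ is accepted only at $(m, 0)$ or $(m, 0)^\star$, leading in either case to the self-looping padding state $(m, 0)^\star$. All other action/state combinations are dead. The acceptance formula is the single-atom condition $\varinf_{(m, 0)^\star}$, which has constant size.

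Correctness rests on Lemma~\ref{lem:stair-dec}. For soundness, any accepting run of the automaton selects a decomposition of $\run_p$ into permanent and transient acquisitions: visiting $(m, 0)^\star$ infinitely often forces all $m$ permanent acquisitions to have happened and the transient stack to empty out infinitely often, while the order constraint built into each $\get{}$ transition, together with the $\InfN$ restriction on starred transitions, yields a stair pattern $(\OwnsN, \leq, \InfN)$ that $\run_p$ matches. For completeness, any stair decomposition of $\run_p$ matching $\Pat$ can be replayed by the automaton: fire the permanent branch at each $a_i$, the transient branch at every other $\get{}$ (valid by the order condition of Definition~\ref{def:nested-pattrerns}), switch to starred mode when entering the pure-infinite tail block $\run_{k+1}$ (all further acquisitions being on locks of $\InfN$ by Lemma~\ref{lem:stair-dec}), and loop in the padding state after the run ends if it is finite. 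The main obstacle is keeping the acceptance formula of constant size despite the $\InfN$ membership constraint; concentrating this check in the starred tail copy and reducing acceptance to a single $\varinf$ atom at $(m, 0)^\star$ avoids a formula whose size would otherwise grow with $|T_p|$.
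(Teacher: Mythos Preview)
Your construction is correct and follows essentially the same approach as the paper's: both track in the first component how many permanent acquisitions $o_1,\dots,o_i$ have been fired, switch to a dedicated tail mode (your starred copy, the paper's $\infty$ component) where only locks of $\InfN$ may be acquired, and reduce acceptance to the single atom asserting that the neutral tail state is visited infinitely often. The one cosmetic difference is the second component: the paper stores the identity of the bottom transient lock (going back to $neutral$ when that lock is released), whereas you store the height of the transient substack; both encodings are sound under the nested assumption and fit within the $(|T_p|+2)^2$ state bound.
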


\begin{proof}
	Let $\Pat = (\OwnsN_{p}, \leq_p, \InfN_p)$.
	We set $\OwnsN_{p} = \set{t_1, \ldots, t_k}$ so that $t_1 \leq_p \cdots \leq_p t_k$.
	
	We define the automaton $\aut^p_{\Pat} = (S^p_{\Pat}, \Sigma_p \cup \set{\dummy}, \Delta^p_{\Pat}, init^p_{\Pat}, \phi^p_{\Pat})$ as follows:
	If there exist $t, t'$ such that $t \in \InfN_p$, $t'\notin \InfN_p$ and $t \leq_p t'$ then no run can match this "stair pattern", hence we simply set $\aut^p_{\Pat}$ as an automaton with an empty language. From now on we will assume that it is not the case.
	
	The states of the automaton are $S^p_{\Pat} = \set{0,\ldots,k, \infty} \times (T_p \cup \set{neutral})$, with $init^p_{\Pat} = (0,neutral)$.
	
	\paragraph{Intuition} The first component of each state gives an index $i$ such that the run read so far is of the form $\run_0 a_1 \run_1 \cdots a_i \run_i$ with $\run_j$ "neutral" for all $j < i$, and for all $j \leq i$ $op(a_j) = \get{t_j}$ and all locks $t'$ used after $a_j$ are such that $t_j  \leq_p t'$. If the first component is $\infty$ it means we will only use locks of $\InfN_p$ in the future.
	
	The second component of a state $(i, x)$ indicates which lock apart from $\set{t_1, \ldots, t_i}$ we acquired earliest among the ones we own. If we released all locks acquired since we took $t_i$, then the second component is $neutral$.
	We do not need to keep track of all locks acquired as we are only interested in "nested" runs: If we are in state $(i,neutral)$ and acquire some lock $t$, we go to state $(i,t)$ to wait for it to be released: if we stay in state $(i,t)$ indefinitely the run is not accepted, otherwise $t$ is released we know that if the run we read is "nested" then all locks taken since we took $t$ have been released before.

	\paragraph{Formal proof:}
	For each action $a \in \Sigma_p \cup \set{\dummy}$ and state $s \in S^p_{\Pat}$ we have the following
	transitions:
	\begin{itemize}
		\item If $s=(i,neutral)$ with $i < k$ then:
		\begin{itemize}
			\item If $op(a) = \get{t_{i+1}}$ then $\Delta^p_{\Pat}(s,a) = \set{(i+1, neutral), (i,t_{i+1})}$
			
			\item If $op(a) = \get{t}$ with $t_i \leq_p t$ then  $\Delta^p_{\Pat}(s,a) = \set{(i,t)}$
		\end{itemize}

		\item If  $s=(k,neutral)$ then:
		\begin{itemize}
			\item If $op(a) = \get{t}$ with $t_k \leq t$ then $\Delta^p_{\Pat}(s,a) = \set{(k, t)}$
		\end{itemize}
		
		\item If  $s=(i,t)$ with $i \leq k$ then:
		\begin{itemize}
			\item If $op(a) = \rel{t}$ then $\Delta^p_{\Pat}(s,a) = \set{(i, neutral)}$
			
			\item If $op(a) = \get{t'}$ or $\rel{t'}$ with $t_i \leq_p t'$ and $t' \neq t_i$ then  $\Delta^p_{\Pat}(s,a) = \set{s}$
		\end{itemize}
		
		\item If  $s=(\infty,neutral)$ then:
		\begin{itemize}
			\item If $op(a) = \get{t}$ with $t\in \InfN_p$ then $\Delta^p_{\Pat}(s,a) = \set{(\infty, t)}$
		\end{itemize}
		
		\item If  $s=(\infty,t)$ then:
		\begin{itemize}
			\item If $op(a) = \rel{t}$ then $\Delta^p_{\Pat}(s,a) = \set{(\infty, neutral)}$
			
			\item If $op(a) = \get{t'}$ or $\rel{t'}$ with $t' \in \InfN_p$ then $\Delta^p_{\Pat}(s,a) = \set{s}$
		\end{itemize}
		
		\item If $op(a) = \nop$ then $\Delta^p_{\Pat}(s,a) = \set{s}$ for all $s$.
		
		\item If $a = \dummy$ then $\Delta^p_{\Pat}((k,neutral),a) = \Delta^p_{\Pat}((\infty,neutral),a) = \set{(\infty, neutral)}$
		
		\item Otherwise $\Delta^p_{\Pat}(s,a) = \es$.
		
		\item We add an $\epsilon$-transition from $(k, neutral)$ to $(\infty, neutral)$. It can be eliminated by adding a few transitions to the automaton, but we allow it as it simplifies the proof.
		
	\end{itemize}
	
	The acceptance condition $\phi_{\Pat}$ is simply $\varinf_{(\infty,neutral)}$.
	
	Let $\run$ be a local run of $p$ matching the given "stair pattern" $\Pat$, and let  $\run=\run_{0} a_{1} \cdots \run_{k-1} a_{k} \run_{k} \run_{k+1} \cdots$ be its "stair decomposition".
	For all $i < k$ there is a path in the automaton reading $\run_{p,i}$ from state $(i, neutral)$ to itself: every letter acquiring some $t$ (thus getting to state $(i,t)$) is later followed by one releasing it. Letters using a lock lower than $t_i$ for $\leq_p$ cannot appear in $\run_{p,i}$ as otherwise $\run_p$ would not match $\Pat$. 
	Furthermore, there are no $\dummy$ in $\run_{p,i}$. As a result, when $t$ is eventually released we are still in state $(i, t)$ and we go back to state $(i, neutral)$.
	
	As a result, the run $\run_{0} a_{1} \cdots \run_{k-1} a_{k} \run_k$ labels a path from $(0, neutral)$ to $(k,neutral)$ in the automaton. 
	Then all letters that appear in the $\run_i$ for $i \geq k$ are greater than $t_k$, otherwise $\run_p$ would not match $\Pat$. If $\run_p$ is finite then all the following letters are $\dummy$, and we stay in $(\infty, neutral)$ forever.
	
	If $\run_p$ is infinite then by definition of the "stair decomposition" all the following letters use locks of $\InfN_p$ or apply $\nop$. Then we can take the $\epsilon$ transition to $(\infty,neutral)$. Each $w_{j}$ with $j \geq i+1$ labels a path from $(\infty, neutral)$ to itself: all $\get{t}$ operations that get the run to $(\infty,t)$ are matched by a later operation $\rel{t}$ taking it back to $(\infty, neutral)$. 
	In both cases the run is accepting as it visits $(\infty,neutral)$ infinitely many times.
	
	Now let $\run$ be a "nested" local run of $p$ such that $\pad{\run}$ is accepted by $\aut_{\Pat}$. Then we consider an accepting computation of $\run$ in $\aut$ and decompose $\run$ as $\run=\run_{0} a_{1} \cdots \run_{k-1} a_{k} \run_{\infty}$ with $a_i$ the first letter in the run such that the computation gets to $(i,neutral)$ after reading the prefix $\run_{0} a_{1} \cdots \run_{i-1} a_{i}$.
	By definition of the automaton, we must have $op(a_i) = \get{t_i}$ for all $i$, and all operations executed after $a_i$ must be on locks greater than $t_i$ for $\leq_p$.

	We show that for all $i \in \set{0,\ldots,k, \infty}$, any "nested" run labelling a path from $(i, neutral)$ to itself must be "neutral":
	Suppose it is not the case, let $u$ be a "nested" run that is not "neutral" labelling a path from $(i,neutral)$ to itself, of minimal size.
	The first operation on locks in $u$ must be a $\get{t}$, as otherwise $u$ cannot be read from $(i, neutral)$. In order to go back to $(i, neutral)$, there must be a later operation $\rel{t}$ in $u$. Hence, as $u$ is "nested", we have $u = a v b u'$ with $op(a) = \get{t}$, $op(b) = \rel{t}$, $v$ "neutral", and $u'$ labelling a path from $(i,neutral)$ to itself. By minimality of $u$, $u'$ must be "neutral", hence so must be $u$: contradiction.
	
	All runs $\run_i$ label a path from $(i, neutral)$ to itself, thus they must be "neutral".
	
	If $\run$ is finite, then so is $\run_{\infty}$. Furthermore $\run_{\infty}$ is "neutral" as it must label a path from $(k,neutral)$ to itself. Thus $\run$ must match the "stair pattern" $\Pat$.
	
	Otherwise, we cut $\run_{\infty}$ in parts so that  $\run_{\infty}= \run_k \run_{k+1} \cdots$ with $\run_k$ labelling a path from $(k, neutral)$ to itself and for all $j \geq k+1$, $\run_j$ labelling a path from $(\infty,neutral)$ to itself. 
	
	This decomposition exists as, for $\run$ to be accepted, $\run_{\infty}$ must label a path starting in $(k, neutral)$, taking at some point the $\epsilon$ transition from $(k, neutral)$ to $(\infty, neutral)$, and then going back infinitely many times to $(\infty, neutral)$.
	
	As each $\run_j$ labels a path from either $(k, neutral)$ or $(\infty,neutral)$ to itself, they are all "neutral". Furthermore, the $\run_j$ with $j \geq k+1$ can only use locks of $\InfN_p$, as the automaton only allows those operations from states with an $\infty$ first component.
	
	As a result, $\run$ matches pattern $\Pat$. 
\end{proof}

We can finally give an \NP~upper bound for the problem over "nested" "LSS".

\begin{proposition}
	The "regular verification problem" is decidable in \NP~for "sound" "nested" "LSS".
\end{proposition}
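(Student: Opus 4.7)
The plan is to mirror the \NP~algorithm from Proposition~\ref{prop:NP-2LSS}, replacing Proposition~\ref{prop:charac-schedulable-pat} by Lemma~\ref{lem:pattern-char-nested} and Lemma~\ref{lem:DELA-pat} by Lemma~\ref{lem:DELA-pat-nested}. The new ingredient is that a "stair pattern" carries a total order $\leq_p$ on $T$ together with the sets $\OwnsN_p$ and $\InfN_p$, but this data is still of polynomial size in $\size{T}$, so guessing one pattern per process remains within \NP.

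First I would guess, for each $p \in \Proc$, a "stair pattern" $\Pat_p = (\OwnsN_p, \leq_p, \InfN_p)$ together with a valuation $\nu$ of the variables $\varinf_{p,s}$ appearing in $\phi$. I would then check in polynomial time that the compatibility conditions of Lemma~\ref{lem:pattern-char-nested} hold: all orders $\leq_p$ coincide (Condition~\ref{C2n}), the $\OwnsN_p$ are pairwise disjoint (Condition~\ref{C1n}), and $\bigcup_{p\in\Proc} \OwnsN_p$ is disjoint from $\bigcup_{p\in\Proc} \InfN_p$ (Condition~\ref{C4n}); I would also check that $\nu$ satisfies $\phi$.

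Next, for each process $p$, I would build an "ELA" $\mathcal{C}_p$ as the synchronous product of three components: (i) the process automaton $\aut_p$, augmented with a $\dummy$-self-loop on every state whose outgoing transitions all acquire a lock of $\bigcup_{p\in\Proc} \OwnsN_p$, which encodes both the padding convention for finite runs and Condition~\ref{C3n}; (ii) the objective automaton $\objaut{p}$, equipped with the acceptance formula $\bigwedge_{\nu(\varinf_{p,s})=\top} \varinf_s \land \bigwedge_{\nu(\varinf_{p,s})=\bot} \neg \varinf_s$ so that accepting runs in $\objaut{p}$ are exactly those whose infinitely-visited states match $\nu$; and (iii) the pattern automaton $\aut^p_{\Pat_p}$ from Lemma~\ref{lem:DELA-pat-nested}, of size polynomial in $\size{T_p}$. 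By design, $\mathcal{C}_p$ accepts exactly the padded local runs $\pad{\run_p}$ that match $\Pat_p$, induce a run in $\objaut{p}$ consistent with $\nu$, and satisfy Condition~\ref{C3n} whenever $\run_p$ is finite. Since $\mathcal{C}_p$ is polynomial, I would guess an ultimately periodic run $u_p v_p^\omega$ with $\size{u_p},\size{v_p}$ polynomial in the size of $\mathcal{C}_p$ and verify acceptance; it is standard that a non-empty "ELA" accepts some such lasso.

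Correctness is then immediate from Lemma~\ref{lem:pattern-char-nested}: a successful accepting computation of the algorithm yields, for each $p$, a local run $\run_p$ matching $\Pat_p$ whose image in $\objaut{p}$ is consistent with $\nu$ and which satisfies Condition~\ref{C3n}; the checks above guarantee Conditions~\ref{C1n}, \ref{C2n}, \ref{C4n}, so the $(\run_p)_{p\in\Proc}$ can be scheduled into a "process-fair" global run satisfying $\phi$. Conversely, any "process-fair" global run satisfying the objective provides local runs whose stair patterns and induced valuation form a witness the algorithm can guess. There is no serious obstacle here; the only point requiring care is to verify that the per-process lasso guesses can indeed be made independently, which is safe because Lemma~\ref{lem:pattern-char-nested} reduces the global compatibility question to the statically-checkable conditions on $(\Pat_p)_{p\in\Proc}$ plus purely local conditions on each $\run_p$.
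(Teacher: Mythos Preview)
Your proposal is correct and follows essentially the same approach as the paper: guess a stair pattern per process and a valuation, check the pattern-level compatibility conditions of Lemma~\ref{lem:pattern-char-nested} and that $\nu$ satisfies $\phi$, build for each process the product of $\aut_p$ (with the $\dummy$ self-loops encoding Condition~\ref{C3n}), $\objaut{p}$ (with acceptance matching $\nu$), and $\aut^p_{\Pat_p}$ from Lemma~\ref{lem:DELA-pat-nested}, then guess a polynomial-size lasso in each product. If anything, you are slightly more explicit than the paper in spelling out which conditions are checked statically versus encoded in the product automaton.
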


\begin{proof}
	Let $\tss = ((\aut_p)_{p \in \Proc},T)$ be a "sound" "nested" "LSS", and $((\objaut{p})_{p \in \Proc}, \phi)$ a "regular objective".
	
	The algorithm is similar to the one for Proposition~\ref{prop:NP-2LSS}: 
	we guess a pattern $\Pat_p =(\OwnsN_{p}, \leq_p, \InfN_p)$ for each process $p$, and a valuation $\nu$ of the variables $(\varinf_{p,s})_{p\in \Proc, s \in S_{\objaut{p}}}$ (the variables of $\phi$, see Definition~\ref{def:reg-obj}). We check that $\nu$ satisfies $\phi$.
	We transform each $\aut_p$ into $\aut_p^\dummy$, in which we added, on each state whose outgoing transitions all acquire a lock from $\bigcup_{p \in \Proc} \OwnsN_{p}$, a $\dummy$ self-loop.
	We also equip each $\objaut{p}$ with an Emerson-Lei accepting condition expressing that the run matches $\nu$.
	
	We then guess, for each process $p$, a run in the product of $\objaut{p}$, $\aut_p$ and $\aut_{\Pat_p}$ (as described in Lemma~\ref{lem:DELA-pat-nested}) that matches valuation $\nu$.
	It is folklore that if an Emerson-Lei automaton has an accepting run then it has one of the form $uv^\omega$ with $u$ and $v$ of polynomial size in the number of states of the automaton.
	Thus we can guess an accepting run within \NP.
	An accepting run is one that respects $\nu$ in $\objaut{p}$, and follows a run of $\aut_p$ of pattern $\Pat_p$ (and ends in a state with all outgoing transitions getting a lock of $\bigcup_{p \in \Proc} \OwnsN_{p}$ if it is finite).
	
	By Lemma~\ref{lem:pattern-char-nested}, we accept if and only if there is a "process-fair" global run of the "LSS" satisfying the objective. 
\end{proof}

We give a matching lower bound, robust to many restrictions. The reduction also solves a question left open in~\cite{Brotherson21}, as explained at the beginning of the section.

\begin{proposition}
	The "process deadlock problem" and the "circular deadlock" problem are \NP-hard for "sound" "nested" "exclusive" "LSS".
\end{proposition}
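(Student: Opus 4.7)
The plan is to reduce from 3SAT to both the "process deadlock problem" and the "circular deadlock" problem, adapting the high-level structure of the proof of Proposition~\ref{prop:NPhard2LSS} to the "nested" "exclusive" setting.

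Given a 3SAT formula $\phi = C_1 \wedge \cdots \wedge C_m$ over variables $x_1, \ldots, x_n$, the construction has, for each variable $x_i$, a \emph{variable process} $P_i$ that, via a sequence of $\nop$ transitions branching non-deterministically, commits to either $x_i = \top$ or $x_i = \bot$ by acquiring exactly one of two dedicated locks $t_i^+, t_i^-$, and then stays in a $\nop$ self-loop. Each $P_i$ holds at most one lock at a time, takes that lock at a state with a unique outgoing acquire transition, and performs every non-deterministic choice through $\nop$ transitions only: this makes $P_i$ "exclusive", "nested" (trivially, as only one lock is ever held) and "sound". For each clause $C_j$, a \emph{clause process} $Q_j$ uses $\nop$ branching to guess one of its three literals $\ell_{j,k}$ and then attempts to acquire the lock of the opposite polarity to $\ell_{j,k}$. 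This acquisition succeeds exactly when that opposite-polarity lock is free, which, once $P_i$ has committed, is equivalent to $\ell_{j,k}$ being true under the committed assignment.

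For the process-deadlock statement, we add a target process $p$ waiting for a lock $t^\star$, together with a \emph{collector} process that takes $t^\star$ and then sequentially runs the verification step of each clause; once all $m$ clauses have been verified, the collector enters a $\nop$ self-loop, keeping $t^\star$ forever. In any process-fair run, $p$ is blocked iff the collector reaches its final loop iff it could successfully verify every clause, which, together with the fact that fairness forces every $P_i$ to eventually commit, is equivalent to $\phi$ being satisfiable. For the "circular deadlock" variant, we replace the collector by a ring: after successfully verifying its clause, $Q_j$ takes a ring lock $r_j$ and blocks trying to acquire $r_{j+1 \bmod m}$, producing a reachable circular deadlock on $r_1, \ldots, r_m$ iff all $Q_j$ simultaneously pass their verification.

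The main technical obstacle is enforcing that every $P_i$ has actually committed \emph{before} any $Q_j$ reads its state, while respecting the nested discipline (which, unlike in the proof of Proposition~\ref{prop:NPhard2LSS}, forbids the alternating lock dance used there). To achieve this we augment each $P_i$ with a short nested confirmation segment: after acquiring its truth lock $t_i^\pm$ and while still holding it, $P_i$ grabs and releases an auxiliary lock $c_i^\pm$, thus only ever freeing the confirmation lock of the side it actually committed to; each $Q_j$ must acquire the appropriate $c_i^\pm$ before running its verification, which forces consistency between what is verified and what is committed. A careful case analysis, combined with process-fairness and soundness, then shows that in both constructions a run blocking $p$ (respectively a reachable circular deadlock) exists iff $\phi$ is satisfiable, yielding \NP-hardness for both problems.
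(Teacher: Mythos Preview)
Your reduction has a genuine gap in the synchronization mechanism. All locks in an LSS start free, so the confirmation locks $c_i^{\pm}$ are free from the outset; having $P_i$ execute $\get{t_i^{+}}\,\get{c_i^{+}}\,\rel{c_i^{+}}$ creates no observable difference for a verifier between ``$P_i$ has committed to $\top$'' and ``$P_i$ has not moved yet''. The phrase ``only ever freeing the confirmation lock of the side it actually committed to'' implicitly assumes these locks are initially held, which they are not. Concretely, take $\phi = x_1 \wedge \neg x_1$: in the circular-deadlock variant, $Q_1$ can acquire and release $c_1^{+}$ and then $t_1^{-}$, and $Q_2$ can acquire and release $c_1^{-}$ and then $t_1^{+}$, all before $P_1$ has done anything; both then take their ring locks and you reach a circular deadlock on $r_1, r_2$ for an unsatisfiable formula. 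The process-deadlock variant has a second, independent flaw: once the collector has acquired $t^{\star}$ it never releases it, so if the collector gets stuck on some verification step (because the relevant $P_i$ committed to the opposite polarity), it still holds $t^{\star}$ forever and $p$ is blocked in a process-fair run, again regardless of satisfiability. In both variants the direction ``deadlock $\Rightarrow$ $\phi$ satisfiable'' fails.

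The paper sidesteps these difficulties by reducing from \textsc{Independent Set} instead of 3SAT. Each of $k$ processes non-deterministically (via $\nop$) picks a vertex $v$, then acquires all edge-locks adjacent to $v$ in a fixed global order, then its own ring lock $\ell_i$, and finally attempts $\ell_{i+1}$. A circular deadlock on the $\ell_i$'s arises exactly when the $k$ chosen vertices have pairwise disjoint edge sets, i.e.\ form an independent set. Crucially, no process ever needs to observe another process's choice, so no confirmation gadget is needed at all; the nested and exclusive constraints are immediate because each process acquires in a fixed order, releases in reverse, and branches only through $\nop$. Achieving reliable ``read-after-commit'' synchronization in a nested exclusive system is genuinely delicate, and your sketch does not provide a mechanism that works.
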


\begin{proof}
	We reduce the Independent Set Problem, in which we are given an undirected graph $G = (V,E)$ (edges are subsets of $V$ of size $2$) and an integer $k$ and have to determine whether there is a subset of vertices $S \subseteq V$ such that $\size{S} = k$ and there are no edges between any two elements of $S$. Let $n = \size{V}$.
	
	Let $G = (V, E)$ be an undirected graph, and $k \in \nats$. 
	We can assume that $V = \set{1, \ldots, n}$ for some $n \in \nats$.
	We set $E = \set{e_1, \ldots, e_m}$, i.e., we put an arbitrary order on edges in $E$.
	Our set of processes is $\Proc = \set{p_1, \ldots, p_k}$. 
	For each $1 \leq j \leq m$ we have a lock $t_j$. We write $T$ for the set $\set{t_j \mid 1 \leq j \leq m}$.
	Our set of locks is $T \cup \set{\ell_1, \ldots, \ell_k}$.
	For each $v \in V$ we write $E_v$ for the set of edges adjacent to $v$ and $T_v$ for $\set{t_j \mid e_j \in E_v}$. 
	Each process $p_i$ uses locks of $T \cup \set{\ell_i, \ell_{i+1}}$, with the convention $\ell_{k+1} = \ell_1$.
	
	Each process $p_i$ has $n$ transitions from its initial state, with operation $\nop$, which lead to states $s_1, \ldots, s_n$.
	From each $s_v$ a sequence of transitions (with no choice) acquires all locks $t_j \in T_v$ in increasing order of indices, then acquires $\ell_i$, then $\ell_{i+1}$, and then releases all those locks in reverse order (thus ensuring the nested property).
	We end up in a state $end_i$ with a local self-loop.
	This system is clearly "exclusive", as the only state with several outgoing transitions is the initial one, and none of them acquire any lock. 
	
	Suppose that this "LSS" has a run $\run$ leading to a circular deadlock. The structure of the "LSS" imposes that when executing $\run$ we eventually stay in the same configuration forever, with some processes blocked because they cannot acquire some lock and some looping indefinitely on their state $end_i$.
	
	Let $C$ be that configuration. If some $p_i$ is stuck after acquiring $\ell_i$, then it cannot have acquired $\ell_{i+1}$, as otherwise it could release all of its locks and loop in $end_i$. Hence some other process holds $\ell_{i+1}$, and it can only be $p_{i+1}$ (with $p_{k+1} = p_1$).
	By iterating this reasoning, we conclude that all processes $p_i$ are blocked while holding $\ell_i$, as they cannot acquire $\ell_{i+1}$.
	They must be holding disjoint sets of locks.
	By construction, each $p_i$ is holding $\ell_i$, plus the locks of some $T_v$, $v \in V$. Hence we have $k$ disjoint $T_v$, i.e., we have a set of $k$ vertices whose sets of adjacent edges are disjoint, i.e., an independent set of size $k$.
	
	Now suppose no $p_i$ is stuck after acquiring $\ell_i$. Then all $p_i$ that  have acquired $\ell_i$ have reached $end_i$, and released all their locks, thus all $\ell_i$ are free.
	There must be at least one process blocked when trying to acquire an element of some $T_v$. Let $j$ be the highest index in $\set{1, \ldots, m}$ such that there is a process $p_i$ blocked because it cannot acquire $t_j$. 
	Then there is a process $p_{i'}$ which is holding $t_j$, and is itself unable to acquire some $t_{j'}$ (as all locks $\ell_{r}$ are free). However, as all processes acquire elements of $T$ in increasing order of index, we must have $j' > j$, contradicting the maximality of $j$.
	Thus this case cannot happen, concluding the first part of our reduction.
	
	Conversely suppose we have an independent set of vertices $S = \set{v_1, \ldots, v_k} \subseteq V$ of size $k$. 
	Then we construct the run $\run$ in which, one by one, each $p_i$ first goes to $s_{v_i}$ and then acquires $\set{\ell_i}\cup T_{v_i}$.
	This is possible as they all acquire disjoint sets of locks.
	We end up in a configuration where each $p_r$ needs $k_{r+1}$ to advance, but cannot do so as $k_{r+1}$ is held by $p_{r+1}$.
	Hence $\run$ yields a "circular deadlock" (and even a "global deadlock", which shows that it is "process-fair"). 
	This ends our reduction, proving that the "circular deadlock" problem is \NP-hard even for "nested" "exclusive" "LSS".
	In the "LSS" above, we showed that if a run yields a "circular deadlock" then it yields a "global deadlock". Hence we can apply the reduction to the "process deadlock problem" by picking an arbitrary process $p_i$. There is a "process-fair" run with a finite projection on $p_i$ if and only if there is a solution to the initial Independent set problem. 
\end{proof}

\begin{remark}
	The Independent set problem is \NP-hard even on graphs of degree 3~\cite{GareyJS1976} (Theorem~2.6). As in the reduction above the number of locks used by each process is bounded by the degree of the input graph, we conclude that the lower bound still holds for systems where each process uses at most 5 locks.
\end{remark}

\section{Conclusion}

We have studied  the verification problem for "LSS" against boolean combinations of regular local objectives.
We established \PSPACE-completeness for the general problem, and presented two subcases where the verification problem becomes
\NP-complete: "2LSS" and "nested" "LSS", as well as a \PTIME~algorithm for the "process deadlock
problem" for "exclusive" "2LSS".
The \NP~and \PTIME~upper bounds use  as their main ingredient the characterisations of whether local runs can be scheduled into global ones through "patterns".
All lower bounds are robust, as they hold with bounds on the number of locks per process and very simple objectives.

Concerning future work, most of our results can easily be extended to the case
when processes are pushdown systems (except for the general case, which is
undecidable instead of \PSPACE-complete, see~\cite{KahIvaGup05}, Theorem~8). 
Another easy extension is to replace "nested" with bounded lock chains, a weaker condition defined in \cite{Kahlon09}.
These essentially do not require new ideas, thus we chose to not include them to avoid unnecessary details and highlight the key ingredients.
At the time of writing this paper, we are working towards implementing the algorithms described here (using a SAT solver for the \NP-hard problems), in which we plan to include those extensions.

About open problems, we do not know if "partial deadlocks" can be detected in \PTIME~for "exclusive" "LSS", or for "2LSS"  (not necessarily "exclusive"). Probabilistic algorithms have proven useful in distributed systems (see, for instance, the Lehmann-Rabin algorithm~\cite{LehmannR1981}), hence one may want to add probabilities to the model. Finally, versions of the problem with parameterized number of processes or locks could be of interest. 

\emph{I would like to thank Anca Muscholl and Igor Walukiewicz for their support and useful comments.}

\bibliographystyle{splncs04}
\bibliography{m.bib}

\newpage
\appendix
\appendixtrue

\end{document}